\documentclass[journal]{IEEEtran}
\ifCLASSINFOpdf
\else
\fi
\usepackage{mathrsfs}
\usepackage{amsmath}
\usepackage{amsfonts}
\usepackage{amsthm}
\usepackage{amssymb}
\usepackage{graphicx}
\usepackage{epstopdf}
\usepackage{indentfirst}
\usepackage{array}
\usepackage{cite}
\usepackage{enumerate}
\usepackage[linesnumbered,ruled,vlined]{algorithm2e}
\usepackage{multirow}
\usepackage{verbatim}
\usepackage{stfloats}
\hyphenation{op-tical net-works semi-conduc-tor}

\begin{document}
%
\title{Rate-Compatible Punctured Polar Codes: Optimal Construction Based on Polar Spectra}

\author{Kai Niu, \IEEEmembership{Member,~IEEE}, Jincheng Dai, \IEEEmembership{Student Member,~IEEE}, Kai Chen, \IEEEmembership{Student Member,~IEEE}, Jiaru Lin,  \IEEEmembership{Member,~IEEE}, Q. T. Zhang,  \IEEEmembership{Fellow,~IEEE}, Athanasios V. Vasilakos, \IEEEmembership{Senior Member,~IEEE}
\thanks
{
This work is supported by the National Natural Science Foundation of China (No. 61171099), National High-tech R\&D Program (863 Program) (No. 2015AA01A709) and Qualcomm Corporation. The material in this paper was presented in part at the IEEE International Conference on Communications, Budapest, Hungary, June 2013. \protect\\
\indent K. Niu, J. Dai, K. Chen and J. R. Lin are with the Key Laboratory of Universal Wireless Communications, Ministry of Education,
Beijing University of Posts and Telecommunications, Beijing, 100876, China (e-mail: \{niukai, daijincheng, kaichen, jrlin\}@bupt.edu.cn). \protect\\
\indent Q. T. Zhang was with the Department of Electronic Engineering, City University of Hong Kong (e-mail: qtzhang@ieee.org). \protect\\
\indent Athanasios V. Vasilakos is with Department of Computer Science, Electrical and Space Engineering, Lule\aa\ University of Technology, SE-931 87 Skellefte\aa, Sweden, (e-mail: th.vasilakos@gmail.com).
}}



\maketitle

\newtheorem{theorem}{Theorem}
\newtheorem{example}{Example}
\newtheorem{corollary}{Corollary}
\newtheorem{lemma}{Lemma}
\newtheorem{proposition}{Proposition}
\newtheorem{definition}{Definition}
\newtheorem{remark}{Remark}

\begin{abstract}
Polar codes are the first class of constructive channel codes achieving the symmetric capacity of the binary-input discrete memoryless channels. But the corresponding code length is limited to the power of two. In this paper, we establish a systematic framework to design the rate-compatible punctured polar (RCPP) codes with arbitrary code length. A new theoretic tool, called polar spectra, is proposed to count the number of paths on the code tree with the same number of zeros or ones respectively. Furthermore, a spectrum distance SD0 (SD1) and a joint spectrum distance (JSD) are presented as performance criteria to optimize the puncturing tables. For the capacity-zero puncturing mode (punctured bits are unknown to the decoder), we propose a quasi-uniform puncturing algorithm, analyze the number of equivalent puncturings and prove that this scheme can maximize SD1 and JSD. Similarly, for the capacity-one mode (punctured bits are known to the decoder), we also devise a reversal quasi-uniform puncturing scheme and prove that it has the maximum SD0 and JSD. Both schemes have a universal puncturing table without any exhausted search. These optimal RCPP codes outperform the performance of turbo codes in LTE wireless communication systems.
\end{abstract}

\begin{IEEEkeywords}
Polar codes, rate-compatible punctured polar (RCPP) codes, polar spectra, path weight enumerating function (PWEF), spectrum distance (SD).
\end{IEEEkeywords}

%
\IEEEpeerreviewmaketitle

\section{Introduction}
Rate-compatible coding schemes are desirable to provide different error protection requirements, or accommodate time-varying channel characteristics. Especially, we would like to design a pair of encoder and decoder which can adapt both different code length and different code rate without changing their basic structure in the hybrid automatic repeat-request (HARQ) protocols. In such cases, rate compatible punctured convolutional (RCPC) codes \cite{RCPC_Hagenauer} or rate compatible punctured turbo (RCPT) codes \cite{RCPT_Rowitch} are typical coding techniques, which are broadly applied in modern wireless communication systems, such as LTE (Long Term Evolution). Recently, as the first constructive capacity-achieving coding scheme, polar codes \cite{Polarcode_Arikan} reveal the advantages of error performance and many attractive application prospects. According to the original code construction \cite{Polarcode_Arikan}, polar codes are also able to support rate compatibility partially since the code rate can be precisely adjusted by adding or deleting one information bit. However, the code length $N$ still is limited to the power of two, i.e., $N=2^n$. Consequently, puncturing code bits and shortening the code length becomes the key technique of designing good rate-compatible punctured polar (RCPP) codes.

To the best of the authors' knowledge, the puncturing schemes of polar codes can be summarized as two categories. First, some code bits are punctured in the encoder and the decoder has no \emph{a priori} information about these bits which can be regarded as the ones transmitting over zero-capacity channels. In this paper, we call this category as the capacity-zero (C0) puncturing mode. Second, the values of the punctured code bits are predetermined and known by the encoder and decoder. Thus the associated channels can be regarded as one-capacity channels. We use the capacity-one (C1) puncturing mode to sketch the feature of this category.

For the puncturing schemes under the C0 mode, Eslami \emph{et al.} first proposed a stopping-tree puncturing to match arbitrary code length under the belief propagation (BP) decoding \cite{Practical_Eslami,Finite_Eslami}. Then, Shin \emph{et al.} proposed a reduced generator matrix method to efficiently improve the error performance of the RCPP codes under the successive cancellation (SC) decoding \cite{LCPC_Shin}, whereas searching the good polarizing matrices is still a time consuming process. In \cite{PuncPattern_Zhang}, a heuristic puncturing approach was proposed for the codes with short length. In \cite{RCPP_Niu}, an efficiently universal puncturing scheme, named quasi-uniform puncturing algorithm (QUP) was proposed and the corresponding RCPP codes can outperform the performance of turbo codes in 3G/4G wireless systems.

On the other hand, for the puncturing schemes under the C1 mode, Wang \emph{et al.} \cite{Novel_Punc} first introduced the concept of capacity-one puncturing and devised a simple puncturing method by finding columns with weight 1 to improve the error performance of SC decoding. Later, the author in \cite{Shorten_PC} exploited the structure of polar codes and proposed a reduced-complexity search algorithm to jointly optimize the puncturing patterns and the values of the punctured bits.

To sum up, for the mainstream SC/SC-like decoding, most of the current puncturing schemes under the C0 or C1 modes are heuristic methods and lack of a systematic framework to design the RCPP codes. Intuitively, the optimal punctured scheme under the SC decoding can be obtained by enumerating each punctured pattern and calculating the relative upper bound of block error rate (BLER). Obviously, this exhausted search is intractable due to the prohibitive complexity. Theoretically, like the optimization of RCPC or RCPT codes, RCPP codes can also be constructed by the optimization of the distance spectra (DS) or weight enumeration function (WEF) \cite{Book_Lin} for different punctured patterns. But due to the high complexity of DS/WEF calculation of polar codes \cite{Distance_spectrum,Weight_distribution}, it is also unrealistic to design RCPP codes based on these metrics. Hence, designing a feasible and computable measurement is crucial for the optimization of RCPP codes under the SC decoding.

In this paper, we establish a complete framework to design and optimize the RCPP codes under the SC/SC-like decoding. Based on this framework, we obtain the optimal puncturing schemes for both modes. The main contributions of this paper can be summarized as follows.

(1) First, we propose a new tool, called polar spectra (PS), to simplify the performance evaluation of RCPP codes under SC decoding. Conceptually, polar spectra are defined on the code tree and include two categories: PS1 and PS0, which represent the number of paths with the same Hamming weight or complemental Hamming weight (the number of zeros) respectively.

Based on PS, we introduce two kinds of path weight enumeration function (PWEF1 and PWEF0) to indicate the distribution of (complemental) path weight. Furthermore, three performance metrics, the spectrum distance for PWEF0 (SD0), the spectrum distance for PWEF1 (SD1), and joint spectrum distance (JSD) for the entire PS, are defined to optimize the distribution of (complemental) path weight under two puncturing modes (C0 and C1).

(2) Second, for the C0 mode, thanks to the easily analyzed property of PS, we prove that the quasi-uniform puncturing (QUP) algorithm proposed in \cite{RCPP_Niu} can maximize the metrics SD1 and JSD. Moreover, we analyze the structure feature of this puncturing and obtain the exact number of equivalent puncturing tables.

(3) Third, for the C1 mode, we propose a new reversal quasi-uniform puncturing (RQUP) and prove that this scheme can maximize the metrics SD0 and JSD.

The remainder of the paper is organized as follows. Section \ref{section_II} describes the preliminaries of polar codes, including polar coding, decoding algorithm, and upper bounds analysis of Bhattacharyya parameter. Section \ref{section_III} describes the puncturing modes of RCPP codes and sketches out the en-/decoding process. The concepts of polar spectra, PWEFs (PWEF0 and PWEF1), and spectrum distances (SD0, SD1, and JSD) are introduced in Section \ref{section_IV}. The QUP algorithm is presented and proved to be the optimal one under the C0 puncturing mode in Section \ref{section_V}. Similarly, the RQUP scheme is proposed and proved to maximize the SD0 and JSD under the C1 puncturing mode in Section \ref{section_VI}. Section \ref{section_VII} provides the numerical analysis for various puncturing schemes and simulation results for RCPP and turbo codes in LTE systems. Finally, Section \ref{section_VIII} concludes the paper.

\section{Preliminary of Polar Codes}
\label{section_II}
\subsection{Notation Conventions}
In this paper, calligraphy letters, such as $\mathcal{X}$ and $\mathcal{Y}$, are mainly used to denote sets, and the cardinality of $\mathcal{X}$ is defined as $\left|\mathcal{X}\right|$. The Cartesian product of $\mathcal{X}$ and $\mathcal{Y}$ is written as $\mathcal{X}\times \mathcal{Y}$ and $\mathcal{X}^n$ denotes the $n$-th Cartesian power of $\mathcal{X}$.

We write $v_1^N$ to denote an $N$-dimensional vector $\left(v_1,v_2,\cdots,v_N\right)$ and $v_i^j$ to denote a subvector $\left(v_i,v_{i+1},\cdots,v_{j-1},v_j\right)$ of $v_1^N$, $1\leq i,j \leq N$. Further, given an index set $\mathcal{A}\subseteq \mathcal{I}=\{1,2,\cdots,N\}$ and its complement set $\mathcal{A}^c$, we write $v_\mathcal{A}$ and $v_{\mathcal{A}^c}$ to denote two complementary subvectors of $v_1^N$, which consist of $v_i$s with $i\in\mathcal{A}$ or $i\in\mathcal{A}^c$ respectively. We use $\mathbb{E}(\cdot)$ to denote the expectation operation of a random variable.

Throughout this paper, $\log$ means ``logarithm to base 2", and $\ln$ stands for the natural logarithm.
\subsection{Encoding and Decoding of Polar Codes}
Given a B-DMC $W: \mathcal{X}\to \mathcal{Y}$ with input alphabet $\mathcal{X}= \{0,1\}$ and output alphabet $\mathcal{Y}$, the channel transition probabilities can be defined as $W(y|x)$, $x\in \mathcal{X}$ and $y\in \mathcal{Y}$ and the corresponding reliability metric, Bhattacharyya parameter, can be expressed as
\begin{equation}
 Z( W ) = \sum\limits_{y \in \mathcal{Y}} {\sqrt {W( y|0)W(y|1)}}=Z_0.
\end{equation}
Applying channel polarization transform for $N=2^n$ independent uses of B-DMC $W$, after channel combining and splitting operation \cite{Polarcode_Arikan}, we can obtain a group of polarized channels $W_N^{(i)}: \mathcal{X}\to \mathcal{Y} \times \mathcal{X}^{i-1}$, $i=1,2,\cdots,N$. The Bhattacharyya parameters of these channels satisfy the following recursion
\begin{equation}\label{equation2}
\left\{\begin{array}{l}
Z\left( {W_N^{\left( {2i - 1} \right)}} \right) \le 2Z\left( {W_{N/2}^{\left( i \right)}} \right) - Z{\left( {W_{N/2}^{\left( i \right)}} \right)^2}\\
Z\left( {W_N^{\left( {2i} \right)}} \right) = Z{\left( {W_{N/2}^{\left( i \right)}} \right)^2}.
\end{array}\right.\
\end{equation}
By using of the channel polarization, the polar coding can be described as follows.

Given the code length $N$, the information length $K$ and code rate $R=K/N$, the indices set of polarized channels can be divided into two subsets: one set $\mathcal{A}$ to carry information bits and the other complement set $\mathcal{A}^c$ to assign the fixed binary sequence, named frozen bits. So a message block of $K=|\mathcal{A}|$ bits is transmitted over the $K$ most reliable channels $W_N^{(i)}$ with indices $i\in \mathcal{A}$ and the others are used to transmit the frozen bits.  So a binary source block $u_1^N$ consisting of $K$ information bits and $N-K$ frozen bits can be encoded into a codeword $x_1^N$ by
\begin{equation}\label{equation3}
x_1^N=u_1^N{\bf{G}}_N,
\end{equation}
where the matrix ${{\bf{G}}_N}$ is the $N$-dimension generator matrix. This matrix can be recursively defined as ${{\bf{G}}_N} = {{\bf{B}}_N}{\bf{F}}_2^{ \otimes n}$, where ``$^{\otimes n}$" denotes the $n$-th Kronecker product, ${\bf{B}}_N$ is the bit-reversal permutation matrix, and ${{\bf{F}}_2} = \left[ { \begin{smallmatrix} 1 & 0 \\ 1 &  1 \end{smallmatrix} } \right]$ is the $2\times2$ kernel matrix.

In this paper, we mainly use the trellis or factor graph based on the coding relationship $x_1^N=u_1^N{\bf{G}}_N$ to describe the structure of polar or RCPP codes, where the source bits are arranged by the bit-reversal order and the code bits by the natural order. On the other hand, we also introduce the dual trellis to simplify the analysis of puncturing schemes.
\begin{definition}
The dual trellis or dual factor graph is defined as a trellis deduced from the constraint $u_1^N=x_1^N\mathbf{G}_N^{-1}=x_1^N {\mathbf{B}_N} {\mathbf{F}_2^{\otimes n}}=x_1^N\mathbf{G}_N$ (see \cite[Lemma1]{LP_Goela}). Compared with the original trellis, in this dual trellis, the source bits are assigned by the natural order and the code bits by the bit-reversal order.
\end{definition}

For the construction of polar codes, the calculation of channel reliabilities and selection of good channels are the critical steps. In this paper, for the convenience of theoretic analysis, we mainly use Bhattacharyya parameter to indicate the channel reliability.

As pointed in \cite{Polarcode_Arikan}, polar codes can be decoded by the SC decoding algorithm with a low complexity $O(N\log N)$. Furthermore, many improved SC decoding algorithms, such as, successive cancellation list (SCL) \cite{SCL_Tal,SCL_Add}, successive cancellation stack (SCS) \cite{SCS_Niu}, successive cancellation hybrid (SCH) \cite{SCH_Chen}, and CRC aided (CA)-SCL/SCS \cite{SCL_Tal,CASCL_Niu,ASCL_Li} decoding can be applied to improve the performance of polar codes.
\subsection{Upper Bounds of Bhattacharyya Parameters for Polar Codes}
The channel index $i$ can be expanded as
\begin{equation}\label{binary_expansion}
i = 1 + \sum\limits_{l = 1}^n {b_l{2^{n-l}}},
\end{equation}
where $\left(b_1,\cdots,b_l,\cdots,b_n\right)$ denote the $n$-bit binary expansion of $i-1$ and $l$ is the polarization level.

Let $A_0=a_0=\log_2(Z_0)$ and $A_l=\log_2(Z_l^{u})$, where $Z_l^{u}$ denotes the upper bound of the Bhattacharyya parameter at level $l$. According to the analytical idea of asymptotic convergence of Bhattacharyya parameter in \cite{polar_rate}, the iteration of the upper bound in the logarithmic domain can be expressed as
\begin{equation}\label{log_BP_bound}
\left\{ \begin{aligned}
A_l &= A_{l-1}+1,  & {\text{   if    }}  b_{l}=0,\\
A_l &= 2A_{l-1},    &{\text{    if    }} b_{l} = 1.
\end{aligned}\right.
\end{equation}

\section{RCPP Codes}
\label{section_III}
In this section, we introduce the definition of RCPP codes and describe the corresponding coding and decoding process. Then, we review the puncturing modes of RCPP codes, such as the C0 and C1 modes. In the end, we analyze the upper bound of Bhattacharyya parameter for both modes.
\subsection{Definition of RCPP Codes}
RCPP codes are a kind of rate- and length- compatible polar codes.
The entire encoding process can be described by two steps. In the first step, an original $K$-bit information block is coded by the coding constraint (\ref{equation3}), that is, a binary source block $u_1^N$ consisting of the information subvector $u_\mathcal{A}$ and the frozen subvector $u_{\mathcal{A}^c}$ is encoded into a code block $x_1^N$.

Then in the second step, in order to adapt the rate variation, the length of $N$-bit code block is shortened according to the puncturing table. Here, the puncturing table $\mathscr{T}_N$ is defined as
\begin{equation}
\label{equ1}
	\mathscr{T}_N=\left(t_1,t_2,\cdots,t_N \right)
\end{equation}
with $t_i\in \{0,1\}$, $i\in \mathcal{I}$, where $t_i=0$ means that the code bit $x_i$ in the corresponding index is not to be transmitted, and vise versa.

Let $\mathcal{B}=\left\{i|t_i=0\right\}$ and $\mathcal{B}^c=\left\{i|t_i=1\right\}$ denote the puncturing set and the complement set, whereby the corresponding cardinalities are defined as $\left|\mathcal{B}\right|=Q=N-M$ and $\left|\mathcal{B}^c\right|=M$ respectively. Since the code length of the original polar codes is limited to a power of $2$, without loss of generality, it suffices to concern only on the case that $2^{n-1}< M\leq2^{n}$. So after puncturing $Q$ code bits, an $M$ length RCPP codeword $x_{\mathcal{B}^c}$ can be obtained from the table $\mathscr{T}_N$. Accordingly, the code rate of RCPP coding scheme can be defined as $R=K/M$.

The encoding process of RCPP code can also be equally described based on the puncturing of source bits. Firstly, we introduce a puncturing set of source bits $\mathcal{D}$, which satisfies $\left|\mathcal{D}\right|=\left|\mathcal{B}\right|=Q$. After deleting $Q$ bits, a minished source vector $u_{\mathcal{D}^c}$ is obtained. Then the codeword $x_{\mathcal{B}^c}$ with the code length $M$ can be written by
\begin{equation}
x_{\mathcal{B}^c}=u_{\mathcal{D}^c}\mathbf{G}_M,
\end{equation}
where the dimension-reduction generator matrix $\mathbf{G}_M$ is obtained by eliminating the columns corresponding to the set $\mathcal{B}$ and rows corresponding to the set $\mathcal{D}$ from the matrix $\mathbf{G}_N$.

The construction of RCPP codes is similar to that of polar codes. Let $\mathbb{W}$ denote the punctured channel. For RCPP codes, the transmission channel $\widetilde{W}$ can be regarded as a compound of the original B-DMC and the punctured channel, i.e., $\widetilde{W}=\left\{W,\mathbb{W}\right\}$. Similarly, under the puncturing operation, after channel splitting and combing, we can also obtain a group of polarized channels $\left\{\widetilde{W}_N^{(i)}\right\}$ and the corresponding Bhattacharyya parameters $Z \left (\widetilde{W}_N^{(i)}\right)$.

The decoder of RCPP codes has the same decoding structure as that of polar codes. However, the initialization of bit LLRs for those punctured bits is different for the C0 or C1 modes, which will be further explained in the next subsection.

\subsection{Puncturing Modes of RCPP Codes}
By now, we have two puncturing modes for RCPP codes: C0 mode and C1 mode. For the former, the code bits in the puncturing set $\mathcal{B}$ are deleted in the encoder and their values are unknown in the the decoder. Thus, the transition probabilities of punctured channel $\mathbb{W}$ are $\mathbb{W}\left( {y_i}\left| 0 \right. \right)=\mathbb{W}\left( {y_i}\left| 1 \right. \right)=\frac{1}{2}$. The corresponding channel capacity is zero, that is, $I(\mathbb{W})=0$. Given a punctured bit $x_i$ ($i\in \mathcal{B}$), for the C0 mode, the corresponding LLR in the decoder is $\mathbb{L}(y_i)=\ln \frac{\mathbb{W}\left( {y_i}\left| 0 \right. \right)}{\mathbb{W}\left( {y_i}\left| 1 \right. \right)} = 0$.

On the contrary, for the latter, the punctured bits are set as frozen bits \cite{Novel_Punc} in the encoder and their values are fixed and known in the decoder. Suppose the fixed value is zero, that is, $y_i=0$, the transition probabilities are $\mathbb{W}\left( 0\left| 0 \right. \right)=1$, $\mathbb{W}\left( 0\left| 1 \right. \right)=0$. So the channel capacity is one, that is, $I(\mathbb{W})=1$ and the corresponding LLR is $\mathbb{L}(y_i)= +\infty$.

For two channel polarization, the factor graphs of polar codes under the C0 and C1 modes are shown in Fig. \ref{fig_two_channel}.
\begin{figure}[h]
  \centering
  \includegraphics[width=1\columnwidth]{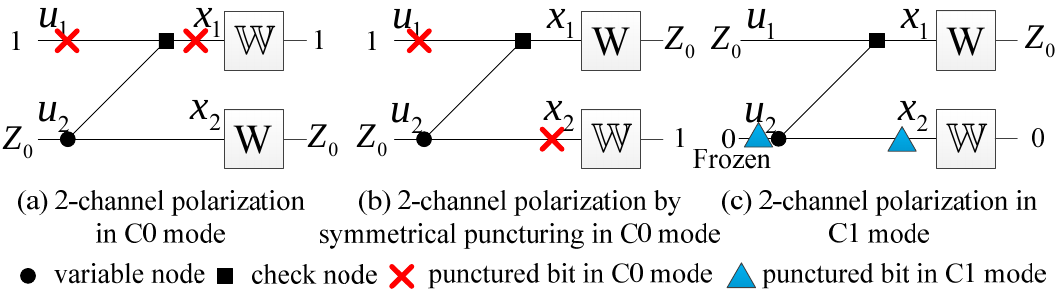}
  \caption[c]{Factor graph representation for two channel polarization under two puncturing modes}
  \label{fig_two_channel}
\end{figure}

In this figure and throughout the paper, black circle and black squares denote the variable and check nodes of the factor graph, meanwhile, red cross and blue triangle stand for the punctured bits under the C0 or C1 modes respectively. In Fig. \ref{fig_two_channel}(a) (\ref{fig_two_channel}(b)), the code bit $x_1$ ($x_2$) is punctured and the corresponding Bhattacharyya parameter is $Z(\mathbb{W})=1$. The relative puncturing tables are $\mathscr{T}_2=(0,1)$ and $\mathscr{T}_2=(1,0)$ respectively. On the other hand, in Fig. \ref{fig_two_channel}(c), the code bit $x_2$ are punctured and the corresponding Bhattacharyya parameter is $Z(\mathbb{W})=0$, meanwhile, the source bit $u_2$ is a frozen bit. In this scheme, the puncturing table is $\mathscr{T}_2=(1,0)$.

\begin{lemma}\label{lemma2}
For two channel polarization under the C0 mode, both puncturing tables $\mathscr{T}_2=(0,1)$ or $\mathscr{T}_2=(1,0)$ can generate the same polarization results.
\end{lemma}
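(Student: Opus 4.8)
The plan is to compute the two split channels explicitly for each puncturing table and verify that the resulting pairs coincide. Since $N=2$, the bit-reversal matrix $\mathbf{B}_2$ is the identity, so $\mathbf{G}_2=\mathbf{F}_2$ and the encoding reads $x_1=u_1\oplus u_2$, $x_2=u_2$. Under the C0 mode a punctured position is seen by the decoder through a channel $\mathbb{W}$ with $\mathbb{W}(y\,|\,0)=\mathbb{W}(y\,|\,1)$ (capacity zero), while a transmitted position is seen through the original B-DMC $W$. I would therefore treat both tables as a polar transform of two (in general distinct) underlying channels and write down the transition probabilities of $\widetilde W_2^{(1)}$ and $\widetilde W_2^{(2)}$ directly from Ar\i kan's channel combining/splitting definitions, i.e. $\widetilde W_2^{(1)}(y_1,y_2\,|\,u_1)=\sum_{u_2}\tfrac12\,W^{(1)}(y_1\,|\,u_1\oplus u_2)\,W^{(2)}(y_2\,|\,u_2)$ and $\widetilde W_2^{(2)}(y_1,y_2,u_1\,|\,u_2)=\tfrac12\,W^{(1)}(y_1\,|\,u_1\oplus u_2)\,W^{(2)}(y_2\,|\,u_2)$, where $(W^{(1)},W^{(2)})$ is $(\mathbb{W},W)$ or $(W,\mathbb{W})$ according to the table.

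For $\mathscr{T}_2=(0,1)$ the bit $x_1$ is punctured, so $W^{(1)}=\mathbb{W}$, $W^{(2)}=W$. Substituting $\mathbb{W}(y_1\,|\,u_1\oplus u_2)=\mathbb{W}(y_1)$ (input-independent) shows that $\widetilde W_2^{(1)}(y_1,y_2\,|\,u_1)=\mathbb{W}(y_1)\sum_{u_2}\tfrac12 W(y_2\,|\,u_2)$ does not depend on $u_1$, hence $\widetilde W_2^{(1)}$ has capacity zero and $Z=1$; and $\widetilde W_2^{(2)}(y_1,y_2,u_1\,|\,u_2)=\tfrac12\,\mathbb{W}(y_1)\,W(y_2\,|\,u_2)$ is, after discarding the input-independent factor $\mathbb{W}(y_1)$ and the independent uniform label $u_1$, equivalent to $W$, so $Z=Z_0$. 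For $\mathscr{T}_2=(1,0)$ the bit $x_2$ is punctured, so $W^{(1)}=W$, $W^{(2)}=\mathbb{W}$. Now $\widetilde W_2^{(1)}(y_1,y_2\,|\,u_1)=\tfrac12\,\mathbb{W}(y_2)\bigl(W(y_1\,|\,0)+W(y_1\,|\,1)\bigr)$ is again independent of $u_1$, so $\widetilde W_2^{(1)}$ has capacity zero and $Z=1$; and in $\widetilde W_2^{(2)}(y_1,y_2,u_1\,|\,u_2)=\tfrac12\,W(y_1\,|\,u_1\oplus u_2)\,\mathbb{W}(y_2)$ the label $u_1$ is known and $u_2\mapsto u_1\oplus u_2$ is a bijection, so this channel is equivalent to $W$ and $Z=Z_0$. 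Thus in both cases the pair $\{\widetilde W_2^{(1)},\widetilde W_2^{(2)}\}$ consists of a zero-capacity channel followed by a copy of $W$, so the Bhattacharyya parameters delivered by the recursion (\ref{equation2}) are identical position by position, which is the asserted ``same polarization results''.

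The only step that deserves to be stated rather than glossed over is the equivalence claim for $\widetilde W_2^{(2)}$ in the table $(1,0)$ case: one must note that conditioning on the already-decoded, uniformly distributed bit $u_1$ turns an observation of $u_1\oplus u_2$ through $W$ into an observation of $u_2$ through a possibly input-relabeled copy of $W$, which changes neither the capacity nor the Bhattacharyya parameter; symmetry of $W$ is not required. Everything else reduces to a one-line substitution using $\mathbb{W}(y\,|\,0)=\mathbb{W}(y\,|\,1)$. An alternative and slightly slicker route I might remark on is to invoke the automorphism of the $\mathbf{F}_2$ code that simultaneously swaps $(x_1,x_2)$ and $(u_1,u_2)$; but since this swap also reverses the successive-cancellation decoding order, it does not immediately line up the two split-channel pairs, so the direct computation above is the cleaner argument.
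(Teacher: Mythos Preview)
Your proof is correct and follows the same overall strategy as the paper's: analyze the two split channels $\widetilde{W}_2^{(1)}$ and $\widetilde{W}_2^{(2)}$ for each table and verify that both yield the pair $(Z=1,\,Z=Z_0)$. The differences are in execution. For $\widetilde{W}_2^{(2)}$ under $\mathscr{T}_2=(0,1)$, the paper works in the LLR domain, noting that $\mathbb{L}(y_1)=0$ makes the corresponding LLR density a Dirac delta, so the variable-node convolution returns the original density of $W$ and hence $Z=Z_0$; you instead argue channel equivalence directly from the transition probabilities by discarding the input-independent factor $\mathbb{W}(y_1)$, which is more elementary and avoids density-evolution machinery. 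For $\mathscr{T}_2=(1,0)$, the paper simply appeals to ``the symmetric property of swapping two channels $(\mathbb{W},W)$'' without further detail, whereas you carry out the computation explicitly and even flag (in your closing remark) that the naive swap also reverses the SC decoding order, so the symmetry is not entirely automatic. Your treatment is therefore slightly more explicit and self-contained, but the underlying idea is the same.
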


\begin{lemma}\label{lemma3}
For two channel polarization under the C1 mode, the puncturing table satisfies $\mathscr{T}_2=(1,0)$ and the source bit $u_2$ should be assigned a fixed value. In addition, the reliabilities of polarized channels satisfies $Z\left(\widetilde{W}_2^{(1)}\right)=Z(W)$ and $Z\left(\widetilde{W}_2^{(2)}\right)=0$ respectively.
\end{lemma}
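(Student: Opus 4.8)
The plan is to analyze the single-stage polarization transform directly on the two-channel factor graph (Fig.~\ref{fig_two_channel}(c)), using the encoding relation $(x_1,x_2) = (u_1,u_2)\mathbf{F}_2$, i.e., $x_1 = u_1 \oplus u_2$ and $x_2 = u_2$. First I would argue that the puncturing table must be $\mathscr{T}_2=(1,0)$ rather than $(0,1)$: under the C1 mode the punctured code bit is assigned a fixed (known) value, hence the associated channel $\mathbb{W}$ has capacity $I(\mathbb{W})=1$ and Bhattacharyya parameter $Z(\mathbb{W})=0$. For the polarization transform to yield a genuine capacity gain in the ``good'' branch, the fixed code bit must be the one that feeds the second split channel directly; since $x_2=u_2$, puncturing $x_2$ is equivalent to fixing $u_2$, so $u_2$ must be declared a frozen bit. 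Conversely, puncturing $x_1$ would not directly fix $u_2$ and would not produce the clean $Z=0$ branch, so $\mathscr{T}_2=(1,0)$ is the only admissible choice — this is the part of the claim that requires a small argument about which assignment is consistent with the C1 definition from Section~\ref{section_III}.

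Next I would compute the two polarized Bhattacharyya parameters. For $\widetilde{W}_2^{(2)}$: the decoder knows $u_2$ (it is frozen and, equivalently, $x_2$ is known), so the channel seen by $u_2$ is noiseless; hence $Z\left(\widetilde{W}_2^{(2)}\right)=0$. This also follows formally from the recursion~(\ref{equation2}): the $Z\left(W_N^{(2i)}\right)=Z\left(W_{N/2}^{(i)}\right)^2$ branch, specialized to the pair $\{W,\mathbb{W}\}$ with $Z(\mathbb{W})=0$, gives $Z\left(\widetilde{W}_2^{(2)}\right)=Z(W)\cdot Z(\mathbb{W})=0$ (the check-node/``$\boxplus$'' combination of a perfect channel with any channel is perfect). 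For $\widetilde{W}_2^{(1)}$: the bit $u_1$ is decoded from $x_1=u_1\oplus u_2$ with $u_2$ known, which is simply a relabeling of the original channel $W$ acting on $u_1$; equivalently, the variable-node combination of $W$ with the perfect channel $\mathbb{W}$ leaves $W$ unchanged. Hence $Z\left(\widetilde{W}_2^{(1)}\right)=Z(W)$.

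I would present the $Z\left(\widetilde{W}_2^{(1)}\right)=Z(W)$ equality by a direct transition-probability computation: writing $\widetilde{W}_2^{(1)}(y_1,y_2|u_1)=\tfrac12\sum_{u_2}W(y_1|u_1\oplus u_2)\mathbb{W}(y_2|u_2)$ and substituting $\mathbb{W}(0|0)=1,\ \mathbb{W}(0|1)=0$ (fixed value zero, $y_2=0$) collapses the sum to a single term $\tfrac12 W(y_1|u_1)$, and plugging this into the definition $Z=\sum_{y}\sqrt{W(y|0)W(y|1)}$ reproduces $Z(W)$ exactly. The main obstacle — really the only subtle point — is justifying the first assertion cleanly: one must appeal to the C1 mode's definition to pin down that ``punctured code bit with known value'' forces $\mathscr{T}_2=(1,0)$ and the freezing of $u_2$, rather than treating this as obvious; everything after that is a two-line computation. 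I would also note in passing that, unlike Lemma~\ref{lemma2} for the C0 mode, here the table $(0,1)$ is \emph{not} equivalent, which is exactly why the lemma singles out $(1,0)$.
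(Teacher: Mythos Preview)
Your approach is essentially the paper's: argue that $x_2=u_2$ forces $\mathscr{T}_2=(1,0)$ with $u_2$ frozen, then compute the two Bhattacharyya parameters directly on the single-stage butterfly. The paper does the two computations in the opposite order of tools: it proves $Z(\widetilde W_2^{(1)})=Z(W)$ via the check-node $\tanh$ rule (since $\mathbb L(y_2)=+\infty$ gives $\tanh(\mathbb L(y_2)/2)=1$, the combined LLR equals $\mathbb L(y_1)$), and it proves $Z(\widetilde W_2^{(2)})=0$ by writing out the transition probabilities and observing one of them vanishes.

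Two small slips to fix. First, your check/variable labels are swapped: $\widetilde W_2^{(1)}$ is the \emph{check}-node combination (so a perfect side channel leaves it unchanged), while $\widetilde W_2^{(2)}$ is the \emph{variable}-node combination (so a perfect side channel makes it perfect); your formulas are right but the names are reversed. Second, your direct computation for $\widetilde W_2^{(1)}$ does not ``reproduce $Z(W)$ exactly'': with the paper's convention $\mathbb W(0|0)=1,\ \mathbb W(0|1)=0$ and $y_2\equiv 0$, you get $\widetilde W_2^{(1)}(y_1,0|u_1)=\tfrac12 W(y_1|u_1)$, which yields $Z(\widetilde W_2^{(1)})=\tfrac12 Z(W)$, because $\mathbb W$ so defined is not a normalized channel for input $1$. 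Either renormalize (then it is literally $W$), or model $\mathbb W$ as the deterministic identity channel $\mathbb W(y|x)=\mathbb 1[y=x]$ so that the sum over $y_2\in\{0,1\}$ restores the factor; the paper's LLR argument sidesteps this normalization issue entirely.
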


These two lemmas will be proved in the Appendix. Furthermore, they can be recursively applied in the process of $N$ channels polarization.

\begin{lemma}\label{C1_mode_BC}
 Given a RCPP code constructed under the C1 mode, the Bhattacharyya parameters of the polarized channels are smaller than those of the original polarized channels, that is, $Z\left(\widetilde {W}_N^{(i)}\right)<Z\left({W}_N^{(i)}\right)$.
\end{lemma}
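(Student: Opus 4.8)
The plan is to argue by induction on the number of polarization levels $n$ (equivalently on $N=2^n$), using Lemma~\ref{lemma3} as the base case and the single-step recursion~(\ref{equation2}) together with the recursive structure of the C1 construction as the inductive step. The key observation is that under the C1 mode a punctured code bit corresponds, via Lemma~\ref{lemma3} applied recursively, to an \emph{improved} (capacity-one) synthetic sub-channel rather than a degraded one, so at every node of the code tree the relevant Bhattacharyya parameter can only decrease relative to the unpunctured case.

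First I would set up the recursion carefully. For $N=2^n$, write the transmission channel as $\widetilde{W}=\{W,\mathbb{W}\}$ where on the punctured positions $\mathbb{W}$ has $Z(\mathbb{W})=0$ (capacity one), and on the unpunctured positions the channel is the original $W$ with $Z(W)=Z_0$. The base case $n=1$ is exactly Lemma~\ref{lemma3}: either no bit is punctured, in which case $Z(\widetilde{W}_2^{(i)})=Z(W_2^{(i)})$ with equality (and the claim $Z(\widetilde{W}_N^{(i)})<Z(W_N^{(i)})$ should be read in the nontrivial case where at least one punctured bit actually influences synthetic channel $i$), or the table is $\mathscr{T}_2=(1,0)$ and Lemma~\ref{lemma3} gives $Z(\widetilde{W}_2^{(1)})=Z(W)$ and $Z(\widetilde{W}_2^{(2)})=0<Z(W)^2=Z(W_2^{(2)})$. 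For the inductive step, split the $N$ uses of $\widetilde{W}$ into two blocks of $N/2$ uses each; each block is itself a compound channel of the same type with some subset of its positions punctured at capacity one. By the induction hypothesis, for every $i$ the two level-$(n-1)$ synthetic channels satisfy $Z(\widetilde{W}_{N/2}^{(i)})\le Z(W_{N/2}^{(i)})$, with strict inequality whenever the $i$-th synthetic channel of that half-block sees at least one punctured coordinate. Now apply~(\ref{equation2}):
\begin{equation*}
Z(\widetilde{W}_N^{(2i-1)})\le 2Z(\widetilde{W}_{N/2}^{(i)})-Z(\widetilde{W}_{N/2}^{(i)})^2,\qquad
Z(\widetilde{W}_N^{(2i)})=Z(\widetilde{W}_{N/2}^{(i)})^2 .
\end{equation*}
Since the map $z\mapsto 2z-z^2$ is strictly increasing on $[0,1]$ and $z\mapsto z^2$ is strictly increasing on $[0,1]$, monotonicity propagates: $Z(\widetilde{W}_N^{(2i-1)})\le 2Z(W_{N/2}^{(i)})-Z(W_{N/2}^{(i)})^2$ and the right-hand side is exactly the upper bound appearing in~(\ref{equation2}) for $Z(W_N^{(2i-1)})$ — here one has to be slightly careful because~(\ref{equation2}) gives only an upper bound for the odd-indexed channel, so the cleanest route is to carry the \emph{upper bounds} $Z_l^u$ through the logarithmic recursion~(\ref{log_BP_bound}) and show the punctured version's upper-bound sequence is pointwise dominated by the unpunctured one, with strict domination at any level that has absorbed a $b_l$-branch whose sub-channel was punctured. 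The even-indexed case is immediate from strict monotonicity of squaring.

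The main obstacle, I expect, is handling the odd-indexed synthetic channel cleanly, since the recursion~(\ref{equation2}) is an \emph{inequality} there, not an equality, so a naive "monotone functions preserve inequalities" argument does not literally close without comparing the same side of each bound. I would resolve this by phrasing the entire induction in terms of the upper bounds $A_l=\log_2 Z_l^u$ from~(\ref{log_BP_bound}): define $\widetilde{A}_l$ analogously for the punctured construction, observe that punctured coordinates enter with initial value $\widetilde{A}_0=-\infty$ (i.e.\ $Z=0$) instead of $A_0=a_0$, and then note that both branch maps in~(\ref{log_BP_bound}), namely $a\mapsto a+1$ and $a\mapsto 2a$, are strictly increasing; hence along every root-to-leaf path of the code tree the punctured sequence stays $\le$ the unpunctured one and becomes strictly smaller as soon as it has merged in a subtree containing a punctured leaf. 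A secondary, more cosmetic point is the degenerate case where synthetic channel $i$ depends on \emph{no} punctured coordinate — there equality holds — so the statement should implicitly be understood for those $i$ whose synthetic channel is actually affected by the puncturing (which, for any $\mathcal{B}\neq\varnothing$, is a nonempty set); I would add a remark to that effect, or restate the conclusion as the non-strict $Z(\widetilde{W}_N^{(i)})\le Z(W_N^{(i)})$ with strictness for affected indices.
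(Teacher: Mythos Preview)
Your approach is genuinely different from the paper's: the paper does not induct on $n$ at all but writes out $Z(\widetilde{W}_N^{(i)})$ directly from the definition, factors the product channel into the punctured block $\mathcal{B}$ and the unpunctured block $\mathcal{B}^c$, observes that under the C1 mode the $\mathcal{B}$-part collapses to $1$, and then enlarges the expression by reinserting the missing $\mathcal{B}$-sums (which contribute a factor strictly larger than $1$) to recover $Z(W_N^{(i)})$. It is a one-shot algebraic comparison of the two Bhattacharyya sums.

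Your inductive route, as written, has a real gap that you yourself flag but do not close. For the odd-indexed channel, recursion~(\ref{equation2}) gives only $Z(W_N^{(2i-1)})\le 2Z(W_{N/2}^{(i)})-Z(W_{N/2}^{(i)})^2$; so from $Z(\widetilde{W}_{N/2}^{(i)})\le Z(W_{N/2}^{(i)})$ you obtain $Z(\widetilde{W}_N^{(2i-1)})\le 2Z(W_{N/2}^{(i)})-Z(W_{N/2}^{(i)})^2$, which is an upper bound on $Z(W_N^{(2i-1)})$, not on $Z(W_N^{(2i-1)})$ itself. Your proposed fix --- carry the upper bounds $A_l$ from~(\ref{log_BP_bound}) through the recursion --- proves only that the \emph{upper-bound} sequences are ordered, i.e.\ $\widetilde{Z}_l^u\le Z_l^u$, which is a strictly weaker statement than the lemma. (There is also a secondary issue: after puncturing, the two half-blocks need not carry the same puncturing pattern, so the single-input recursion~(\ref{equation2}) does not literally apply; you would need the two-channel version $Z^-\le Z_1+Z_2-Z_1Z_2$, $Z^+=Z_1Z_2$, but this does not rescue the odd branch either.)

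If you want to keep an inductive proof, the clean tool is stochastic degradation rather than the $Z$-recursion: a capacity-one channel upgrades $W$, degradation is preserved by both the ``$-$'' and ``$+$'' polar transforms, and degradation implies the Bhattacharyya ordering. That closes the odd-index gap without ever touching the inequality in~(\ref{equation2}). As it stands, though, your proposal does not establish the lemma.
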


This lemma reveals that the puncturing under the C1 mode will improve the reliability of each polarized channel and will be proved in the Appendix.

\subsection{Upper bounds of Bhattacharyya Parameters for RCPP Codes}
Let $\widetilde{Z}_l^u$ denote the upper bounds of the Bhattacharyya parameters under puncturing and $\mathscr{A}_n=\log_2\left(\widetilde{Z}_l^{u}\right)$, on the corresponding trellis, the upper bounds of Bhattacharyya parameters can be iteratively evaluated by considering the reliability difference of the polarized channels.

In the first case, a pair of independent polarized channels ${\widetilde{W}}_{N/2}^{(i)}$ with the same reliability are considered. By using the same binary expansion in (\ref{binary_expansion}),  we can write the iteration of these bounds in the logarithmic domain as
\begin{equation}\label{Bhattaharyya_bound_log}
\left\{ \begin{aligned}
\mathscr{A}_l &= \mathscr{A}_{l-1}+1,  & {\text{   if    }}  b_{l}=0,\\
\mathscr{A}_l &= 2\mathscr{A}_{l-1},    &{\text{    if    }} b_{l} = 1.
\end{aligned}\right.
\end{equation}

In the second case, we consider a pair of channels have different reliabilities, meanwhile, one is a punctured channel and the other is a polarized channel.

For the C0 mode, from Lemma \ref{lemma2}, the upper bounds at level $l$ can be iteratively calculated in the logarithmic domain as
\begin{equation}\label{bound_log_C0}
\left\{ \begin{aligned}
\mathscr{A}_l &= 0,                             & {\text{   if    }}  b_{l}=0,\\
\mathscr{A}_l &= \mathscr{A}_{l-1},     &{\text{    if    }} b_{l} = 1.
\end{aligned}\right.
\end{equation}

Accordingly, for the C1 mode, from Lemma \ref{lemma3}, the upper bounds can also be written by
\begin{equation}\label{bound_log_C1}
\left\{ \begin{aligned}
\mathscr{A}_l &= \mathscr{A}_{l-1},   & {\text{   if    }}  b_{l}=0,\\
\mathscr{A}_l &=-\infty,                        &{\text{    if    }} b_{l} = 1.
\end{aligned}\right.
\end{equation}

Now, we consider the third case, that is, a pair of channels have different reliabilities and the Bhattacharyya parameters of both channels are not equal to $0$ or $1$. Obviously, the channels in this case are obtained from channel polarization in the first and second cases. Let ${\mathscr{A}_l}'$ and ${\mathscr{A}_l}''$ denote the upper bounds of these two channels in a log-scale respectively.

\begin{lemma}\label{C0_mode_UB_set}
For the C0 mode, the upper bound of the Bhattacharyya parameter at level $l-1$ should be set to the maximum value, that is, $\mathscr{A}_{l-1}=\max\left\{{\mathscr{A}_{l-1}}',{\mathscr{A}_{l-1}}''\right\}$. Furthermore, the upper bounds at level $l$ can be iteratively calculated by (\ref{Bhattaharyya_bound_log}).
\end{lemma}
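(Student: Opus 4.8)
The plan is to reduce the third case to the already-established first and second cases by tracking how a polarized channel of ``mixed'' type is actually built up. Recall that in the C0 mode a punctured channel contributes Bhattacharyya parameter $Z(\mathbb{W})=1$, i.e. log-domain value $\mathscr{A}=0$, and that Lemma \ref{lemma2} tells us the two puncturing tables $(0,1)$ and $(1,0)$ for a two-channel stage are equivalent, so a single polarization step combining a punctured channel with a genuine polarized channel of log-value $\mathscr{A}_{l-1}$ produces one child of log-value $0$ (the ``$b_l=0$'' branch) and one child of log-value $\mathscr{A}_{l-1}$ (the ``$b_l=1$'' branch), exactly as in (\ref{bound_log_C0}). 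The point of the third case is that after a few such steps we reach a stage whose two input channels both have log-values strictly between $-\infty$ and $0$, but those two values ${\mathscr{A}_{l-1}}'$ and ${\mathscr{A}_{l-1}}''$ need not coincide, so the clean recursion (\ref{Bhattaharyya_bound_log}), which was derived under the hypothesis of \emph{equal} reliabilities, does not literally apply.

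First I would make precise the combining rule for unequal reliabilities. For a single Ar\i kan step combining two independent channels with Bhattacharyya parameters $Z'$ and $Z''$, the standard bounds (the generalization of (\ref{equation2}) to distinct channels, as in \cite{Polarcode_Arikan,polar_rate}) give $Z(W^{(2i)}) = Z' Z''$ for the ``good'' child and $Z(W^{(2i-1)}) \le Z' + Z'' - Z'Z'' \le 1-(1-Z')(1-Z'')$ for the ``bad'' child. In the log domain, writing $\mathscr{A}' = \log Z'$, $\mathscr{A}'' = \log Z''$, the good child has log-value $\mathscr{A}' + \mathscr{A}''$ and the bad child has a log-value upper-bounded by $\log\!\bigl(1-(1-2^{\mathscr{A}'})(1-2^{\mathscr{A}''})\bigr)$. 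The key monotonicity observation is that both of these expressions are nondecreasing in each of $\mathscr{A}'$ and $\mathscr{A}''$; hence if we replace the smaller of the two, say ${\mathscr{A}_{l-1}}''$, by the larger one ${\mathscr{A}_{l-1}}' = \max\{{\mathscr{A}_{l-1}}',{\mathscr{A}_{l-1}}''\}$, we only \emph{increase} (weaken) both child bounds. Substituting the common value $\mathscr{A}_{l-1} := \max\{{\mathscr{A}_{l-1}}',{\mathscr{A}_{l-1}}''\}$ into the equal-reliability formulas, the good child becomes $2\mathscr{A}_{l-1}$ and the bad child becomes $\mathscr{A}_{l-1}+1$ — since in the log domain $\log\!\bigl(2Z-Z^2\bigr)\le \log(2Z)=\mathscr{A}_{l-1}+1$ — which is exactly (\ref{Bhattaharyya_bound_log}). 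So the step is: (i) observe the unequal-reliability bounds, (ii) upper-bound each by its equal-reliability counterpart at the max value via monotonicity, (iii) identify the result with (\ref{Bhattaharyya_bound_log}).

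I would then close the argument by induction on the polarization level $l$: the base of the induction is precisely the first and second cases (equal reliabilities, or one punctured channel), for which the bounds are already known, and the inductive step is the single-step monotonicity argument above, noting that once we have replaced a pair of unequal log-values by their common maximum the recursion produces, at the next level, again a pair of equal values, so no further ``max'' surgery is needed until two distinct mixed branches are recombined, at which point the same argument applies. The main obstacle I anticipate is purely in the bookkeeping of case (ii): one must check that taking the maximum is genuinely conservative for the \emph{bad}-child bound and not merely for the good-child product — that is, that $1-(1-a)(1-b)\le 2c - c^2$ whenever $a,b\le c$ in $[0,1]$ — and that this slack is exactly what lets us absorb the two branches of (\ref{bound_log_C0}) into the single recursion (\ref{Bhattaharyya_bound_log}); everything else is a routine induction on $n$ using Lemma \ref{lemma2}.
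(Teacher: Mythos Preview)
Your argument is correct and, in fact, considerably more rigorous than what the paper offers. The paper's own proof of this lemma is essentially a one-sentence heuristic: it observes that one of the two input channels comes from pure B-DMC polarization and the other from a mixture with punctured channels, and then simply asserts that selecting $\max\{\mathscr{A}_{l-1}',\mathscr{A}_{l-1}''\}$ ``can indicate the worse reliability of RCPP codes under the C0 mode.'' No explicit verification is given that this substitution preserves the upper-bound property through the recursion~(\ref{Bhattaharyya_bound_log}).

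What you do differently is supply that verification: you invoke the generalized one-step bounds $Z' Z''$ and $Z'+Z''-Z'Z''$ for unequal input reliabilities, note their monotonicity in each argument, and then upper-bound both children by the equal-reliability values at $\mathscr{A}_{l-1}=\max\{\mathscr{A}_{l-1}',\mathscr{A}_{l-1}''\}$, recovering exactly~(\ref{Bhattaharyya_bound_log}). This is the natural way to make the paper's claim precise, and it is the content the paper's proof implicitly relies on but does not spell out. One tiny remark: your inequality $Z'+Z''-Z'Z'' \le 1-(1-Z')(1-Z'')$ is in fact an identity, so you can drop the intermediate step and check monotonicity of $a+b-ab$ directly, as you effectively do when verifying $1-(1-a)(1-b)\le 2c-c^2$.
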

\begin{IEEEproof}
In this case, one channel is obtained from the $M$ B-DMCs polarization and the other from the polarization of punctured channels and B-DMCs. We select the maximum value of as the upper bound of Bhattacharyya parameter at level $l-1$, which can indicate the worse reliability of RCPP codes under the C0 mode.
\end{IEEEproof}

\begin{lemma}\label{C1_mode_UB_set}
For the C1 mode, the upper bound at level $l-1$ should be set to the minimum value, that is, $\mathscr{A}_{l-1}=\min\left\{{\mathscr{A}_{l-1}}',{\mathscr{A}_{l-1}}''\right\}$.
\end{lemma}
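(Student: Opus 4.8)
The plan is to carry this out in exact parallel with Lemma~\ref{C0_mode_UB_set}, only reversing the direction of the selection rule by invoking Lemma~\ref{C1_mode_BC}. First I would pin down which two channels actually meet at level $l-1$ in this third case: on the RCPP trellis one of them, with log-bound ${\mathscr{A}_{l-1}}'$, is produced by polarizing a sub-block consisting of $M$ unpunctured B-DMCs (so its bound obeys the case-1 recursion (\ref{Bhattaharyya_bound_log})), while the other, with log-bound ${\mathscr{A}_{l-1}}''$, comes from a sub-block whose polarization has already absorbed one or more capacity-one punctured channels (so its bound obeys the case-2 recursion (\ref{bound_log_C1})). Because these two sub-blocks differ only in that the second has had some of its code bits punctured under the C1 mode, Lemma~\ref{C1_mode_BC} applies to each sub-block and gives $Z\bigl(\widetilde W''\bigr)<Z\bigl(W'\bigr)$ at the corresponding index, hence ${\mathscr{A}_{l-1}}''\le{\mathscr{A}_{l-1}}'$ and $\min\{{\mathscr{A}_{l-1}}',{\mathscr{A}_{l-1}}''\}={\mathscr{A}_{l-1}}''$.

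Next I would argue that ${\mathscr{A}_{l-1}}''$ is the value with which to feed the single-reliability recursion (\ref{Bhattaharyya_bound_log}) at level $l$, i.e.\ $\mathscr{A}_l={\mathscr{A}_{l-1}}''+1$ if $b_l=0$ and $\mathscr{A}_l=2{\mathscr{A}_{l-1}}''$ if $b_l=1$. The justification is the mirror image of the C0 argument: for C0, puncturing can only degrade a polarized channel, so the pessimistic choice (the maximum) is taken to keep the tracked quantity a genuine upper bound; for C1, Lemma~\ref{C1_mode_BC} says puncturing can only improve a polarized channel, so the relevant RCPP polarized channel inherits the better reliability and the optimistic choice (the minimum) is the one that stays consistent with this monotone-improvement property while remaining as tight as possible. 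I would also note here that Lemmas~\ref{lemma2} and~\ref{lemma3} can be applied recursively throughout the $N$-channel polarization, so after handling cases 1 and 2 this third case is the only configuration left to settle, which closes the iterative scheme.

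The delicate point, and the step I expect to be the main obstacle, is precisely the verification that the minimum — rather than the maximum — is admissible. For the C0 rule, ``take the larger log-bound'' is automatically safe, since the right-hand sides of the Bhattacharyya recursion (\ref{equation2}) are monotone non-decreasing in their argument and $1-(1-Z')(1-Z'')\le 1-(1-\max)^2$; the same one-line monotonicity argument does \emph{not} give ``take the smaller log-bound'' — the minimum of two upper bounds on the constituent channels is not in general an upper bound of the combined channel. So the whole weight of the lemma rests on invoking Lemma~\ref{C1_mode_BC} at the level of the sub-blocks that generated ${\mathscr{A}_{l-1}}'$ and ${\mathscr{A}_{l-1}}''$, so that the channel actually being tracked is the improved one (which is in any case dominated by the looser unpunctured bound), and on checking that propagating ${\mathscr{A}_{l-1}}''$ through (\ref{Bhattaharyya_bound_log}) remains consistent with that domination; getting this dependency stated cleanly, rather than hiding it in a monotonicity remark, is where the real care is needed.
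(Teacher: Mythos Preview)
Your approach is essentially the same as the paper's: both arguments rest entirely on invoking Lemma~\ref{C1_mode_BC} to justify that under the C1 mode the punctured polarized channel is at least as reliable as the unpunctured one, so selecting the minimum keeps the tracked quantity consistent with that improvement. The paper's proof is in fact a single sentence that does not address the subtlety you flag about why the minimum of two upper bounds remains a valid upper bound; your elaboration is more careful than the original on this point, though your description of the two sub-blocks (one from ``$M$ unpunctured B-DMCs'') is slightly off, since in the third case both channels may already carry a mixture of case-1 and case-2 history.
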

\begin{IEEEproof}
By Lemma \ref{C1_mode_BC}, if the minimum value is selected as the upper bound of Bhattacharyya parameter at level $l-1$, this bound ensures that the reliability of each polarized channel under the C1 mode is better than that of the original polarized channel.
\end{IEEEproof}

So for the C0 mode, the upper bounds of Bhattacharyya parameters can be iteratively calculated by using (\ref{Bhattaharyya_bound_log}) and (\ref{bound_log_C0}) in logarithmic domain. Correspondingly, for the C1 mode, the upper bounds can be evaluated by using (\ref{Bhattaharyya_bound_log}) and (\ref{bound_log_C1}).

\begin{figure}[h]
  \centering
  \includegraphics[width=1\columnwidth]{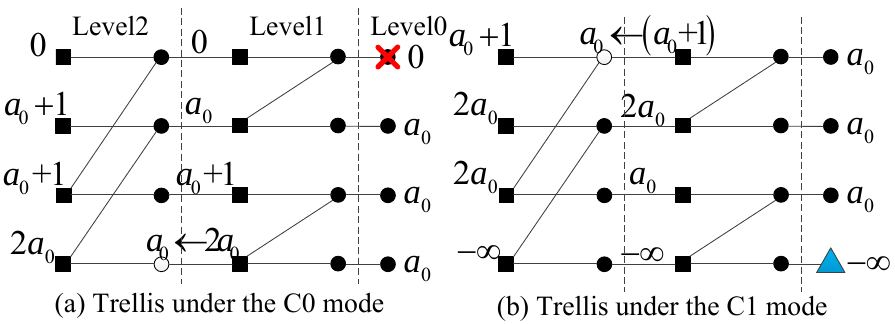}
  \caption[c]{Puncturing examples under the C0 or C1 modes}
  \label{fig_trellis}
\end{figure}

\begin{example}\label{example1}
Figure \ref{fig_trellis} gives two examples of $N=4$ channel polarization with $Q=1$ punctured bits under two puncturing modes. In this figure, the numbers next to the variable nodes of each level denote the upper bounds (in the logarithmic domain) of Bhattacharyya parameters of the transmission channels.

In Fig. \ref{fig_trellis}(a), under the C0 mode, the first code bit is punctured and the puncturing table is $\mathscr{T}_4=\{0,1,1,1\}$. Applying Lemma \ref{C0_mode_UB_set}, the number next to the white circle node should be changed from $2a_0$ to $a_0$. On the other hand, in Fig. \ref{fig_trellis}(b), under the C1 mode, the fourth code bit are punctured and the puncturing table is $\mathscr{T}_4=\{1,1,1,0\}$. From Lemma \ref{lemma3}, we can conclude that source bit $u_4$ at level $2$ should be set to a fixed value. Furthermore, by Lemma \ref{C1_mode_UB_set}, the number next to the white circle node should be altered from $a_0+1$ to $a_0$.
\end{example}
\section{Code-Tree Characterization}\label{section_IV}
In this section, we show how to use a code tree to describe the process of channel polarization with puncturing operation. Based on the tree structure, we introduce the concepts of polar spectra (PS) and path weight enumeration function (PWEF). Furthermore, three types of spectrum distance, such as SD0, SD1, and JSD, are introduced as key performance metrics to indicate the distribution of polar spectra.
\subsection{Code Tree}
Code tree is a compact representation of trellises for polar or RCPP codes. Given the parent code length $N = 2^n$, the code tree $\mathcal{T}=(\mathcal{V}, \mathcal{P})$ is a binary tree, where $\mathcal{V}$ and $\mathcal{P}$ denote the set of nodes and the set of edges or branches, respectively.

The depth of a node is the length of the path from the root to this node. The set of all the nodes at a given depth $l$ is denoted by $\mathcal{V}_l(l = 0, 1, 2, \cdots, n)$. The root node has a depth of zero. The nodes in the set $\mathcal{V}_l$ can be enumerated one-by-one from left to right on the tree, that is, $v_{l,m}\left(m = 1, 2, \cdots, 2^l\right)$ denotes the $m$-th node in $\mathcal{V}_l$. Except for the nodes at the $n$-th depth, each $v_{l,m} \in \mathcal{V}_l$ has two descendants in $\mathcal{V}_{l+1}$, and the two corresponding branches are labeled as $0$ and $1$, respectively. The nodes $v_{n,m}\in \mathcal{V}_n$ are called leaf nodes. Let $\mathcal{T}(v_{l,m})$ denote a subtree with a root node $v_{l,m}$. The depth of this subtree can be defined as the difference between the depth of leaf node and that of the root node, that is, $n-l$.

Recall that channel index $i$ can be expanded by a binary sequence (\ref{binary_expansion}), hence, we can use this sequence to label a path $\omega_n^{(i)}=\left(b_1,\cdots,b_l,\cdots,b_n\right)$ from the root node to one leaf node\footnote{Throughout this paper, we use $\omega_n^{(i)}$ to denote a path on the code tree with a depth $n$. The superscript will be stripped without causing confusion.}, whereby one branch between depth $l-1$ and depth $l$ is assigned a bit value $b_l$. Let $\omega_l=\left(b_1,\cdots,b_l\right)$ denote a partial path from the root node to a node in depth $l$. Note that, using this labeling method\footnote{Hereafter, in all following examples, we will use the same labeling to enumerate the nodes or branches on the code tree.}, the source bits corresponding to the leaf nodes are arranged by a natural order.

Considering the one-to-one correspondence between the channel $i$ and the path $\omega_n$, we use the reliability of channel $i$ to denote the reliability of the corresponding path $\omega_n$. Furthermore, given an end node $v_{n,m}$ of one path, the reliability of this node can also be evaluated by that of the path and denoted by $B\left(v_{n,m}\right)$.

Figure \ref{fig_code_tree} shows two examples of code tree with a parent code length $N=4$ for the C0 and C1 modes. Each code tree is a compact presentation of the trellis in Example \ref{example1}. Each depth in the tree corresponds to one level on the trellis shown in Fig. \ref{fig_trellis}. The source bit $u_3$ corresponds to a binary expansion $(1,0)$ and this sequence is assigned to a path $\omega_2^{(3)} =(1,0)$ which is also indicated by a node sequence $(v_{0,1},v_{1,2},v_{2,3})$. In Example \ref{example1}, the punctured source bits under the C0 or C1 modes are $u_1$ or $u_4$ respectively.

As shown in Fig. \ref{fig_code_tree}(a), the leftmost path is pruned and there are two subtrees, such as $\mathcal{T}\left(v_{2,2}\right)$ and $\mathcal{T}\left(v_{1,2}\right)$. The reliability metrics corresponding to the root nodes of these subtrees are $a_0+1$ and $a_0$, respectively. Similarly, in Fig. \ref{fig_code_tree}(b), the rightmost path is pruned and there are two subtrees, such as $\mathcal{T}\left(v_{1,1}\right)$ and $\mathcal{T}\left(v_{2,3}\right)$. The reliability metrics corresponding to the root nodes of these subtrees are $a_0$ and $2a_0$, respectively.

\begin{figure}[h]
  \centering
  \includegraphics[width=1\columnwidth]{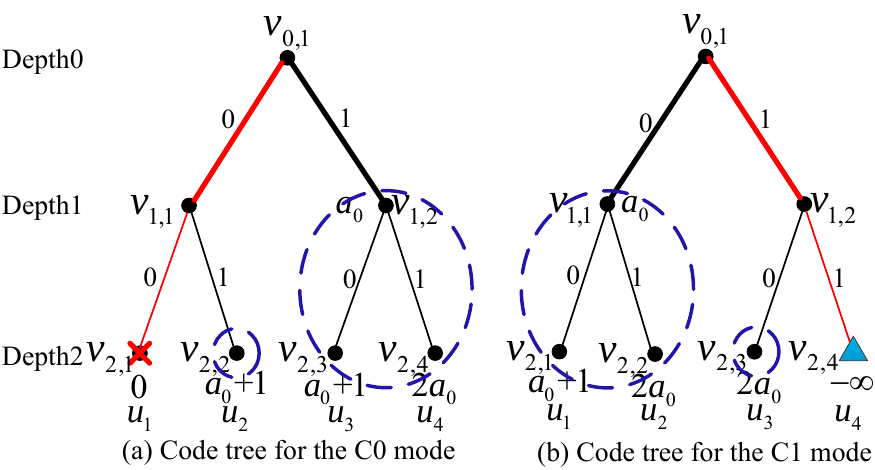}
  \caption[c]{Code tree example for the C0 and C1 modes}
  \label{fig_code_tree}
\end{figure}

Generally, for the punctured source bits on the code tree, we have the following lemmas.
\begin{lemma}\label{lemma5}
For the C0 mode, if one punctured leaf node is a right descendent of a subtree, then the left descendent of this subtree is also punctured.
\end{lemma}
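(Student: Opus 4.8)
The plan is to unwind the definitions so that "punctured leaf node" becomes a statement about Bhattacharyya-parameter upper bounds on the code tree, and then invoke the C0 update rule. Recall from Lemma \ref{lemma2} and equation (\ref{bound_log_C0}) that under the C0 mode the combination of a zero-capacity (punctured) channel with a level-$(l-1)$ polarized channel produces, at level $l$, a channel whose log-domain upper bound is $0$ when $b_l=0$ and $\mathscr{A}_{l-1}$ when $b_l=1$. In tree language: whenever a subtree $\mathcal{T}(v_{l-1,m})$ has one leaf that descends entirely through "$b=1$" branches from a punctured ancestor-pair, the sibling "$b=0$" branch inherits the zero-capacity (pruned) status. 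So the real content of the lemma is that the C0 puncturing process, applied recursively, forces punctured leaves to come in the pattern described, because a single punctured code bit at the parent level $N$ propagates down the left ($b=0$) branch each time.

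First I would set up the recursion explicitly. Consider the dual trellis / code tree of depth $n$, and suppose a leaf node $v_{n,m}$ is punctured and is the right descendant ($b_n=1$) of its parent subtree $\mathcal{T}(v_{n-1,m'})$. By Lemma \ref{lemma2} and the C0 update (\ref{bound_log_C0}), the parent node $v_{n-1,m'}$ must itself carry a zero-capacity channel on at least one of its two incident lower branches: the "$b_n=1$" branch is the punctured leaf $v_{n,m}$, and the companion "$b_n=0$" branch then has upper bound $\mathscr{A}_n = 0$, i.e. Bhattacharyya parameter $1$, i.e. it is also a punctured (zero-capacity) leaf. That is exactly the claim: the left descendant of the subtree is punctured. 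The only subtlety is making precise which "subtree" is meant — I would fix the convention that the relevant subtree is the depth-one subtree $\mathcal{T}(v_{n-1,m'})$ rooted at the parent, whose two leaves are the left ($b_n=0$) and right ($b_n=1$) children, matching the pictures in Fig. \ref{fig_code_tree}.

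The induction/recursion step is where I would spend the most care: a punctured code bit at the top level $N$ corresponds, after bit-reversal, to a path that is all-ones from some node downward, and I need Lemma \ref{lemma2}'s assertion that $\mathscr{T}_2=(0,1)$ and $\mathscr{T}_2=(1,0)$ give identical polarization to argue that at every internal node the punctured branch can be taken to be the right one, so the induced zero-capacity leaf is the left one. Concretely, I would proceed by downward induction on the depth of the punctured node's deepest all-$b=1$ ancestor segment: at the bottom of that segment we are exactly in the "second case" of Section \ref{section_III}-C (one punctured channel paired with one polarized channel), equation (\ref{bound_log_C0}) applies verbatim, and the $b_l=0$ sibling gets $\mathscr{A}_l=0$; then I would observe that this forces the sibling leaf to be pruned from the code tree by the definition of puncturing a zero-capacity branch, completing the step.

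The main obstacle I anticipate is purely bookkeeping rather than conceptual: carefully tracking the bit-reversal permutation ${\bf B}_N$ so that "right descendant" on the code tree (natural-order source bits at leaves) corresponds correctly to the "$b_l=1$" branch and to the appropriate column of ${\bf G}_N$, and making sure the recursive application of Lemma \ref{lemma2} is legitimate at every internal node rather than only at level $N$. Once the indexing is pinned down, the proof is a one-line appeal to (\ref{bound_log_C0}): $b_l=0 \Rightarrow \mathscr{A}_l = 0 \Rightarrow$ the companion leaf is itself a punctured (capacity-zero) node. I would present it as a short inductive argument with the base case handled by Lemma \ref{lemma2} directly and the inductive step handled by the C0 branch of the upper-bound recursion.
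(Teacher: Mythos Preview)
Your central step does not go through. Equation~(\ref{bound_log_C0}) describes the \emph{second case} of Section~\ref{section_III}-C, namely the situation where exactly one of the two channels being combined at the parent is a punctured (zero-capacity) channel and the other is an ordinary polarized channel. In that situation the rule says the left child ($b_l=0$) gets $\mathscr{A}_l=0$ while the right child ($b_l=1$) inherits $\mathscr{A}_{l-1}$, i.e.\ the right child is \emph{not} punctured. You invoke (\ref{bound_log_C0}) under the hypothesis that the right child \emph{is} punctured, which is precisely the case in which the precondition of (\ref{bound_log_C0}) fails; so the ``one-line appeal to (\ref{bound_log_C0})'' does not establish that the left sibling has $\mathscr{A}_n=0$. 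The recursion you sketch inherits the same gap at its base step.

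The paper's argument runs in the other direction and is much shorter. Lemma~\ref{lemma2} (equivalently (\ref{bound_log_C0})) shows that whenever a single punctured input meets a non-punctured input at a two-channel node, it is always the \emph{left} descendant that is knocked out and the right descendant that survives. Consequently, the only way the right descendant $v_{n,2m}$ can itself be punctured is if \emph{both} channels feeding the parent $v_{n-1,m}$ are already zero-capacity, in which case both descendants are punctured and in particular the left one is. No induction, no bit-reversal bookkeeping, and no tracking of ``all-$b=1$ ancestor segments'' is needed: the statement concerns a depth-one subtree at the leaf level and is settled by this single contrapositive application of Lemma~\ref{lemma2}. If you want to repair your write-up, replace the direct appeal to (\ref{bound_log_C0}) with the observation that (\ref{bound_log_C0}) forces the right child to be non-punctured in the one-punctured-input case, and then argue by exclusion.
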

\begin{IEEEproof}
Given an arbitrary subtree $\mathcal{T}\left(v_{{n-1},m}\right)$, it has a left descendent $v_{n,2m-1}$ and a right one $v_{n,2m}$. These two leaf nodes are corresponding to two channel polarization. Under the C0 mode, if this subtree is punctured one bit, by Lemma \ref{lemma2}, the left descendent $v_{n,2m-1}$ should be firstly punctured. Furthermore, if the right descendent $v_{n,2m}$ is punctured, this subtree will be fully deleted.
\end{IEEEproof}

\begin{lemma}\label{lemma6}
For the C1 mode, if one punctured leaf node is a left descendent of a subtree, then the right descendent of this subtree is also punctured.
\end{lemma}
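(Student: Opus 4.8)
The plan is to mirror the argument of Lemma \ref{lemma5}, exploiting the symmetry between the C0 and C1 modes at the level of a single polarization block. First I would fix an arbitrary subtree $\mathcal{T}\left(v_{{n-1},m}\right)$ of depth one, whose two leaf descendants $v_{n,2m-1}$ (left) and $v_{n,2m}$ (right) correspond to the two channels $\widetilde{W}_N^{(2m-1)}$ and $\widetilde{W}_N^{(2m)}$ produced from the pair of identical channels sitting at $v_{{n-1},m}$ by one step of the $\mathbf{F}_2$ transform. The key local fact is Lemma \ref{lemma3}: under the C1 mode a single punctured code bit in a two-channel polarization forces the puncturing table $\mathscr{T}_2=(1,0)$, i.e.\ it is the \emph{second} code bit that is removed, and the corresponding source bit $u_2$ becomes frozen (capacity-one). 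On the dual trellis — where code bits are in bit-reversal order while source bits are in natural order — this means that puncturing one bit of the block deletes the branch feeding the \emph{left} leaf $v_{n,2m-1}$ and promotes it to a frozen (known) node.

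Next I would argue by contraposition on the number of punctured leaves in the subtree. Suppose the left descendant $v_{n,2m-1}$ is punctured but the right descendant $v_{n,2m}$ is not. By Lemma \ref{lemma3} applied to this block, puncturing exactly one code bit in the C1 mode corresponds precisely to the configuration $\mathscr{T}_2=(1,0)$ with $u_2$ frozen — which after the bit-reversal relabeling inside the subtree is exactly the situation in which the left leaf is the punctured one. Hence a single-puncture block is consistent with ``left punctured, right not''. The remaining possibility to rule out is ``left punctured and something inconsistent happens'': if instead the right leaf were the one punctured while the left were not, this would correspond to $\mathscr{T}_2=(0,1)$, which Lemma \ref{lemma3} explicitly forbids under the C1 mode. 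Therefore, whenever a punctured leaf node is a left descendant of a subtree, the only way to puncture \emph{more} than that one bit is to puncture the right descendant as well, fully deleting the subtree; and if the left descendant is punctured, the block is either in the single-puncture state (right not punctured, consistent) or the double-puncture state (right punctured). Wait — I need to phrase the claim correctly: the lemma asserts that once the left leaf is punctured, the right leaf is \emph{also} punctured. So the argument must be: a leaf is punctured only as part of a larger recursive puncturing pattern, and by Lemma \ref{lemma3} the ``first'' code bit to be removed in any C1 block is always the one in position $t_1$, whose bit-reversal image is the \emph{right} leaf of the subtree — so the right leaf is punctured before the left one can be, and therefore if the left leaf is punctured the right one must already be punctured. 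I would state this as: recursively applying Lemma \ref{lemma3} down the tree, the puncturing set under C1 always removes code bits in an order that, read on the code tree, deletes right descendants before left descendants within any subtree.

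The main obstacle I expect is pinning down the bit-reversal bookkeeping precisely: I must be careful that the ``$t_1$ vs.\ $t_2$'' ordering from Lemma \ref{lemma3} (stated in terms of the natural-order code bits within a two-channel block) is correctly translated through the bit-reversal permutation $\mathbf{B}_N$ into ``left vs.\ right descendant'' on the code tree, since the code tree's leaf labeling and the trellis's code-bit ordering differ by exactly this permutation. Once that correspondence is nailed down — ideally by invoking the dual-trellis description from the Definition and the already-proven Lemmas \ref{lemma2}–\ref{lemma3} — the rest is a short structural induction. I would close by noting the parallelism: Lemma \ref{lemma5} says C0 puncturing fills left-to-right within each subtree, Lemma \ref{lemma6} says C1 puncturing fills right-to-left, and both follow from the single-block Lemmas \ref{lemma2} and \ref{lemma3} applied recursively along every root-to-leaf path.
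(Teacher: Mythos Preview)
Your overall approach is the same as the paper's: mirror the proof of Lemma~\ref{lemma5}, replacing the appeal to Lemma~\ref{lemma2} by an appeal to Lemma~\ref{lemma3}. The paper's proof is literally one sentence to this effect. Your final conclusion --- that under C1 the right leaf is punctured before the left, so a punctured left leaf forces a punctured right leaf --- is exactly right.

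However, you create an unnecessary obstacle for yourself with the bit-reversal bookkeeping, and this leads to an actual error midway through (``puncturing one bit of the block deletes the branch feeding the \emph{left} leaf $v_{n,2m-1}$'') that you only half-correct after the ``Wait''. The point you are missing is that the code tree's leaf nodes are indexed by the \emph{source} bits $u_i$ in \emph{natural} order --- the paper says this explicitly when introducing the code tree. So within any depth-one subtree $\mathcal{T}(v_{n-1,m})$, the left leaf is $u_{2m-1}$ and the right leaf is $u_{2m}$. Lemma~\ref{lemma3} tells you directly that under C1 a single puncture freezes $u_2$ in the local two-channel block, i.e.\ the \emph{right} leaf. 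No translation through $\mathbf{B}_N$ is needed at all: the bit-reversal lives on the code-bit side of the trellis, but the code tree and Lemma~\ref{lemma3} both speak in terms of source bits, which align trivially. Once you drop the dual-trellis detour, the argument collapses to two lines, as in Lemma~\ref{lemma5}.
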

\begin{IEEEproof}
According to Lemma \ref{lemma3}, using the same argument as that of Lemma \ref{lemma5}, we can obtain the conclusion.
\end{IEEEproof}

\subsection{Polar Spectra and Path Weight Enumeration Function}
Now, we consider the performance analysis of polar or RCPP codes on the code tree. Because each path $\omega_n$ is relative to a polarized channel $i$ (refer to footnote 1), the corresponding upper bound of Bhattacharyya parameter $A_n\left(\omega_n\right)=\log_2\left(Z_n^{u}\left(\omega_n\right)\right)$ can be iteratively evaluated by using (\ref{log_BP_bound}) or (\ref{Bhattaharyya_bound_log}) in a log scale. This calculation includes two operators, that is, the adding-one operator $O_a: \mathbb{R} \to \mathbb{R}, O_a(x) = x + 1$ and the doubling operator $O_d: \mathbb{R} \to \mathbb{R}, O_d(x) = 2x$. Given the initial value of the root node $A_0=a_0$, a sequence of numbers $A_1(\omega_1),\cdots,A_n(\omega_n)$ can be recursively calculated by
\begin{equation}
{A_{l }}\left( {{\omega_{l}}} \right) = {g_l}\left( {{A_{l-1}}\left( {{\omega_{l-1}}} \right)} \right)
\end{equation}
where $g_l\in \left\{O_a,O_d\right\}$. From (\ref{Bhattaharyya_bound_log}), when $b_l=0$, the operator $g_l=O_a$ is applied and when $b_l=1$, the operator $g_l=O_d$ is used.

\begin{definition}
The path weight $d_H(\omega_n)$ is defined as the Hamming weight of the binary vector $\omega_n=\left(b_1,\cdots,b_l,\cdots,b_n\right)$, that is, $d_H(\omega_n)=k=\left|\left\{l:b_l=1\right\}\right|$. Furthermore, we can define the complemental path weight $f_H(\omega_n)$ as the complemental Hamming weight of the path $\omega_n$, that is, $f_H(\omega_n)=r=n-k=\left|\left\{l:b_l=0\right\}\right|$.
\end{definition}

Obviously, during the $n$ iterations of the $A_n(\omega_n)$, we enumerate doubling $d_H(\omega_n)$ times and adding-one $f_H(\omega_n)$ times. And these two weights satisfy $d_H(\omega_n)+f_H(\omega_n)=n$.

\begin{theorem}\label{theorem1}
Given a path $\omega_n$ with the path weight $d_H(\omega_n)$ and the complemental path weight $f_H(\omega_n)$, the corresponding upper bound of Bhattacharyya parameter $A_n\left(\omega_n\right)$ can be further bounded by
\begin{equation}\label{Bhattaharyya_loguplow}
{A_n^l}( {\omega_n} ) \le {A_n}( {\omega_n} ) \le {A_n^u}( {\omega_n} ),
\end{equation}
where ${A_n^l}( {\omega_n} )$ and ${A_n^u}( {\omega_n} )$ satisfy
\begin{equation}
\left\{ \begin{aligned}
{A_n^l}( {\omega_n} )&={2^{{d_H}\left( {{\omega_n}} \right)}}{a_0} + {f_H}\left( {{\omega_n}} \right),\\
{A_n^u}( {\omega_n} )&={2^{{d_H}\left( {{\omega_n}} \right)}}\left( {{a_0} + {f_H}\left( {{\omega_n}} \right)} \right).
\end{aligned}\right.
\end{equation}
\end{theorem}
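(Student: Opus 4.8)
The plan is to prove both inequalities in (\ref{Bhattaharyya_loguplow}) simultaneously by induction on the depth $n$, using the observation that each of the two operators appearing in the recursion, $O_a(x)=x+1$ and $O_d(x)=2x$, is affine; hence the composition $A_n(\omega_n)$ is an affine function of $a_0$ whose leading coefficient is exactly $2^{d_H(\omega_n)}$ (one factor of $2$ per doubling). The two bounds then amount to pinning down the additive constant from above and below.

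First I would dispose of the base case $n=0$: here $A_0(\omega_0)=a_0$ and $d_H(\omega_0)=f_H(\omega_0)=0$, so all three quantities in (\ref{Bhattaharyya_loguplow}) equal $a_0$. For the inductive step, assume the claim for the partial path $\omega_{n-1}$, writing $k'=d_H(\omega_{n-1})$ and $r'=f_H(\omega_{n-1})$, so that $2^{k'}a_0+r'\le A_{n-1}(\omega_{n-1})\le 2^{k'}(a_0+r')$, and split on the last branch label $b_n$ as dictated by (\ref{Bhattaharyya_bound_log}). If $b_n=0$ then $g_n=O_a$, $A_n=A_{n-1}+1$, $k=k'$, $r=r'+1$; the lower bound follows by adding $1$ to the inductive lower bound, and the upper bound follows from $A_{n-1}+1\le 2^{k'}(a_0+r')+1\le 2^{k'}(a_0+r'+1)$, where the last step uses $2^{k'}\ge 1$. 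If $b_n=1$ then $g_n=O_d$, $A_n=2A_{n-1}$, $k=k'+1$, $r=r'$; the upper bound follows by doubling the inductive upper bound, and the lower bound follows from $2(2^{k'}a_0+r')=2^{k'+1}a_0+2r'\ge 2^{k'+1}a_0+r'$ since $r'\ge 0$. This closes the induction.

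I would also record the sharper combinatorial picture that makes the bounds transparent and shows they are tight: unrolling the recursion, each of the $f_H(\omega_n)$ adding-one steps contributes $2^{j}$ to $A_n(\omega_n)$, where $j\in\{0,1,\dots,d_H(\omega_n)\}$ is the number of doubling steps that occur after it, so that $A_n(\omega_n)=2^{d_H(\omega_n)}a_0+\sum 2^{j}$. The sum is minimized, equal to $f_H(\omega_n)$, exactly when every doubling precedes every adding-one, and maximized, equal to $2^{d_H(\omega_n)}f_H(\omega_n)$, exactly when every adding-one precedes every doubling; these two extreme orderings reproduce $A_n^l$ and $A_n^u$ respectively.

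There is no real obstacle here; the only point requiring care is keeping the direction of the inequalities straight in the two cases — specifically, that the $+1$ from an $O_a$ step can be absorbed into $2^{k'}(a_0+r')$ only because the leading coefficient $2^{k'}$ is at least $1$, and that the lower bound survives an $O_d$ step only because $r'\ge 0$. A useful sanity check is that the upper and lower bounds coincide when $d_H(\omega_n)=0$, i.e. for the all-zero path, as they must.
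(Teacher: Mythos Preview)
Your induction is correct and cleanly handles both inequalities; the two ``care'' points you flag ($2^{k'}\ge 1$ for the $O_a$ step, $r'\ge 0$ for the $O_d$ step) are exactly the right places where the argument could break, and they do not.

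The paper argues differently: it invokes a swapping lemma from Ar{\i}kan--Telatar \cite{polar_rate}, namely that whenever an $O_a$ immediately precedes an $O_d$, interchanging them can only decrease the final value (since $2(x+1)>2x+1$). Repeated swaps therefore push all doublings to the front and all add-ones to the back, yielding $O_a^{\,f_H}\!\circ O_d^{\,d_H}(a_0)=A_n^l$; the symmetric swap gives $A_n^u$. This is essentially the extremal-ordering picture you sketch in your third paragraph, but the paper uses it as the \emph{primary} proof rather than as a coda. Your induction is more self-contained and avoids the external reference; the paper's swap argument makes tightness (and which paths attain equality) immediate without a separate unrolling. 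Either route is fine here.
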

\begin{IEEEproof}
Suppose the path $\omega_n=\left(b_1,\cdots,b_l,\cdots,b_n\right)$ is relative to an operator sequence $\left\{g_l\right\}_{l=1}^n$ and the beginning of the sequence is $g_1=O_a$ (If the beginning is $g_1=O_d$, we will check the sequence and find a partial one with the beginning of $O_a$). So there exists $l\in\left\{2,\cdots,n\right\}$ for which $g_{l-1}=O_a$ and $g_l=O_d$. According to the argument in \cite[Lemma 1]{polar_rate}, swapping $g_{l-1}$ and $g_{l}$ will decrease the result of recursion. So after continuously swapping over the sequence, we can obtain a lower bound on $A_n(\omega_n)$ corresponds to choosing $g_1=\cdots=g_k=O_d$ and $g_{k+1}=\cdots=g_{n}=O_a$, that is, $A_n(\omega_n)\ge O_a^{n-k}\left(O_d^{k}(a_0)\right)={A_n^l}( {\omega_n} )$. By a similar argument, we can find an upper bound $A_n^u(\omega_n)$ which is proved in \cite{polar_rate}.
\end{IEEEproof}

Since path weight and complemental path weight indicate the reliability of polarized channel, given the end node $v_{n,m}$ of the path $\omega_n$, we can use the lower bound ${A_n^l}( {\omega_n} )$ to present the reliability of this node, that is
\begin{equation}\label{reliability_approx_node}
B\left(v_{n,m}\right)={2^{{d_H}\left( {{\omega_n}} \right)}}{a_0} + {f_H}\left( {{\omega_n}} \right).
\end{equation}

\begin{definition}
Polar spectra are defined by the distribution of path weight or complemental path weight on the code tree and characterized by two sets, PS1 and PS0, to count the number of paths with a certain path weight or complemental path weight respectively. Let $\left\{H_M^{(k)}, 0\le k\le n\right\}$ denote the PS1 set of a RCPP code with the code length $M$, where $H_M^{(k)}$ represents the number of paths with a path weight $k$ on the code tree after the puncturing. Similarly, $\left\{C_M^{(r)}, 0\le r\le n\right\}$ denotes the corresponding PS0 set, where $C_M^{(r)}$ means the number of paths with a complemental path weight $r=n-k$.
\end{definition}

\begin{remark} For the original polar code, i.e., $M=N$, due to the structure of perfect binary tree, the elements in PS1 and PS0 satisfy $H_N^{(k)}=\binom{n}{k}$ and $C_N^{(r)}=\binom{n}{r}$ respectively. By Theorem \ref{theorem1}, the set of PS1 or PS0 directly determine the reliability of polarized channel. Hence, how to approach these original polar spectra is the aim of optimal puncturing for RCPP codes.
\end{remark}

\begin{definition}
Path weight enumeration function (PWEF) is characterized by two types of polynomials whose coefficients are taken from the corresponding polar spectra. Given the PS1 set $\left\{H_M^{(k)},0\le k\le n\right\}$, PWEF on the path weight (PWEF1) can be defined as
${\mathcal H}\left( X \right)=\sum\limits_{k = 0}^n {H_M^{\left( k \right)}{X^k}}$, where $X$ is a dumb variable.
Similarly, given the PS0 set $\left\{C_M^{(r)},0\le r\le n\right\}$, PWEF on the complemental path weight (PWEF0) can be defined as
${\mathcal C}\left( X \right)=\sum\limits_{r = 0}^n {C_M^{\left( r \right)}{X^r}}$.
\end{definition}

\begin{lemma}\label{lemma7}
The PS1 and PS0 of an original polar code are symmetric, that is, given a pair of paths $\omega_n^{(i)}$ and $\omega_n^{(N+1-i)}$, we have $f_H\left(\omega_n^{(i)}\right)=d_H\left(\omega_n^{(N+1-i)}\right)$ and $d_H\left(\omega_n^{(i)}\right)=f_H\left(\omega_n^{(N+1-i)}\right)$. Further, the corresponding PWEF1 and PWEF0 satisfy
${\mathcal H}\left( X \right)={\mathcal C}\left( X \right)=\sum\limits_{k = 0}^n \binom{n}{k}{X^k}$.
\end{lemma}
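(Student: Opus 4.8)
The plan is to read off both assertions directly from the binary-expansion labeling of paths in (\ref{binary_expansion}). Recall that $\omega_n^{(i)}=(b_1,\dots,b_n)$ is the $n$-bit binary expansion of $i-1$, and likewise $\omega_n^{(N+1-i)}$ is the $n$-bit expansion of $(N+1-i)-1 = N-i = 2^n-1-(i-1)$. The first thing I would observe is that since $2^n-1$ has the all-ones expansion $(1,1,\dots,1)$ and $0\le i-1\le 2^n-1$, the subtraction $2^n-1-(i-1)$ never produces a borrow, so bit-by-bit it is simply complementation: the expansion of $N-i$ equals $(\bar b_1,\dots,\bar b_n)$ with $\bar b_l = 1-b_l$. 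Equivalently, the involution $i\mapsto N+1-i$ on $\mathcal{I}$ corresponds, on the code tree, to flipping every branch label of the path.

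With this in hand the first claim is immediate: $d_H(\omega_n^{(N+1-i)}) = \sum_{l=1}^n \bar b_l = \sum_{l=1}^n (1-b_l) = n - d_H(\omega_n^{(i)})$, and since $d_H(\omega_n^{(i)})+f_H(\omega_n^{(i)})=n$ this equals $f_H(\omega_n^{(i)})$; applying the same computation to the complemental weight (or re-using $d_H+f_H=n$ on both sides) gives $f_H(\omega_n^{(N+1-i)}) = d_H(\omega_n^{(i)})$.

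For the PWEFs I would argue as follows. Because $i\mapsto N+1-i$ is a bijection of $\mathcal{I}$ onto itself which, by the previous step, carries every path of path weight $k$ to a path of complemental path weight $k$, it restricts to a bijection between $\{\omega_n : d_H(\omega_n)=k\}$ and $\{\omega_n : f_H(\omega_n)=k\}$. Hence $H_N^{(k)} = C_N^{(k)}$ for all $0\le k\le n$, so $\mathcal{H}(X)=\mathcal{C}(X)$. It then remains only to evaluate one of the two spectra: for $M=N$ the code tree is the perfect binary tree of depth $n$, and a root-to-leaf path of path weight $k$ is determined exactly by the choice of which $k$ of its $n$ branches carry the label $1$, so $H_N^{(k)}=\binom{n}{k}$ (as already recorded in the remark following the definition of polar spectra). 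This yields $\mathcal{H}(X)=\mathcal{C}(X)=\sum_{k=0}^n\binom{n}{k}X^k$, as claimed.

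The only step needing any care is the no-borrow observation that converts the arithmetic identity $N+1-i-1 = 2^n-1-(i-1)$ into the combinatorial statement ``complement every bit of the path''; once that is made precise, the rest is bookkeeping with $d_H+f_H=n$ and the structure of the perfect binary tree, so I expect no genuine obstacle here.
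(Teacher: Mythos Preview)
Your argument is correct. The paper, however, states Lemma~\ref{lemma7} without proof: it is treated as an immediate consequence of the binary-expansion labeling in (\ref{binary_expansion}) together with the perfect-binary-tree remark preceding the lemma, and no explicit argument is given. Your proposal therefore does not so much differ from the paper's approach as supply the details the paper omits; the bit-complementation observation for $i\mapsto N+1-i$ and the counting $H_N^{(k)}=\binom{n}{k}$ are exactly the ingredients the paper is implicitly relying on.
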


\subsection{Spectrum Distance}
We introduce two types of spectrum distances, SD1 and SD0, defined by the expectation of path weight and complemental weight respectively.
\begin{definition}
The spectrum distance for path weight (SD1) is given by
\begin{equation}
\begin{aligned}
d_{avg}&=\mathbb{E}\left[d_H(\omega_n)\right]=\frac{1}{M}\left.\frac{d\mathcal{H}(X)}{dX}\right|_{X=1}\\
           &=\sum\limits_{k = 0}^n {P_1(n,k,Q)k}= \sum\limits_{k = 0}^n {\frac{H_M^{\left( k \right)}}{M}k}
\end{aligned}
\end{equation}
where $P_1(n,k,Q)=\frac{H_M^{\left( k \right)}}{M}$ is the probability of  path weight $k$ for a RCPP code with $Q$ bits puncturing.
Correspondingly, the spectrum distance for complemental path weight (SD0) is given by
\begin{equation}
\begin{aligned}
\lambda_{avg}&=\mathbb{E}\left[f_H(\omega_n)\right]=\frac{1}{M}\left.\frac{d\mathcal{C}(X)}{dX}\right|_{X=1}\\
           &=\sum\limits_{r = 0}^n {P_0(n,r,Q)r}= \sum\limits_{r = 0}^n {\frac{C_M^{\left(r \right)}}{M}r}
\end{aligned}
\end{equation}
where $P_0(n,r,Q)=\frac{C_M^{\left( r \right)}}{M}$.
\end{definition}

\begin{definition}
In addition, we can define the joint spectrum distance (JSD) as follows
\begin{equation}
\begin{aligned}
d_{avg}+\lambda_{avg}&=\mathbb{E}\left[d_H(\omega_n)\right]+\mathbb{E}\left[f_H(\omega_n)\right]\\
                        &=\sum\limits_{k = 0}^n {P_1(n,k,Q)k}+\sum\limits_{r = 0}^n {P_0(n,r,Q)r}.
\end{aligned}
\end{equation}
\end{definition}
Hereafter, we use SD0/SD1/JSD as the main metrics to evaluate and optimize the puncturing table. If these metrics of one puncturing scheme are very close to those of the original polar code, this scheme will generate an optimal RCPP code.
\begin{corollary}\label{corollary2}
The SD1 and SD0 of the original polar code are $\mathbb{E}\left[d_H\left(\omega_n\right)\right]=\frac{n}{2}$ and $\mathbb{E}\left[f_H\left(\omega_n\right)\right]=\frac{n}{2}$ respectively.
\end{corollary}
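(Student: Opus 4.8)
The plan is to read off the closed form of the path weight enumerating functions for the unpunctured code and then apply the definitions of SD1 and SD0 directly. By the Remark following the definition of polar spectra, the original polar code ($M = N = 2^n$) has a perfect binary code tree, so $H_N^{(k)} = \binom{n}{k}$ and $C_N^{(r)} = \binom{n}{r}$; equivalently, by Lemma~\ref{lemma7}, $\mathcal{H}(X) = \mathcal{C}(X) = \sum_{k=0}^{n}\binom{n}{k}X^{k} = (1+X)^{n}$. This is the only structural input needed.

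First I would compute SD1. Using the definition $d_{avg} = \frac{1}{M}\left.\frac{d\mathcal{H}(X)}{dX}\right|_{X=1}$ with $M = 2^n$, differentiate $\mathcal{H}(X) = (1+X)^n$ to get $\frac{d\mathcal{H}(X)}{dX} = n(1+X)^{n-1}$, evaluate at $X=1$ to obtain $n\,2^{n-1}$, and divide by $2^n$ to conclude $d_{avg} = n/2$. Equivalently, one can avoid the generating function and invoke the elementary identity $\sum_{k=0}^{n} k\binom{n}{k} = n\,2^{n-1}$ directly in the sum $\sum_{k=0}^{n}\frac{H_N^{(k)}}{M}k$.

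For SD0 I would argue by the symmetry in Lemma~\ref{lemma7}: since $\mathcal{C}(X) = \mathcal{H}(X) = (1+X)^n$, the identical computation gives $\lambda_{avg} = \frac{1}{M}\left.\frac{d\mathcal{C}(X)}{dX}\right|_{X=1} = n/2$. Alternatively, because $d_H(\omega_n) + f_H(\omega_n) = n$ holds pathwise, taking expectations yields $d_{avg} + \lambda_{avg} = n$ (this is also the JSD of the original polar code), so $\lambda_{avg} = n - n/2 = n/2$ follows immediately once SD1 is established.

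There is essentially no obstacle here; the statement is a one-line corollary of Lemma~\ref{lemma7} and the binomial theorem. The only point worth stating carefully is that the normalization in the definition of SD1/SD0 uses $M$, which for the unpunctured code equals $N = 2^n$, so that the factor $2^{n-1}$ coming from the derivative cancels down to exactly $1/2$.
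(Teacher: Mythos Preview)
Your proposal is correct and follows essentially the same route as the paper: both invoke Lemma~\ref{lemma7} (or the Remark) to identify $H_N^{(k)}=\binom{n}{k}$ and then compute the mean of the binomial distribution, with the paper writing $\sum_{k=0}^{n}\binom{n}{k}\frac{k}{2^n}=\frac{n}{2}$ directly and you phrasing it via the derivative of $(1+X)^n$. The SD0 case is handled identically in both by symmetry.
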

\begin{IEEEproof}
The proof is direct. By Lemma \ref{lemma7}, since the probability $P_1(n,k,0)$ of the original polar code obeys the binomial distribution, we can write
\begin{equation}
\begin{aligned}
\mathbb{E}\left[d_H\left(\omega_n\right)\right]=\sum\limits_{k = 0}^n {P_1(n,k,0)k}
                                                                 =\sum\limits_{k = 0}^n{\binom{n}{k}\frac{k}{2^n}}=\frac{n}{2}.
\end{aligned}
\end{equation}
The derivation of SD0 is similar and omitted.
\end{IEEEproof}

\section{Optimal Puncturing for the Capacity-Zero Mode}
\label{section_V}
In this section, the quasi-uniform puncturing (QUP) algorithm is described and proved to maximize the spectrum distances SD1 and JSD.
\subsection{Single Bit Puncturing}
Consider the single-bit puncturing under C0 mode. In this case, puncturing at different locations is equivalent regardless of a slight variation in its code tree.
\begin{theorem}\label{theorem3}
For the C0 mode, when only one bit is punctured, puncturing any code bit $x_i$ is equivalent to puncturing the first code bit $x_1$, meanwhile, the source bit $u_1$ is punctured.
\end{theorem}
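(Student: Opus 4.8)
The statement packages two assertions: first, that puncturing an arbitrary code bit $x_i$ yields the same collection of polarized channels $\{\widetilde{W}_N^{(j)}\}$ as puncturing $x_1$; and second, that this is precisely the configuration in which the source bit $u_1$ is removed. The plan is to dispatch the second assertion by a one-line matrix observation and to reserve the real work for the first.

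For the second assertion, recall $\mathbf{G}_N=\mathbf{B}_N\mathbf{F}_2^{\otimes n}$. The first row of $\mathbf{F}_2^{\otimes n}$ equals $(1,0,\dots,0)$, and the bit-reversal permutation $\mathbf{B}_N$ fixes the index $1$, so the first row of $\mathbf{G}_N$ is $(1,0,\dots,0)$ as well; hence $u_1$ enters only the code bit $x_1$ and into no other code bit. Deleting $x_1$ therefore removes $u_1$ from every transmitted position, which in the source-puncturing description of Section \ref{section_III} is exactly the choice $\mathcal{D}=\{1\}$, i.e.\ $u_1$ is punctured.

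For the first assertion I would induct on $n$, the case $n=1$ being precisely Lemma \ref{lemma2}. Expand the index of $x_i$ in binary and look at the first channel-combining stage of the trellis, where $x_i$ meets exactly one partner code bit to form a two-channel polarization block. By Lemma \ref{lemma2} the two subchannels produced by this block are unchanged if instead we puncture the left (odd-indexed) bit of that block; after this local move the block outputs one capacity-zero (``dead'', $Z=1$) subchannel --- corresponding via (\ref{bound_log_C0}) to the branch $b_l=0$ --- together with one subchannel carrying the original reliability of $W$. The dead subchannel is itself a legitimate C0-punctured input for the remaining depth-$(n-1)$ network, so the problem reduces to single-bit C0 puncturing of a length-$N/2$ polar code; moreover, arranging the dead subchannel to sit at the output of the \emph{first} block (again by Lemma \ref{lemma2}) forces the punctured bit of the original code to be $x_1$. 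Invoking the induction hypothesis on the length-$N/2$ code and unwinding the recursion shows that the punctured source bit is $u_1$ and --- since every step left all polarized-channel reliabilities untouched --- that the polarization result coincides with that of puncturing $x_1$. As an independent check, on the code tree the single punctured leaf must, by repeated use of Lemma \ref{lemma5}, be the left child of every subtree that contains it, hence the leftmost leaf $v_{n,1}=u_1$, which is the position carrying $\mathscr{A}_l=0$ along the all-zero path.

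The delicate point is the reduction step in the induction: one has to verify that a $Z=1$ output of a first-stage butterfly really is interchangeable with a genuinely punctured code bit fed into the shrunken polar network --- so that the induction hypothesis applies verbatim --- and one must keep the natural-order versus bit-reversal indexing straight so that ``first code bit'' and ``first source bit'' remain consistent across all $n$ stages. Lemma \ref{lemma2} supplies exactly the per-butterfly reliability invariance that is required; the substance of the argument is checking that this invariance composes cleanly through the entire recursion.
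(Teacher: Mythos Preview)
Your argument is correct, but it takes a different route from the paper. The paper dispatches the whole theorem in one algebraic stroke: since $\mathbf{G}_N$ is an involution, $u_1^N = x_1^N \mathbf{G}_N$, and because the first \emph{column} of $\mathbf{G}_N$ is the all-ones vector one has $u_1 = \bigoplus_{i=1}^N x_i$. Erasing any single $x_i$ under the C0 mode therefore leaves $u_1$ completely undetermined, and the equivalence is then attributed to Lemma~\ref{lemma2}. You instead look at the first \emph{row} of $\mathbf{G}_N$ to pin down the $x_1 \leftrightarrow u_1$ correspondence and then run an explicit stage-by-stage induction through the trellis, using Lemma~\ref{lemma2} at every butterfly to migrate the puncture from an arbitrary $x_i$ to $x_1$.

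The paper's column observation is shorter and immediately identifies \emph{which} source bit is killed, regardless of which $x_i$ was punctured, without tracking anything through the recursion. Your induction, by contrast, makes the mechanism operational: it shows exactly how the $Z=1$ subchannel propagates leftward butterfly by butterfly, which is precisely the picture one needs later for Theorem~\ref{theorem4} and Lemma~\ref{lemma9}. The delicate point you flag---that a $Z=1$ output of a first-stage butterfly is interchangeable with a genuinely punctured input to the depth-$(n-1)$ network---is real but harmless: under the C0 mode a punctured channel is by definition one with $Z=1$, so the induction hypothesis applies verbatim. Your closing cross-check via Lemma~\ref{lemma5} is the code-tree translation of the same induction and is a nice consistency check, though not logically required.
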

\begin{proof}
The polarized transformation $u_1^N\mathbf{G}_N=x_1^N$ is inverted as $u_1^N=x_1^N\mathbf{G}_N$ \cite{LP_Goela}. Hence, the first source bit can be written as $u_1=\sum \limits_{i = 1}^N x_i$. This bit $u_1$ is constrained by all the code bits via the modulo-2 operation. When any one code bit is punctured under the C0 mode, it is easy to see that the source $u_1$ bit must be punctured according to Lemma \ref{lemma2}.
\end{proof}

\begin{theorem}\label{theorem4}
For single bit puncturing under the C0 mode, the maximal path weight on the code tree is $n-1$, similarly, the maximal complemental path weight is also $n-1$.
\end{theorem}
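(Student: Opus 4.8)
The plan is to reduce to the canonical single-bit puncturing guaranteed by Theorem~\ref{theorem3} and then to read both weight spectra directly off the shape of the pruned code tree. By Theorem~\ref{theorem3}, under the C0 mode puncturing any code bit $x_i$ is equivalent to puncturing $x_1$, which in turn forces $u_1$ to be a punctured source bit. The channel index of $u_1$ is $1$, so by the binary expansion (\ref{binary_expansion}) it is carried by the all-zero path $\omega_n=(0,\dots,0)$, i.e.\ the leftmost leaf $v_{n,1}$. Applying Lemma~\ref{lemma2} (equivalently Lemma~\ref{lemma5}) recursively down that path, the deletion is \emph{clean}: only $v_{n,1}$ and the branches of the all-zero path $(v_{0,1},v_{1,1},\dots,v_{n,1})$ disappear, so the surviving leaves are exactly the paths $\omega_n$ with $d_H(\omega_n)\ge 1$.

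Next I would describe the pruned tree. Removing the all-zero path detaches, at each depth $l=1,\dots,n$, the right child $v_{l,2}$ of $v_{l-1,1}$, so the surviving code tree is the disjoint union of the subtrees $\mathcal{T}(v_{1,2}),\dots,\mathcal{T}(v_{n,2})$, where $\mathcal{T}(v_{l,2})$ has depth $n-l$ (this is the structure of Fig.~\ref{fig_code_tree}(a)). The essential point — coming from the capacity-zero recursion (\ref{bound_log_C0}) — is that the edge entering each $v_{l,2}$ no longer applies the doubling operator $O_d$ it had in the original tree: because the sibling channel at $v_{l-1,1}$ is the erased one, the ``$+$'' combination only inherits the reliability of the other copy, so that edge acts as the identity and the root of $\mathcal{T}(v_{l,2})$ is labelled $a_0+(l-1)$, i.e.\ only the $O_a$'s of the leading zeros are applied.

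I would then recount the two weights. Fix a surviving leaf, originally at $\omega_n=(b_1,\dots,b_n)$ with $k:=d_H(\omega_n)\ge 1$, and let $l^\ast=\min\{l:b_l=1\}$; the leaf sits in $\mathcal{T}(v_{l^\ast,2})$. Along its path in the pruned tree the operator string is: $l^\ast-1$ copies of $O_a$ (the leading zeros), then the identity (the edge into $v_{l^\ast,2}$), then the normal operators induced by $(b_{l^\ast+1},\dots,b_n)$. Hence exactly $k-1$ doublings and $n-k$ add-ones are applied, so in the RCPP code tree this leaf has path weight $k-1$ and complemental path weight $n-k$ (which sum to $n-1$). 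Since $k$ ranges over $\{1,\dots,n\}$ as $\omega_n$ ranges over the surviving leaves, the maximal path weight is $n-1$, attained at $k=n$ (the former all-one leaf, now in $\mathcal{T}(v_{1,2})$), and the maximal complemental path weight is $n-1$, attained at $k=1$ (e.g.\ the leaf $v_{n,2}$), which is exactly the claim.

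The main obstacle is the second step: one must justify rigorously — not merely by inspecting the $N=4$ figure — that puncturing $x_1$ erases precisely the all-zero source path and that at every polarization level along it the complementary ``$+$'' channel retains undiminished reliability, so that the forest $\{\mathcal{T}(v_{l,2})\}_{l=1}^{n}$ with root labels $a_0+(l-1)$ is the correct description of the pruned tree. This is where (\ref{bound_log_C0}) together with a recursive application of Lemma~\ref{lemma2} does the real work; once that structural picture is established, the rest is bookkeeping on binary strings, and one even sees the stronger fact that the complemental weight of every surviving path is unchanged while its path weight drops by exactly one.
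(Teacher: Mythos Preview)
Your argument is correct and follows essentially the same route as the paper's proof: reduce to puncturing $u_1$ via Theorem~\ref{theorem3}, decompose the pruned tree into the forest $\{\mathcal{T}(v_{l,2})\}_{l=1}^n$, and use (\ref{bound_log_C0}) to see that the edge into each $v_{l,2}$ acts as the identity so the root reliability is $a_0+(l-1)$. The paper then simply points to the rightmost path of $\mathcal{T}(v_{1,2})$ (path weight $n-1$) and the leftmost path of each subtree (complemental weight $n-1$), whereas you carry the operator-counting through for an arbitrary surviving leaf and obtain the sharper statement that a leaf with original Hamming weight $k$ ends up with path weight $k-1$ and complemental weight $n-k$; this extra granularity is a nice bonus but not a different method.
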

\begin{IEEEproof}
According to Theorem \ref{theorem3}, after single bit puncturing under the C0 mode, the leftmost path $\omega_n^{(1)}=(0,0,\cdots,0)$ ($b_l=0,l=1,2,\dots,n$) corresponding to the first source bit $u_1$ on the code tree is pruned. The nodes associated to this pruned path are $v_{l,1}, l=0,1,\cdots,n$. So the original code tree is decomposed into $n$ subtrees and each has a root node $v_{l,2}, l=1,2,\cdots,n$.

Since every bit $b_l$ is punctured, by using (\ref{bound_log_C0}), we can calculate the reliability metric of each root node, that is, $B\left(v_{l-1,1}\right) \rightarrow B\left(v_{l,2}\right)$. For one root node $v_{l,2}$, we regard its reliability metric as an inheritance from the predecessor $\left(v_{l-1,1}\right)$. Applying Lemma \ref{C0_mode_UB_set} and (\ref{Bhattaharyya_bound_log}) on all the subtrees, we can evaluate the reliability of each leaf node. Obviously, the rightmost subtree $\mathcal{T}\left(v_{1,2}\right)$ has the largest depth and its rightmost path has the maximal path weight $n-1$. Furthermore, the leftmost path of each subtree has the maximal complemental path weight $n-1$.
\end{IEEEproof}

For a code tree shown in Fig. \ref{fig_code_tree}(a), the rightmost path on the subtree $\mathcal{T}\left(v_{1,2}\right)$ has the maximal path weight $1$. And the leftmost path on all subtrees has the maximal complemental path weight $1$. The reliability metrics of two root nodes $v_{1,2}$ and $v_{2,2}$ are inherited from those of the predecessors $v_{0,1}$ and $v_{1,1}$ respectively, that is $B(v_{0,1})=a_0 \rightarrow B(v_{1,2})$ and $B(v_{1,1})=(a_0+1) \rightarrow B(v_{2,2})$.

\subsection{Quasi-Uniform Puncturing Algorithm}
\label{section_puncturing}
The quasi-uniform puncturing (QUP) algorithm proposed in \cite{RCPP_Niu} can be outlined as follows.
\medskip
\begin{enumerate}[ \textbf{Stage} 1)]
\item
Initialize the table $\mathscr{T}_N$ as all ones, and then set the first $Q$ bits as zeros;
\item
Perform bit-reversal permutation on the table $\mathscr{T}_N$ and obtain the puncturing table.
\end{enumerate}
\medskip

The puncturing table generated by the QUP algorithm is constructive and regular, thereby providing useful tool for the practical application of coding and decoding.

\begin{example}
$N=8, M=5, Q=3$. The initial table is $\mathscr{T}_8=\left(00011111 \right)$. After bit-reversal permutation, the puncturing table is $\mathscr{T}_8=\left( 01010111 \right)$, that means the code bits $x_1$, $x_3$, and $x_5$ should be punctured.
\end{example}

\begin{theorem}\label{theorem5}
The punctured positions in the QUP puncturing table are roughly uniform, that is, the distance between any two neighboring punctured positions, $D$, satisfies $2^{(n-L-1)} \le D \le 2^{(n-L)}$ where $L=\lfloor \log_2 {Q} \rfloor$.
\end{theorem}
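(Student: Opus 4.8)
The statement is purely combinatorial about the table $\mathscr{T}_N$ produced by the two stages of QUP, so the plan is to make the bit-reversal structure explicit and read the gap lengths off it. Index the $N$ positions by $0,1,\dots,N-1$ (the $1$-indexing of the paper differs only by a harmless shift). After Stage~1 the punctured set is the initial block $\{0,1,\dots,Q-1\}$, hence after the Stage~2 permutation it becomes $\mathcal{B}=\{\,\mathrm{bitrev}_n(i):0\le i\le Q-1\,\}$, where $\mathrm{bitrev}_n$ reverses the $n$-bit binary expansion. Since $2^{n-1}<M\le 2^n$ we have $0<Q=N-M<2^{n-1}$, and $L=\lfloor\log_2 Q\rfloor$ yields $2^L\le Q\le 2^{L+1}-1$ together with $L\le n-2$; I would keep these bounds in hand throughout.

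First I would split $\{0,\dots,Q-1\}=\{0,\dots,2^L-1\}\cup\{2^L,\dots,Q-1\}$ and describe the image of each piece under $\mathrm{bitrev}_n$. An integer $i$ with $0\le i\le 2^L-1$ occupies only its $L$ least significant bits, so its $n$-bit reversal equals $\mathrm{bitrev}_L(i)\cdot 2^{n-L}$; because $\mathrm{bitrev}_L$ permutes $\{0,\dots,2^L-1\}$, the image of this piece is exactly the uniform grid $\{\,j\cdot 2^{n-L}:0\le j\le 2^L-1\,\}$ of all multiples of $2^{n-L}$ in $[0,N)$. An integer $i$ with $2^L\le i\le Q-1$ has its bit of weight $2^L$ set and all higher bits zero, so its reversal equals $2^{n-L-1}+\mathrm{bitrev}_L(i-2^L)\cdot 2^{n-L}$; the image of the second piece is therefore a subset of size $Q-2^L$ of the cell midpoints $\{\,2^{n-L-1}+j\cdot 2^{n-L}:0\le j\le 2^L-1\,\}$.

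The bound then follows with essentially no computation: $\mathcal{B}$ is the uniform grid of step $2^{n-L}$ together with at most one selected midpoint per cell, each midpoint lying strictly inside its cell. Given two neighbouring punctured positions, if the cell between two consecutive grid points contains a selected midpoint, that midpoint is in $\mathcal{B}$ and divides the cell into two gaps of length $2^{n-L-1}$; otherwise the two grid points are themselves neighbours in $\mathcal{B}$ and the gap has length $2^{n-L}$. Since every grid point lies in $\mathcal{B}$, two midpoints can never be consecutive, so these cases are exhaustive, and hence every neighbouring gap $D$ satisfies $2^{n-L-1}\le D\le 2^{n-L}$. The degenerate cases fit in: for $Q=2^L$ the second piece is empty and every gap equals $2^{n-L}$, and for $Q=1$ (so $L=0$) there is a single punctured position and the claim is vacuous.

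The one point requiring care is the middle step: one must check precisely which bits survive the reversal and use $2^L\le Q<2^{L+1}$ to see that every index of the initial block has all bits of weight exceeding $2^L$ equal to zero. This is exactly what forces each element of $\mathcal{B}$ to be either a grid point (from indices below $2^L$) or a cell midpoint (from indices in $[2^L,2^{L+1})$), with no intermediate value possible, and therefore what pins the neighbouring gaps to the interval $[2^{n-L-1},2^{n-L}]$. Once this ``grid plus partial refinement'' picture is in place, the theorem is immediate.
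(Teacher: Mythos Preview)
Your argument is correct: writing the punctured set as the bit-reversal image of $\{0,\dots,Q-1\}$, splitting at $2^L$, and observing that one obtains the full grid $\{j\cdot 2^{n-L}\}$ together with a partial set of cell midpoints gives exactly the claimed gap bound, with every neighbouring gap equal to either $2^{n-L-1}$ or $2^{n-L}$. The paper itself does not prove Theorem~\ref{theorem5} but only cites \cite{RCPP_Niu}, so your self-contained proof actually supplies more than what is presented here; there is no paper-internal argument to compare against.
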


This theorem is proved in \cite{RCPP_Niu}.

\begin{lemma}\label{lemma8}
For the QUP algorithm, the source bit vector $u_1^Q$ is punctured, that is, $\mathcal{D}=\left\{1,2,\cdots,Q\right\}$. Equally, the $Q$ leftmost leaf nodes on the code tree are pruned.
\end{lemma}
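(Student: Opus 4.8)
The two assertions in Lemma~\ref{lemma8} are the same statement: by the labeling convention of the code tree the $Q$ leftmost leaf nodes are precisely the source bits $u_1,\dots,u_Q$, so it suffices to prove $\mathcal{D}=\{1,2,\dots,Q\}$. The plan is to pass to the dual trellis $u_1^N = x_1^N\mathbf{B}_N\mathbf{F}_2^{\otimes n}$, write $\tilde{x}_1^N := x_1^N\mathbf{B}_N$ for the codeword read in bit-reversal order so that $u_1^N=\tilde{x}_1^N\mathbf{F}_2^{\otimes n}$, and then induct on $n$. The first observation is that, although the QUP pattern is scattered on $x_1^N$, it is a \emph{prefix} on $\tilde{x}_1^N$: Stage~1) zeroes the first $Q$ entries of the working table and Stage~2) applies the bit-reversal permutation to it; since bit reversal is an involution, reading the final table back in bit-reversal order recovers the Stage~1) table, so $\tilde{x}_j$ is punctured exactly for $j=1,\dots,Q$. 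Because the dual trellis is itself a standard polar butterfly network, Lemma~\ref{lemma8} then follows from the inductive claim: \emph{on a length-$2^n$ polar trellis, puncturing the first $Q'$ code bits under the C0 mode forces exactly the first $Q'$ source bits, for every $0\le Q'\le 2^n$.}

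For the induction step I would peel off the butterfly layer adjacent to the code-bit side. Splitting $\tilde{x}_1^N=(\tilde{x}',\tilde{x}'')$ into contiguous halves, the recursion $\mathbf{F}_2^{\otimes n}=\mathbf{F}_2\otimes\mathbf{F}_2^{\otimes(n-1)}$ yields, at each position $i$, a $2\times 2$ kernel with inputs $(\tilde{x}'_i,\tilde{x}''_i)$ and outputs $(a_i,b_i)=(\tilde{x}'_i\oplus\tilde{x}''_i,\ \tilde{x}''_i)$, where $a_1^{N/2}$ is the code-bit side of the sub-trellis emitting $u_1^{N/2}$ (the left sub-tree) and $b_1^{N/2}$ is the code-bit side of the sub-trellis emitting $u_{N/2+1}^{N}$ (the right sub-tree). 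I would then push the punctures through this layer using Lemma~\ref{lemma2} and the C0 reliability rule~(\ref{bound_log_C0}): combining a C0-punctured input with a transmitted one leaves the $a$-output punctured and the $b$-output transmitted, whereas combining two punctured inputs leaves both punctured. If $Q'\le N/2$ the punctured indices are $\tilde{x}'_1,\dots,\tilde{x}'_{Q'}$, so the $a$-sub-trellis inherits the prefix ``first $Q'$ punctured'' and the $b$-sub-trellis inherits nothing; if $Q'>N/2$ the punctured indices are all of $\tilde{x}'$ together with $\tilde{x}''_1,\dots,\tilde{x}''_{Q'-N/2}$, so the $a$-sub-trellis has all its code bits punctured and the $b$-sub-trellis inherits ``first $Q'-N/2$ punctured''. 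In both cases each half again carries a prefix pattern, so the induction hypothesis forces $u_1,\dots,u_{\min(Q',N/2)}$ from the left and $u_{N/2+1},\dots,u_{Q'}$ (when $Q'>N/2$) from the right, i.e. exactly $u_1,\dots,u_{Q'}$. The base case $n=1$ is Lemma~\ref{lemma2}, with Theorem~\ref{theorem3} identifying the forced source bit as $u_1$.

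The step that needs care, and that I would write out in full, is the single-puncture case of the top-layer propagation: when only one of $\tilde{x}'_i,\tilde{x}''_i$ is punctured one must be entitled to attribute the lost bit to $a_i$ (which lives in the left sub-tree) and not to both $a_i$ and $b_i$, and this is exactly what Lemma~\ref{lemma2} provides — the tables $(0,1)$ and $(1,0)$ are equivalent, so a single puncture of a $2\times 2$ kernel can always be normalized to the ``left'' code bit. For the QUP pattern no ambiguity even arises: when $Q\le N/2$ the punctured indices all lie in the left half and are already in this normalized form, and when $Q>N/2$ every position of the left half is doubly punctured. The only remaining bookkeeping is the elementary fact that an initial segment of $\{1,\dots,N\}$, intersected with either contiguous half, is again an initial segment of that half — which is precisely what lets the induction hypothesis apply verbatim to both sub-trellises.
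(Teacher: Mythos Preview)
Your argument is correct. Both you and the paper open by passing to $\tilde{x}_1^N=x_1^N\mathbf{B}_N$ so that $u_1^N=\tilde{x}_1^N\mathbf{F}_2^{\otimes n}$ and the QUP pattern becomes the prefix $\{1,\dots,Q\}$ on $\tilde{x}$; from there the paper simply asserts $\mathcal{D}=\{1,\dots,Q\}$ in a single sentence, while you supply a full induction on $n$ by peeling one butterfly layer at a time and invoking Lemma~\ref{lemma2}. Your route is more explicit and self-contained. A shorter one-shot justification, plausibly what the paper intends, is the lower-triangularity of $\mathbf{F}_2^{\otimes n}$: since $u_j$ is a function only of $\tilde{x}_j,\dots,\tilde{x}_N$, every $u_j$ with $j>Q$ depends solely on unpunctured bits and hence cannot lie in $\mathcal{D}$; combined with $|\mathcal{D}|=Q$ this gives $\mathcal{D}=\{1,\dots,Q\}$ without recursion.
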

\begin{IEEEproof}
From the operation of QUP algorithm, the puncturing set of code bits $\mathcal{B}$ is generated after bit-reversal permutation. Recall that the source vector can be written by $u_1^N=x_1^N {\mathbf{B}_N}{\mathbf{F}_2^{\otimes n}}$, so the source vector $u_1^N$ is punctured by the natural order, that is, $\mathcal{D}=\left\{1,2,\cdots,Q\right\}$. Equally, the leaf nodes set $\left\{v_{n,m}\left|m\in\mathcal{D}\right.\right\}$ on the code tree and the corresponding subtrees are pruned. Meanwhile, each of the rest subtrees has a different depth.
\end{IEEEproof}
\subsection{Optimal Puncturing Table}
Theoretically, the optimal puncturing table of RCPP codes can be optimized by a brute-force search of the distance spectra (for ML decoding) or BLER bounds (for SC decoding). However, the exhausted search for all the puncturing patterns is difficult to be realized. We are, therefore, concerned with the puncturing scheme to optimize the spectrum distances.

\begin{lemma}\label{lemma9}
For a subtree after any puncturing scheme under the C0 mode, suppose $v_{l,m}$ and $v_{l-1,\left\lceil \frac{m}{2} \right\rceil}$ are a root node and its predecessor respectively. Let $\zeta_n$ denote a pruned path from the original root $v_{0,1}$ to a punctured leaf node and containing the leftmost path of the subtree $\mathcal{T}\left(v_{l-1,\left\lceil \frac{m}{2} \right\rceil}\right)$. So the reliability of the root node $v_{l,m}$ can be addressed by
\begin{equation}
B\left(v_{l,m}\right)=a_0+f_H\left(\zeta_n\right)-1-(n-l)
\end{equation}
where $f_H\left(\zeta_n\right)$ is the complemental path weight.
\end{lemma}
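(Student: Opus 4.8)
The plan is to first rewrite the target as a statement about the single path leading to $v_{l,m}$, and then to prove that statement by an induction that walks down the code tree along this path. For the reduction, observe that since $\mathcal{T}(v_{l,m})$ survives the puncturing intact while its predecessor $v_{l-1,\lceil m/2\rceil}$ does not, $v_{l,m}$ must be the \emph{right} descendant of $v_{l-1,\lceil m/2\rceil}$: if it were the left one, the leftmost path of $\mathcal{T}(v_{l-1,\lceil m/2\rceil})$, which by hypothesis $\zeta_n$ prolongs to a punctured leaf, would pass through $\mathcal{T}(v_{l,m})$ and hence place a punctured leaf inside it, a contradiction. Thus the branch into $v_{l,m}$ carries the label $1$; writing $\omega_l=(c_1,\dots,c_{l-1},1)$ for the path from the root to $v_{l,m}$, the punctured path is $\zeta_n=(c_1,\dots,c_{l-1},0,\dots,0)$ (its first $l-1$ branches reach the predecessor, the remaining $n-l+1$ are its left spine), so $f_H(\zeta_n)=f_H(\omega_l)+(n-l+1)$. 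Substituting this into the asserted identity shows that the lemma is equivalent to $B(v_{l,m})=a_0+f_H(\omega_l)$, i.e.\ the reliability of $v_{l,m}$ is $a_0$ raised by one for every $0$-branch on $\omega_l$ and left unchanged by every $1$-branch.

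To prove this I would induct on depth, maintaining the invariant $B(v_{j,\cdot})=a_0+f_H(\omega_j)$ for every node on $\omega_l$; the base case $j=0$ is $B(v_{0,1})=a_0$. The inductive step is a case analysis on the branch label $c_j$ and on which update rule — the ordinary polarization recursion (\ref{Bhattaharyya_bound_log}), the punctured recursion (\ref{bound_log_C0}), or the worst-case rule of Lemma \ref{C0_mode_UB_set} — is in force at that merge. The controlling structural fact is the ``left-spine'' property: by Lemma \ref{lemma2} and a recursive application of Lemma \ref{lemma5}, the punctured leaves of every subtree form an initial block of its leftmost leaves, so that any punctured sub-channel entering a merge does so on the $b=0$ side. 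When $c_j=1$, the $b=0$ sibling at that node therefore lies in the punctured region, so either (\ref{bound_log_C0}) with $b=1$ applies or, when that sibling is only partially punctured, the maximum rule of Lemma \ref{C0_mode_UB_set} does, and in both situations the value turns out to equal the inherited one, $B(v_{j,\cdot})=B(v_{j-1,\cdot})$, which (since $c_j=1$ contributes no $0$ to $f_H$) preserves the invariant. When $c_j=0$, the $b=0$ child through which $\omega_l$ runs cannot itself be the fully-punctured block — it has the intact subtree $\mathcal{T}(v_{l,m})$ below it — so it is not annihilated and the merge is the ordinary one (\ref{Bhattaharyya_bound_log}), which adds $1$ and again advances the invariant. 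Taking $j=l$ yields $B(v_{l,m})=a_0+f_H(\omega_l)$, and with the reduction this is the lemma.

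The main obstacle is precisely this case analysis: certifying, level by level along a punctured path, which of (\ref{Bhattaharyya_bound_log}), (\ref{bound_log_C0}) and Lemma \ref{C0_mode_UB_set} governs the update. This requires (i) promoting Lemma \ref{lemma5} from pairs of leaves to arbitrary subtrees (the left-spine block property), and (ii) verifying that the maximum taken in Lemma \ref{C0_mode_UB_set} never exceeds the value predicted by the invariant — equivalently, that at a node with a punctured branch the ``worst'' surviving copy is the clean one, so that the amplification (\ref{Bhattaharyya_bound_log}) would otherwise contribute on a $1$-branch is genuinely suppressed. Theorem \ref{theorem1} and (\ref{reliability_approx_node}) can be invoked throughout to keep the surviving reliabilities in the canonical $2^{d_H}a_0+f_H$ form inside each intact subtree during this accounting.
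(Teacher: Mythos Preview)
Your reduction to $B(v_{l,m})=a_0+f_H(\omega_l)$ via the decomposition $\zeta_n=(c_1,\dots,c_{l-1},0,\dots,0)$ is exactly what the paper does (it writes $\zeta_n=(\psi_{l-1},\chi_{n-l+1})$ with $\chi_{n-l+1}$ all-zero, observes the branch into $v_{l,m}$ is a $1$, and obtains $f_H(\theta_l)=f_H(\zeta_n)-1-(n-l)$). For the core claim $B(v_{l,m})=a_0+f_H(\theta_l)$ the paper argues by recursive peeling---prune the leftmost path first, invoke Theorem~\ref{theorem4} so each new subtree root inherits its predecessor's value, then treat each surviving subtree as a fresh perfect tree and recurse---whereas you induct directly along $\omega_l$ with a case split on which of (\ref{Bhattaharyya_bound_log}), (\ref{bound_log_C0}), Lemma~\ref{C0_mode_UB_set} applies; these are two packagings of the same computation, and your obstacles (i)--(ii) are precisely the structural facts the paper's recursion absorbs into its appeals to Lemma~\ref{lemma5} and Theorem~\ref{theorem4}.
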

\begin{IEEEproof}
Given the puncturing set of source bits $\mathcal{D}$ for an arbitrary puncturing, the leaf nodes pertaining to this set can be bit-by-bit punctured on the original code tree. Obviously, the source bit $u_1$ should be punctured firstly and the reliability metrics of the root nodes on the resulted subtrees can be inherited from the predecessors on the pruned path $\omega_n^{(1)}$ by Theorem \ref{theorem4}. Since the pruned path $\omega_n^{(1)}$ is an all-zero path, the corresponding path weight is $0$ and only the complemental path weight affects the calculation of the reliability. Assuming that one root node is $v_{l,2}$ and the corresponding partial path (from the original root to this node) is $\phi_l$, the reliability of this root can be expressed as
$B\left(v_{l,2}\right)=B\left(v_{l-1,1}\right)=a_0+f_H \left( \phi_l\right)$.

Furthermore, the rest punctured source bits can be pruned from these subtrees. Each subtree $\mathcal{T}\left(v_{l,2}\right)$ can be regarded as a perfect code tree. According to Lemma \ref{lemma5}, the leftmost path on these trees will be punctured and a group of new subtrees are generated. Then, for a root node $v_{l,m}$ on a final subtree, assuming the corresponding path $\theta_l$ is a path from the original root to this root node, by Theorem \ref{theorem4}, the reliability can be written as $B\left(v_{l,m}\right)=a_0+f_H\left(\theta_l\right)$. Generally, the pruned path $\zeta_n$ can be decomposed into two partial paths, that is, $\zeta_{n}=\left(\psi_{l-1},\chi_{n-l+1}\right)$, where the partial path $\psi_{l-1}$ is a path from the original root $v_{0,1}$ to the node $v_{l-1,\left\lceil \frac{m}{2} \right\rceil}$ and the partial path $\chi_{n-l+1}$ is a path from the node $v_{l-1,\left\lceil \frac{m}{2} \right\rceil}$ to the pruned leaf node. By Lemma \ref{lemma5}, the path $\chi_{n-l+1}$ is an all-zero path and the branch between the node $v_{l-1,\left\lceil \frac{m}{2} \right\rceil}$ and the node $v_{l,m}$ is taken the value $1$. Hence, we have $f_H\left(\theta_l\right)=f_H\left(\psi_{l-1}\right)=f_H\left(\zeta_n\right)-f_H\left(\chi_{n-l+1}\right)=f_H\left(\zeta_n\right)-1-(n-l)$.
\end{IEEEproof}

\begin{theorem}\label{theorem6}
Given a RCPP code with a length $M=N-Q$, for the C0 mode, the puncturing table generated by the QUP algorithm will maximize the spectrum distance SD1.
\end{theorem}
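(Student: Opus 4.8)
The plan is to translate the statement about SD1 into a purely combinatorial optimization over the ``shape'' that puncturing leaves on the code tree, and then to show that the quasi-uniform pattern realizes the optimum of that optimization. First I would argue, by applying Lemma~\ref{lemma5} recursively down all $n$ polarization levels (as is done implicitly in the proof of Lemma~\ref{lemma9}, using Lemma~\ref{lemma2} at each split), that any C0 puncturing of $Q$ bits acts on the code tree by repeatedly pruning the leftmost path of a currently perfect subtree; hence the residual object is always a disjoint union of perfect subtrees $\mathcal{T}(v_{l_1,m_1}),\dots,\mathcal{T}(v_{l_t,m_t})$ whose depths $d_j=n-l_j$ satisfy $\sum_{j=1}^t 2^{d_j}=M$. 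Inside a perfect subtree of depth $d$ there are exactly $\binom{d}{k}$ paths of weight $k$, so $H_M^{(k)}=\sum_j\binom{d_j}{k}$, and since $\sum_k k\binom{d}{k}=d\,2^{d-1}$,
\[
M\,d_{avg}=\sum_{k=0}^{n} k\,H_M^{(k)}=\sum_{j=1}^{t} d_j 2^{d_j-1}=:\Phi\bigl(d_1,\dots,d_t\bigr).
\]
Thus maximizing SD1 over C0 puncturings down to length $M$ is equivalent to maximizing $\Phi$ over depth multisets constrained by $\sum_j 2^{d_j}=M$.

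Next I would bound $\Phi$ over \emph{every} multiset with $\sum_j 2^{d_j}=M$ by a merging exchange: if two parts share a common depth $d$, replace them by a single part of depth $d+1$; this leaves $\sum_j 2^{d_j}$ unchanged (since $2\cdot 2^{d}=2^{d+1}$) while changing $\Phi$ by $(d+1)2^{d}-2\,d\,2^{d-1}=2^{d}>0$, i.e. strictly increasing it. Iterating removes all repeated depths, so the maximizer has pairwise distinct depths; together with $\sum_j 2^{d_j}=M$ this forces the depth set to be exactly the positions of the $1$-bits in the binary expansion of $M$, yielding the optimal value $\Phi^\star=\sum_{b:\,M\text{ has a }1\text{ in bit }b} b\,2^{b-1}$. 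This is an upper bound on $M\,d_{avg}$ for every C0 puncturing that keeps $M$ code bits.

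Then I would check that QUP attains $\Phi^\star$. By Lemma~\ref{lemma8}, QUP prunes exactly the $Q$ leftmost leaves; tracing the decomposition from the proof of Theorem~\ref{theorem4} shows that the first pruned leaf splits the depth-$n$ tree into one perfect subtree of each depth $0,1,\dots,n-1$, arranged from left to right in increasing depth, and each subsequent leftmost-leaf deletion either removes the current depth-$0$ subtree or splits the current smallest-depth subtree into one subtree of each strictly smaller depth. This evolution of the depth multiset is precisely repeated binary subtraction, so after $Q$ prunings the residual subtrees have pairwise distinct depths equal to the $1$-bit positions of $M=N-Q$. Hence $\Phi$ for the QUP forest equals $\Phi^\star$, so QUP maximizes $d_{avg}=$ SD1. (The same bookkeeping also delivers the JSD optimality and lets one enumerate the code-bit patterns that induce this residual forest, i.e. the ``equivalent'' tables.)

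The substantive step is the first one: rigorously proving that \emph{every} admissible C0 puncturing drives the code tree into a union of perfect subtrees with $\sum_j 2^{d_j}=M$, i.e. that Lemma~\ref{lemma5} really does propagate up all $n$ polarization levels and that no ``ragged'' residual tree can arise. I expect the clean route is an induction on $n$: at the top split, Lemma~\ref{lemma2} forces whatever capacity-zero pattern appears in one half to be (at least) mirrored in the other, which keeps the recursion inside the perfect-subtree class; the optimization in the second paragraph and the identification of QUP's residual forest in the third are then essentially routine arithmetic.
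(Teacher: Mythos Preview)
Your proposal is correct and follows essentially the same route as the paper: decompose the residual code tree under any C0 puncturing into a disjoint union of perfect subtrees whose depths $\{d_j\}$ satisfy $\sum_j 2^{d_j}=M$, write $M\cdot d_{avg}=\sum_j d_j 2^{d_j-1}$, and then use a binary-carry/merging argument to show the maximum is attained precisely when the $d_j$ are the distinct bit positions of $M$, which is exactly the forest QUP produces by Lemma~\ref{lemma8}. Your exchange step (merging two depth-$d$ parts into one depth-$(d{+}1)$ part gains $2^d$) is the same carry observation the paper makes, just stated a bit more cleanly; and the structural point you flag as ``substantive'' is indeed what the paper leans on via Lemma~\ref{lemma5} and the argument in Lemma~\ref{lemma9}.
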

\begin{IEEEproof}
For an arbitrary puncturing scheme under the C0 mode, the original code tree can be decomposed into a group of subtrees. Assuming that each subtree has a depth $l_j$ and $2^{l_j}$ leaf nodes, the depth satisfies $0 \le l_j\le n-1$ by Theorem \ref{theorem4}. So the code length $M$ can be expressed as
\begin{equation}\label{equ_code_length}
\sum\limits_{l_j} {2^{l_j}}{\alpha_{l_j}}=M
\end{equation}
where $\alpha_{l_j}=0,1,2,\cdots$ stands for the number of subtrees with the depth $l_j$.
Define a set $\mathcal{E}=\left\{l_j\left|\alpha_{l_j}\ne 0\right.\right\}$, whose elements are arranged in the ascending order, that is, $l_1\le l_2 \le \ldots \le l_{|\mathcal{E}|}$.
So we can calculate the SD1 over all subtrees by Lemma \ref{lemma9}, yielding
\begin{equation}\label{SD1_C0_equ}
\begin{aligned}
\mathbb{E}\left[d_H(\omega_n)\right]&=\sum\limits_{j = 1}^{|\mathcal{E}|} \sum\limits_{k = 0}^{l_j} {\frac{1}{M}}\binom{l_j}{k}k\alpha_{l_j}\\
                                                    &=\sum\limits_{j = 1}^{|\mathcal{E}|} \sum\limits_{k = 0}^{l_j} {\frac{2^{l_j}}{M}}{\frac{1}{2^{l_j}}}\binom{l_j}{k}k\alpha_{l_j}\\
                                                    &=\sum\limits_{j = 1}^{|\mathcal{E}|} {\frac{2^{l_j}}{M}}\alpha_{l_j} \sum\limits_{k = 0}^{l_j}{\frac{1}{2^{l_j}}}\binom{l_j}{k}k
                                                    =\sum\limits_{j = 1}^{|\mathcal{E}|} {\frac{2^{{l_j}-1}}{M}}{l_j}\alpha_{l_j}. 
\end{aligned}
\end{equation}
Just as the proof of Corollary \ref{corollary2}, the last equality is derived from the mean of binomial distribution.

For any puncturing scheme, the number $\alpha_{l_j}$ can be an arbitrary integer, such as $\alpha_{l_j}=0,1,2,\cdots$. We can treat the structure of SD1 in (\ref{SD1_C0_equ}) as a representation and carry of binary number from low-order to high-order. Assuming two consecutive orders $l_{j-1}$ and $l_j$ ($l_j\ge l_{j-1}+1$) and the corresponding digits $\alpha_{l_{j-1}}=2$ and $\alpha_{l_{j}}=1$, it is easy to assert
\begin{equation}
{(l_{j-1})}{2^{{l_{j-1}}-1}} {\alpha_{l_{j-1}}(=2)} < {l_j}{2^{l_j}}{\alpha_{l_j}(=1)}.
\end{equation}
Hence, in order to maximize SD1, the digits $\alpha_{l_j}$ should be limited to $0$ or $1$. This means that $(\alpha_{n-1},\cdots, \alpha_{0})$ is the binary expansion of the code length $M$.

On the other hand, according to lemma \ref{lemma8}, the digits $\alpha_{l_j}$ corresponding the QUP algorithm are taken the values $0$ or $1$. Therefore, the puncturing table of QUP algorithm can maximize the SD1.
\end{IEEEproof}

\begin{theorem}\label{theorem7}
Given a RCPP code with a length $M=N-Q$, for the C0 mode, the puncturing table generated by the QUP algorithm will maximize the spectrum distance JSD.
\end{theorem}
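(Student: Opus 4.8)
The plan is to piggyback on the proof of Theorem \ref{theorem6}, since JSD $= d_{avg}+\lambda_{avg}$ and we have just shown the QUP table maximizes $d_{avg}$ (SD1). It therefore suffices to show that the same table also maximizes $\lambda_{avg}$ (SD0), and then invoke the elementary fact that a point maximizing both summands simultaneously maximizes their sum. First I would repeat the subtree decomposition argument: for an arbitrary C0 puncturing scheme the original code tree splits into subtrees of depths $l_j$, with $\sum_{l_j} 2^{l_j}\alpha_{l_j}=M$ and $0\le l_j\le n-1$ by Theorem \ref{theorem4}. The subtlety is that each remaining subtree $\mathcal{T}(v_{l_j,m})$ is a \emph{perfect} binary tree of depth $l_j$, so inside it the complemental path weights are binomially distributed, but the root $v_{l_j,m}$ already carries a complemental-weight offset inherited from the pruned all-zero path $\zeta_n$ (Lemma \ref{lemma9} gives the reliability in terms of $f_H(\zeta_n)$). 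So the complemental path weight of a leaf in that subtree is (offset from the root) $+$ (local complemental weight $r$, $0\le r\le l_j$).

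Next I would write $\lambda_{avg}=\mathbb{E}[f_H(\omega_n)]$ as a sum over subtrees, analogous to (\ref{SD1_C0_equ}). The local contribution from a depth-$l_j$ subtree is $\sum_{r=0}^{l_j}\frac{1}{M}\binom{l_j}{r}(\text{offset}_j+r)\alpha_{l_j}$, which by the mean of the binomial distribution equals $\frac{2^{l_j}}{M}\alpha_{l_j}\,\text{offset}_j + \frac{2^{l_j-1}}{M}l_j\alpha_{l_j}$. The offset itself must be controlled: using Lemma \ref{lemma9}, the root $v_{l_j,m}$ sits at depth $l_j$ and was reached from the original root by a path $\psi_{l_j}$ that is all-zero except possibly for the final branch, so $\text{offset}_j = f_H(\psi_{l_j}) - (n-l_j)$ is itself essentially $l_j$ minus a bounded quantity — in fact for the QUP decomposition the $j$-th remaining subtree (reading depths in descending order as in Lemma \ref{lemma8}) is rooted at a node reached by a path whose complemental weight is exactly $n-l_j$, making the offset contribution vanish cleanly. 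I would then argue, exactly as in Theorem \ref{theorem6}, that treating $(\alpha_{n-1},\dots,\alpha_0)$ as digits, any configuration with some $\alpha_{l_j}\ge 2$ can be strictly improved by carrying to a higher order: splitting $M$ into fewer, deeper perfect subtrees increases both the local $\frac{2^{l_j-1}}{M}l_j$ terms and the offset terms, because deeper subtrees both have larger internal mean complemental weight and are rooted deeper on the tree. Hence SD0 is also maximized precisely when $(\alpha_{n-1},\dots,\alpha_0)$ is the binary expansion of $M$, which is the QUP configuration by Lemma \ref{lemma8}.

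Finally, since the QUP table simultaneously attains the maximum of $d_{avg}$ (Theorem \ref{theorem6}) and of $\lambda_{avg}$ (the argument above), it attains the maximum of JSD $=d_{avg}+\lambda_{avg}$: for any other scheme each summand is no larger, so the sum is no larger. The main obstacle I anticipate is the bookkeeping of the root-offset terms $\text{offset}_j$ in the SD0 expansion — unlike SD1, where the pruned all-zero path contributes zero path weight so the root offsets drop out entirely, here the complemental weights of the pruned paths do enter, and one must verify (via Lemma \ref{lemma5} and Lemma \ref{lemma9}) that the QUP decomposition makes these offsets add up favorably rather than against the carry argument. A cleaner alternative, which I would fall back on if the offset accounting gets messy, is to exploit the symmetry of the construction: show directly that the QUP decomposition into perfect subtrees of depths given by the binary expansion of $M$ also makes the \emph{total} polar spectrum $\{H_M^{(k)}+C_M^{(k)}\}$ dominate that of any competitor in the appropriate majorization sense, from which the JSD inequality follows in one stroke without separating the two summands.
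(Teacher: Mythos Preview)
Your plan hinges on the claim that QUP also maximizes SD0 under the C0 mode, and that claim is false. Take $N=8$, $Q=3$, $M=5$: QUP prunes $\mathcal{D}=\{1,2,3\}$, leaving subtrees of depths $\{0,2\}$ with $n_{1,1}=1$, $n_{2,1}=2$, and one computes $\lambda_{avg}=1$. The alternative C0 scheme $\mathcal{D}=\{1,5,7\}$ (each punctured bit is the leftmost leaf of a surviving subtree, so Lemma~\ref{lemma5} is respected) leaves subtrees of depths $\{0,1,0,0\}$ with $n$-values $2,2,1,0$, giving $\lambda_{avg}=6/5>1$. So a scheme with strictly smaller SD1 can have strictly larger SD0; maximizing the two summands separately is not possible, and your ``both maxima $\Rightarrow$ sum maximum'' step collapses. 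Your own anticipated obstacle---the root-offset terms $n_{j,s}-l_j$ in the SD0 expansion---is exactly where the argument fails, and it does not resolve the way you hope: for QUP the offsets do \emph{not} vanish (e.g., the depth-$0$ subtree above has offset $1$), and competing schemes can inflate these offsets at the cost of SD1.

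The paper avoids this trap by never trying to maximize SD0 on its own. It writes SD0 with the offsets explicitly, obtains
\[
\lambda_{avg}=\sum_{j}\frac{2^{l_j}}{M}\sum_{s}n_{j,s}\;-\;d_{avg},
\]
so that $d_{avg}$ cancels in the sum and JSD $=\sum_{j}\frac{2^{l_j}}{M}\sum_{s}n_{j,s}$ stands alone. The carry argument is then run directly on this single expression: merging two depth-$l_j$ subtrees into one depth-$l_{j+1}\ge l_j+1$ subtree replaces $2^{l_j}(n_{j,1}+n_{j,2})$ by $2^{l_{j+1}}\max\{n_{j,1},n_{j,2}\}$, which is at least as large. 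That is the key step you are missing; your fallback majorization idea is too vague to substitute for it.
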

\begin{IEEEproof}
Like the proof of Theorem \ref{theorem6}, for an arbitrary puncturing scheme under the C0 mode, the code length can be expanded by (\ref{equ_code_length}). For the subtrees with the same depth $l_j$, define $\mathcal{G}_j=\left\{\psi_{j,s}\left|s=1,2,\cdots,\alpha_{l_j}\right.\right\}$ as a pruned path set, where $\psi_{j,s}$ is the $s$-th pruned path containing a predecessor of one subtree. Let $n_{j,s}=f_H\left(\psi_{j,s}\right)-1$.

Thus, according to Lemma \ref{lemma9}, the spectrum distance SD0 can be calculated by averaging over all subtrees to yield
\begin{equation}\label{SD0_C0_equ}
\begin{aligned}
&\mathbb{E}\left[f_H(\omega_n)\right]
                  =\sum\limits_{j = 1}^{|\mathcal{E}|} \sum\limits_{s = 1}^{\alpha_{l_j}} \sum\limits_{r = 0}^{l_j} {\frac{1}{M}}\binom{l_j}{r}\left(r+n_{j,s}-l_j\right)\\
                  &=\sum\limits_{j = 1}^{|\mathcal{E}|} \sum\limits_{s = 1}^{\alpha_{l_j}} \sum\limits_{r = 0}^{l_j} {\frac{1}{M}}\binom{l_j}{r} n_{j,s}
                     -\sum\limits_{j = 1}^{|\mathcal{E}|} {\frac{2^{{l_j}-1}}{M}}{l_j}\alpha_{l_j}\\
                  &=\sum\limits_{j = 1}^{|\mathcal{E}|} {\frac{2^{l_j}}{M}}\sum\limits_{s = 1}^{\alpha_{l_j}}n_{j,s}-\mathbb{E}\left[d_H(\omega_n)\right] 
\end{aligned}
\end{equation}
where the second line is derived from the mean of the binomial distribution. The above expression enables us to explicitly represent the JSD as
\begin{equation}\label{JSD_C0_equ}
\mathbb{E}\left[d_H(\omega_n)\right]+\mathbb{E}\left[f_H(\omega_n)\right]=\sum\limits_{j = 1}^{|\mathcal{E}|} {\frac{2^{l_j}}{M}}\sum\limits_{s = 1}^{\alpha_{l_j}}n_{j,s}.
\end{equation}

We also treat the structure of JSD in (\ref{JSD_C0_equ}) as a process of binary carry computation from low-order to high-order. Assuming two consecutive orders $l_{j}$ and $l_{j+1}$, if $\alpha_{l_j}=2$ for an arbitrary puncturing, there are two digits, $n_{j,1}$ and $n_{j,2}$, for two subtrees. On the contrary, if QUP scheme is applied, there is only one subtree and the corresponding digit is $n'_{j+1,1}=\max\left\{n_{j,1},n_{j,2}\right\}$. Obviously, we have
\begin{equation}\label{equ_JSD_C0_iter}
2^{l_j}\left(n_{j,1}+n_{j,2}\right) \le 2^{l_{j+1}}n'_{j+1,1}\left(=\max\left\{n_{j,1},n_{j,2}\right\}\right).
\end{equation}
Therefore, the puncturing table generated by QUP algorithm can maximize the JSD by a recursion of (\ref{equ_JSD_C0_iter}).
\end{IEEEproof}

For puncturing with the QUP algorithm, the code length $M$ is expressible as
\begin{equation}\label{equ_code_length_QUP}
\sum\limits_{j = 1}^{|\mathcal{F}|} {2^{l_j}}=M
\end{equation}
where $\mathcal{F}=\left\{l_j\left|\alpha_{l_j}=1\right.\right\}$ satisfies $l_1\le l_2 \le \cdots \le l_{|\mathcal{F}|}$.

\begin{corollary}
The PWEF1 of a RCPP code constructed by QUP algorithm is $\mathcal{H} (X)=\sum\limits_{j=1}^{|\mathcal{F}|} (1+X)^{l_j}
                    =\sum\limits_{j=1}^{|\mathcal{F}|}\sum\limits_{k=0}^{l_j} \binom{l_j}{k}X^{k}$.
\end{corollary}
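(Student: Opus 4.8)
The plan is to read the polar spectrum PS1 directly off the subtree decomposition that the QUP algorithm induces on the code tree. First I would invoke Lemma~\ref{lemma8}: under the QUP scheme the $Q$ leftmost leaves of the depth-$n$ code tree are pruned, so the remaining tree breaks into a disjoint union of perfect subtrees, one of depth $l_j$ for each $j=1,\dots,|\mathcal{F}|$, where $\mathcal{F}=\{l_j:\alpha_{l_j}=1\}$ and $\sum_{j=1}^{|\mathcal{F}|}2^{l_j}=M$ by (\ref{equ_code_length_QUP}) (this is exactly the binary-expansion structure already established in the proof of Theorem~\ref{theorem6}). Each such subtree, rooted at a node $v_{l,m}$ with $n-l=l_j$, is a full binary tree, so it has precisely $2^{l_j}$ leaves and, for every $k$ with $0\le k\le l_j$, exactly $\binom{l_j}{k}$ of those leaves are reached from the subtree root by an intra-subtree path with $k$ branches labelled $1$.

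Next I would argue that the path weight indexing PS1 for a leaf lying in a depth-$l_j$ subtree equals the number $k$ of doubling branches inside that subtree. Indeed, by Lemma~\ref{lemma9} the subtree root $v_{l,m}$ carries an inherited log-domain reliability, which we treat as the initial value of a fresh $l_j$-step recursion of the form (\ref{Bhattaharyya_bound_log}); applying Theorem~\ref{theorem1} to this recursion, a leaf reached with $k$ doublings among the $l_j$ steps has its bound governed by $2^{k}$ times the root value plus $l_j-k$, so its path weight is $k$ and its complemental path weight is $l_j-k$. Hence the stretch of the path from the global root $v_{0,1}$ down to the subtree root contributes only to the inherited root reliability, not to the path weight, which depends solely on the intra-subtree label string of length $l_j$.

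Combining these two observations, the number of leaves of path weight $k$ after QUP puncturing is $H_M^{(k)}=\sum_{j=1}^{|\mathcal{F}|}\binom{l_j}{k}$ (terms with $l_j<k$ vanishing automatically), and therefore
\begin{equation}
\mathcal{H}(X)=\sum_{k=0}^{n}H_M^{(k)}X^{k}
=\sum_{j=1}^{|\mathcal{F}|}\sum_{k=0}^{l_j}\binom{l_j}{k}X^{k}
=\sum_{j=1}^{|\mathcal{F}|}(1+X)^{l_j},
\end{equation}
where the last step is the binomial theorem; as a consistency check, $\mathcal{H}(1)=\sum_{j}2^{l_j}=M$. The computation carries no real difficulty. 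The only point needing care is the bookkeeping in the previous paragraph, namely keeping the path from $v_{0,1}$ to each subtree root out of the path-weight count and anchoring the recursion at the subtree root exactly as in Lemma~\ref{lemma9}, so that the per-subtree contribution is a clean $(1+X)^{l_j}$.
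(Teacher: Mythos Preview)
Your proposal is correct and matches the paper's approach. The paper states this corollary without a separate proof because the subtree decomposition you describe is exactly what is established in the proof of Theorem~\ref{theorem6}: there the SD1 is computed as $\sum_{j}\sum_{k=0}^{l_j}\frac{1}{M}\binom{l_j}{k}k\,\alpha_{l_j}$, which already encodes the fact that $H_M^{(k)}=\sum_{j=1}^{|\mathcal{F}|}\binom{l_j}{k}$ once $\alpha_{l_j}\in\{0,1\}$; your argument simply makes this explicit and then applies the binomial theorem.
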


\begin{theorem}\label{theorem8}
Suppose a RCPP code with the length $M=N-Q$, for the C0 mode, the SD1 corresponding to QUP satisfies $\frac{n-2}{2}\le \mathbb{E}\left[d_H(\omega_n)\right]\le \frac{n-1}{2}$.
\end{theorem}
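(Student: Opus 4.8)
The plan is to start from the closed form for the SD1 of a QUP-constructed code that is already derived inside the proof of Theorem~\ref{theorem6}. Let $\mathcal{F}=\{l_j:\alpha_{l_j}=1\}$ be the set of exponents occurring in the binary expansion $M=\sum_{j=1}^{|\mathcal{F}|}2^{l_j}$ of equation~(\ref{equ_code_length_QUP}). Specialising (\ref{SD1_C0_equ}) to $\alpha_{l_j}=1$ (equivalently, applying the SD1 definition to the PWEF1 of the corollary preceding this theorem) gives
\[
\mathbb{E}\left[d_H(\omega_n)\right]=\frac{1}{2M}\sum_{j=1}^{|\mathcal{F}|}2^{l_j}l_j .
\]
Before bounding this, I would record two structural facts about $\mathcal{F}$. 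First, by Theorem~\ref{theorem4} every residual subtree has depth at most $n-1$, hence $l_j\le n-1$ for all $j$. Second, since the code is genuinely punctured ($Q\ge 1$, so $2^{n-1}<M<2^{n}$), the most significant bit of $M$ forces $n-1\in\mathcal{F}$: otherwise $M\le\sum_{l=0}^{n-2}2^{l}=2^{n-1}-1$, a contradiction. In particular $|\mathcal{F}|\ge 2$, so $\mathcal{F}$ contains at least one index $l_j\le n-2$.

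The upper bound is then immediate. Writing
\[
\frac{n-1}{2}-\mathbb{E}\left[d_H(\omega_n)\right]=\frac{1}{2M}\sum_{j=1}^{|\mathcal{F}|}2^{l_j}\bigl(n-1-l_j\bigr),
\]
every summand is non-negative because $l_j\le n-1$, which yields $\mathbb{E}[d_H(\omega_n)]\le\frac{n-1}{2}$ (in fact strictly, since the index $l_j\le n-2$ contributes a positive term).

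For the lower bound I would write the gap as
\[
\mathbb{E}\left[d_H(\omega_n)\right]-\frac{n-2}{2}=\frac{1}{2M}\Bigl[\,2^{n-1}+\sum_{l_j\le n-2}2^{l_j}\bigl(l_j-(n-2)\bigr)\Bigr],
\]
after peeling off the level-$(n-1)$ contribution $2^{n-1}\bigl((n-1)-(n-2)\bigr)=2^{n-1}$. Each remaining summand is non-positive, so the bracket is minimised by taking $\mathcal{F}$ to also contain all of $\{0,1,\dots,n-2\}$, i.e.\ $M=2^{n}-1$; hence the bracket is at least $2^{n-1}+\sum_{l=0}^{n-2}2^{l}\bigl(l-(n-2)\bigr)$. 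Evaluating the elementary sum via $\sum_{l=0}^{m}l\,2^{l}=(m-1)2^{m+1}+2$ gives $\sum_{l=0}^{n-2}2^{l}\bigl(l-(n-2)\bigr)=n-2^{n-1}$, so the bracket is at least $2^{n-1}+(n-2^{n-1})=n\ge 0$, and therefore $\mathbb{E}[d_H(\omega_n)]\ge\frac{n-2}{2}$ (with slack $n/(2M)$). The only step that is not purely mechanical is this lower bound, and within it the point needing care is the ``worst case'' reduction: one must verify that adjoining any extra power $2^{l}$ with $l\le n-2$ to $\mathcal{F}$ can only decrease the bracket (its coefficient $l-(n-2)$ is $\le 0$), which legitimises replacing $\mathcal{F}\setminus\{n-1\}$ by the full set $\{0,\dots,n-2\}$; once that is granted, the closed-form geometric summation finishes the argument.
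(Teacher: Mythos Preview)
Your proof is correct and follows essentially the same route as the paper's: both start from the QUP formula $\mathbb{E}[d_H(\omega_n)]=\frac{1}{2M}\sum_{j}2^{l_j}l_j$, obtain the upper bound directly from $l_j\le n-1$, and for the lower bound isolate the $l_j=n-1$ contribution $2^{n-1}$, bound the remaining non-positive part by extending the sum over all $l\in\{0,\dots,n-2\}$, and evaluate the resulting arithmetico-geometric series to get the slack $n/(2M)$. The only cosmetic difference is sign convention (you minimise the bracket, the paper maximises $\sum_{k}(n-2-k)2^{k}$), and your extra remark that $|\mathcal{F}|\ge2$ forces strictness of the upper bound is not needed for the stated non-strict inequality.
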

\begin{IEEEproof}
First, we prove the right-side inequality. Due to $l_j \le n-1$, we have
\begin{equation}
\begin{aligned}
\sum\limits_{j = 1}^{|\mathcal{F}|} {\frac{2^{{l_j}-1}}{M}}{l_j} &\le \sum\limits_{j = 1}^{|\mathcal{F}|} {\frac{2^{{l_j}-1}}{M}}{(n-1)}\\
                                                                                                 &= \frac{n-1}{2} \sum\limits_{j = 1}^{|\mathcal{F}|} {\frac{2^{{l_j}}}{M}}=\frac{n-1}{2}.
\end{aligned}
\end{equation}
Next, let $S_0=\sum\limits_{j = 1}^{|\mathcal{F}|} {\frac{2^{{l_j}-1}}{M}}{l_j}=\sum\limits_{j = 1}^{|\mathcal{F}|-1} {\frac{2^{l_j}}{2M}}{l_j}+\frac{(n-1)2^{n-1}}{2M}$ and $S_1=\sum\limits_{j = 1}^{|\mathcal{F}|} {\frac{(n-2)2^{l_j}}{2M}}=\frac{n-2}{2}$, we need to prove $S_0>S_1$. Due to $l_{|\mathcal{F}|}=n-1$, we have
\begin{equation}
\begin{aligned}
S_0-S_1&=\frac{1}{2M}\left[2^{n-1}-\sum\limits_{j = 1}^{|\mathcal{F}|-1} \left(n-2-l_j\right)2^{l_j}\right]\\
           &\ge\frac{1}{2M}\left[2^{n-1}-\sum\limits_{k = 0}^{n-2} \left(n-2-k\right)2^k\right]\\
           &\overset{(1)}{=}\frac{1}{2M}\left[2^{n-1}-\left(2^{n-1}-n\right)\right]=\frac{n}{2M}>0
\end{aligned}
\end{equation}
where the equality (1) is derived from the summation of arithmetico-geometric sequence \cite{Book_Riley}.
\end{IEEEproof}

\begin{corollary}\label{corollary8}
For the C0 mode, the JSD corresponding to QUP satisfies ${n-2} \le d_{avg}+\lambda_{avg} \le {n-1}$.
\end{corollary}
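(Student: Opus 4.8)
The plan is to read the joint spectrum distance directly off the closed form (\ref{JSD_C0_equ}). Specialized to the QUP construction, where $\alpha_{l_j}\in\{0,1\}$, $\mathcal{F}=\{l_j:\alpha_{l_j}=1\}$ and $\sum_{j=1}^{|\mathcal{F}|}2^{l_j}=M$ by (\ref{equ_code_length_QUP}), (\ref{JSD_C0_equ}) becomes $d_{avg}+\lambda_{avg}=\sum_{j=1}^{|\mathcal{F}|}\frac{2^{l_j}}{M}n_j$, where $n_j=f_H(\zeta_n^{(j)})-1$ and $\zeta_n^{(j)}$ is the pruned root-to-leaf path that carries the leftmost path of the predecessor of the $j$-th surviving subtree, exactly as set up in the proof of Lemma \ref{lemma9}. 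Since the weights $2^{l_j}/M$ are non-negative and sum to one, it suffices to bound each $n_j$ from above and below.

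For the upper inequality, note that $\zeta_n^{(j)}$ is a full path of depth $n$, so $f_H(\zeta_n^{(j)})\le n$ and hence $n_j\le n-1$; averaging gives $d_{avg}+\lambda_{avg}\le(n-1)\sum_{j}\frac{2^{l_j}}{M}=n-1$. For the lower inequality I would use the structure exhibited in the proof of Lemma \ref{lemma9}: the $j$-th surviving subtree has depth $l_j$, so its predecessor lies at depth $n-l_j-1$, and by Lemma \ref{lemma5} the segment of $\zeta_n^{(j)}$ running from that predecessor down to the punctured leaf consists entirely of $0$-branches, i.e.\ it is an all-zero path of length $l_j+1$. Therefore $f_H(\zeta_n^{(j)})\ge l_j+1$, that is $n_j\ge l_j$, and
\[
d_{avg}+\lambda_{avg}=\sum_{j=1}^{|\mathcal{F}|}\frac{2^{l_j}}{M}n_j \ge \sum_{j=1}^{|\mathcal{F}|}\frac{2^{l_j}}{M}l_j = 2\sum_{j=1}^{|\mathcal{F}|}\frac{2^{l_j-1}}{M}l_j = 2d_{avg},
\]
the last equality being (\ref{SD1_C0_equ}) restricted to QUP. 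Invoking Theorem \ref{theorem8}, $d_{avg}\ge\frac{n-2}{2}$, whence $d_{avg}+\lambda_{avg}\ge n-2$. (Equivalently, $n_j\ge l_j$ just says $\lambda_{avg}\ge d_{avg}$, and the lower bound then drops straight out of Theorem \ref{theorem8}.)

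The step that needs care is the per-path bound $n_j\ge l_j$: one must verify from Lemma \ref{lemma5} that the punctured leaf attached to the $j$-th subtree is reached from that subtree's predecessor through exactly $l_j+1$ consecutive $0$-branches — so this tail contributes $l_j+1$ to $f_H(\zeta_n^{(j)})$ — and that the part of $\zeta_n^{(j)}$ above the predecessor is not double counted, in other words the decomposition $\zeta_n^{(j)}=(\psi_{l-1},\chi_{n-l+1})$ of Lemma \ref{lemma9} is used correctly with $\chi$ all-zero of length $l_j+1$. Once this is pinned down, everything else is the same bookkeeping already used in the proofs of Theorems \ref{theorem6} and \ref{theorem8}: the weights $2^{l_j}/M$ summing to one for the upper bound, and the reduction to $2d_{avg}$ plus Theorem \ref{theorem8} for the lower bound. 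I do not expect any genuinely new difficulty beyond this identification.
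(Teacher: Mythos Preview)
Your proposal is correct and follows essentially the same route as the paper: both start from the JSD representation (\ref{JSD_C0_equ}), bound each $n_j$ by $l_j\le n_j\le n-1$, use $\sum_j 2^{l_j}/M=1$ for the upper bound, and for the lower bound reduce $\sum_j \frac{2^{l_j}}{M}n_j\ge \sum_j \frac{2^{l_j}}{M}l_j=2d_{avg}$ before invoking Theorem~\ref{theorem8}. Your extra paragraph spelling out why $n_j\ge l_j$ via the all-zero tail $\chi_{n-l+1}$ of length $l_j+1$ from Lemma~\ref{lemma9} is more explicit than the paper's one-line assertion, but the argument is the same.
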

\begin{IEEEproof}
By Theorem \ref{theorem7}, for the QUP puncturing, there is only one prune path corresponding to a subtree with a depth $l_j$ and this path contains the leftmost path of the parent subtree. So we have $l_j\le n_{j,1} \le n-1$. Like the proof in Theorem \ref{theorem8}, we have
$\sum\limits_{j = 1}^{|\mathcal{F}|} {\frac{2^{l_j}}{M}} n_{j,1} \le \sum\limits_{j = 1}^{|\mathcal{F}|} {\frac{2^{l_j}}{M}}{(n-1)}=n-1$.

For the left-side inequality, by Theorem \ref{theorem8}, we have
$\sum\limits_{j = 1}^{|\mathcal{F}|} {\frac{2^{l_j}}{M}} n_{j,1} \ge \sum\limits_{j = 1}^{|\mathcal{F}|} {\frac{2^{l_j}}{M}} {l_j}
                                                                                          =2\mathbb{E}\left[d_H(\omega_n)\right]\ge n-2$.\end{IEEEproof}

\subsection{Equivalent Class}
Recall that for single bit puncturing under the C0 mode, any code bit puncturing is equivalent by Theorem \ref{theorem3}. Generally, we have the following definition about the equivalent class for multiple bit puncturing.
\begin{definition}
Given the puncturing set of source bits $\mathcal{D}$ and its corresponding puncturing table $\mathscr{T}_N$ (or puncturing set of code bits $\mathcal{B}$), if another puncturing table $\mathscr{T}_N'$ (or $\mathcal{B'}$) can generate the same set $\mathcal{D}$, we call these two tables $\mathscr{T}_N$ and $\mathscr{T}_N'$ (or two sets $\mathcal{B}$ and $\mathcal{B'}$) are equivalent, that is, they belong to an equivalent class.
\end{definition}

For the QUP algorithm, the puncturing length $Q$ can be expressed as
\begin{equation}\label{equ_punc_length_QUP}
\sum\limits_{z = 1}^{|\mathcal{U}|} {2^{m_z}}=Q
\end{equation}
where $\mathcal{U}=\left\{m_z \right\}$ and satisfies $m_1\le m_2 \le \ldots \le m_{|\mathcal{U}|}$. Further, let $m_0=-\infty$ and $m_{|\mathcal{U}|+1}=n$. We introduce the function $h(x)=2^x$ to simplify the analysis.

\begin{theorem}
For the C0 mode, the number of puncturing tables equivalent to that generated by QUP algorithm is $h\left(\sum \nolimits_{z=1}^{|\mathcal{U}|} \left(n-2|\mathcal{U}|+2z-m_z\right)2^{m_z}\right)$.
\end{theorem}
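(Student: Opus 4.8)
The plan is to count, for the QUP puncturing table with parameter $Q$, the number of distinct puncturing tables $\mathscr{T}_N'$ (equivalently code-bit sets $\mathcal{B}'$ of size $Q$) that produce the same source-bit puncturing set $\mathcal{D}=\{1,2,\dots,Q\}$ as the QUP scheme. By Lemma~\ref{lemma8}, the QUP table prunes exactly the $Q$ leftmost leaf nodes on the code tree; so the question becomes: how many size-$Q$ subsets of code-bit indices, when processed through the C0-mode puncturing rules (Lemma~\ref{lemma2}, Lemma~\ref{lemma5}), prune precisely those $Q$ leftmost leaves? First I would translate the pruning of the $Q$ leftmost leaves into a statement about the forest of surviving subtrees: by (\ref{equ_punc_length_QUP}), $Q=\sum_{z=1}^{|\mathcal{U}|}2^{m_z}$, and the pruned region decomposes into $|\mathcal{U}|$ full binary subtrees of depths $m_1<m_2<\cdots<m_{|\mathcal{U}|}$ hanging off the leftmost spine of the code tree. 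A code bit $x_i$ being deleted corresponds, via the bit-reversal/$\mathbf{G}_N$ relation $u_1^N=x_1^N\mathbf{G}_N$, to a linear constraint; deleting a set $\mathcal{B}'$ forces deletion of source bits in the span-closure sense dictated by Lemmas~\ref{lemma2} and~\ref{lemma5} (left descendant punctured before right).

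Second, I would analyze each pruned subtree separately and then multiply. For a single full pruned subtree of depth $m_z$ sitting at a fixed location, I want the number of ways to choose which $2^{m_z}$ code bits to puncture so that this subtree (and only the appropriate leaves) gets deleted. The recursive structure of $\mathbf{F}_2^{\otimes n}$ means that within a depth-$d$ subtree the number of valid code-bit deletion patterns that collapse it entirely is itself a power of two, say $h(c_d)$ for some exponent $c_d$ depending on $d$ and on the depth at which the subtree is attached (because the "freedom" comes from the check-node XOR relations that can redistribute a puncture among code positions feeding the same variable node). I would set up a recursion for this exponent: a depth-$d$ subtree attached at level $n-d$ of the spine contributes freedom coming from its two depth-$(d-1)$ halves plus the freedom in the single extra branch that the C0 rule forces to be the all-zero path (cf. the proof of Theorem~\ref{theorem4}); solving the recursion should yield an exponent linear in $d$ with a correction term tied to the attachment depth. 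Summing the exponents over $z=1,\dots,|\mathcal{U}|$ and noting that the $z$-th pruned subtree is attached after the previous $z-1$ subtrees (so its effective attachment depth shifts by an amount reflecting $n-2|\mathcal{U}|+2z-m_z$) should reproduce the claimed exponent $\sum_{z=1}^{|\mathcal{U}|}(n-2|\mathcal{U}|+2z-m_z)2^{m_z}$, after which $h(\cdot)=2^{(\cdot)}$ gives the product because the choices in distinct pruned subtrees are independent.

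The hard part will be pinning down exactly the per-subtree exponent and, especially, the attachment-depth bookkeeping that yields the $n-2|\mathcal{U}|+2z-m_z$ term. The subtlety is that after the first few pruned subtrees are removed, the "spine" on which the next pruned subtree hangs is shortened, and the number of independent XOR constraints available to shuffle a puncture around changes accordingly; getting the indices right (why it is $n-2|\mathcal{U}|+2z-m_z$ and not some off-by-one variant) requires careful tracking of which variable nodes along the path $\omega_n^{(1)}=(0,\dots,0)$ retain degrees of freedom after earlier deletions. I would verify the formula on the small cases already in the paper ($N=8$, $M=5$, $Q=3$, where $\mathcal{U}=\{0,1\}$, $|\mathcal{U}|=2$, $n=3$, giving exponent $(3-4+2-0)2^0+(3-4+4-1)2^1=1+4=5$, i.e.\ $32$ equivalent tables) and on $Q$ a pure power of two (single pruned subtree, matching the single-bit case of Theorem~\ref{theorem3} when $Q=1$, exponent $n-1$, i.e.\ $N/2$ equivalent single-bit punctures) before committing to the general induction. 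Once the per-subtree count and the independence across subtrees are established, the product formula and the substitution $h(x)=2^x$ finish the proof.
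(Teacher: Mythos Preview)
Your plan shares the paper's starting point---decompose $Q=\sum_{z}2^{m_z}$ and treat the blocks $\mathcal{D}_z$ one at a time---but it diverges from the paper in two ways that matter.

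First, the paper does \emph{not} work on the primal code tree; it works on the \emph{dual trellis} (Definition~1), tracking variable nodes $s_{p,q}$ column by column from the source side toward the code side. That device is what makes the count explicit: at column $m_z$ the nodes corresponding to $\mathcal{D}_z$ become fully coupled by a local $\mathbf{F}_2^{\otimes m_z}$ constraint, and one can read off which nodes are mandatory punctures and which are free. Your code-tree picture does not give you this column-wise bookkeeping, and your proposed ``recursion for the per-subtree exponent'' is left entirely unspecified; without the dual trellis you have no concrete mechanism to count the XOR degrees of freedom.

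Second, and more seriously, your closing claim that ``the choices in distinct pruned subtrees are independent'' is wrong, and this is exactly where the term $-2|\mathcal{U}|+2z$ comes from. In the paper's argument, when you move to the $z$-th block the $J_z=\sum_{e<z}2^{m_e}$ already-punctured candidate nodes form butterfly constraints (\ref{equ_trellis_constraint}) with a subset $\Phi_z$ of the new mandatory nodes, forcing certain punctures at column $m_z+1$ and \emph{reducing} the freedom there. The paper splits the count into $\xi_z^{1}=h\bigl(2J_z(m_{z+1}-m_z-1)\bigr)$ for the constrained part and $\xi_z^{2}=h\bigl((2^{m_z}-J_z)(m_{z+1}-m_z)\bigr)$ for the free part; the cross-term $J_z$ is what, after the telescoping in (\ref{hfun_2nd_sum})--(\ref{hfun_1st_sum}), produces the $2(|\mathcal{U}|-z)$ correction. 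You yourself flag this interaction (``which variable nodes along the path retain degrees of freedom after earlier deletions'') but then contradict it by asserting independence. The product structure of the final answer is not because the blocks are independent; it is because the interaction is itself multiplicative once tracked on the dual trellis.

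A minor slip: your $Q=1$ sanity check is miscomputed. With $|\mathcal{U}|=1$, $m_1=0$, $z=1$ the exponent is $(n-2+2-0)\cdot 1=n$, so there are $2^n=N$ equivalent single-bit punctures, which is exactly what Theorem~\ref{theorem3} says (any of the $N$ code bits works), not $N/2$.
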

\begin{IEEEproof}
Given a dual trellis with the original length $N=2^n$, we use $s_{p,q}, p=1,2,\cdots,n,q=0,1,\cdots,n$ to denote a variable node at the $p$-th row and the $q$-th column, where the row index is ascending from top to bottom and the column index is increasing from left to right.

By Lemma \ref{lemma8}, the source puncturing set generated by the QUP algorithm is $\mathcal{D}=\left\{1,2,\cdots,Q\right\}$. Let $E_{z}=\sum\nolimits_{o=z+1}^{|\mathcal{U}|} 2^{m_o}$ and $E_{|\mathcal{U}|}=0$. The set $\mathcal{D}$ can be decomposed into a group of subsets, that is, $\mathcal{D}=\bigcup \nolimits_{z=1}^{|\mathcal{U}|} \mathcal{D}_z$, where $\mathcal{D}_z=\left\{i|i=E_{z}+1, \cdots, E_{z}+2^{m_z}\right\}$. By this decomposition, we can separately consider the number of equivalent tables corresponding to each subset.

Suppose there are $J_z=\sum\nolimits_{e=0}^{z-1} 2^{m_e}$ source bits have been punctured, these nodes will generate $\xi_{z-1}$ equivalent puncturing schemes with $J_z$ candidate puncturing nodes at column $m_z$ due to the iterative application of Lemma \ref{lemma2}. At the present, we calculate the number of equivalent puncturings corresponding to the source bits in $\mathcal{D}_z$.

Define the relative node set on the trellis as $\left\{s_{p,0}\left|p\in\mathcal{D}_z\right.\right\}$ and the corresponding source vector as $u_{\mathcal{D}_z}$. Let $N_z=2^{m_z}$. After extending these nodes from column $0$ to column $m_z$, we can obtain a local code vector $c_1^{N_z}$ which satisfies $u_{\mathcal{D}_z}\mathbf{F}_2^{\otimes {m_z}}=c_1^{N_z}\mathbf{B}_{N_z}$. So the corresponding nodes in the set $\Lambda_{z}=\left\{s_{p,m_z}\left|p\in\mathcal{D}_z\right.\right\}$ become fully dependent by this local coding constraint. In order to puncture the nodes in $\mathcal{D}_z$, the nodes in $\Lambda_{z}$ are inevitably punctured.

In all $\xi_{z-1}$ equivalent puncturing schemes, we consider each constraint between the candidate nodes of one scheme and the nodes in $\Lambda_{z}$. Without loss of generality, the candidate nodes in the set $\Xi_{z}=\left\{s_{p,m_z}|p\in\bigcup\nolimits_{e=1}^{z-1} \mathcal{D}_e\right\}$ are chosen to form a multi-butterfly constraint with the nodes in $\Lambda_{z}$.

Due to $J_z < 2^{m_z}$, we have $\left|\Xi_{z}\right|<\left|\Lambda_{z}\right|$. Let $\Phi_{z}=\left\{s_{p,m_z}\left|p=E_z+1,\ldots,E_z+J_z\right.\right\}$, we have $\Phi_{z} \subset \Lambda_{z}$. When extending from column $m_z$ to column $m_z+1$, the nodes in $\Phi_{z}$ can form a multi-butterfly constraint with those in $\Xi_{z}$, that is,
\begin{equation}\label{equ_trellis_constraint}
\left\{
\begin{aligned}
&s_{p,\left(m_z+1\right)}                 =s_{p,m_z}\oplus s_{\left(p+2^{m_z}\right),m_z}\\
&s_{\left(p+2^{m_z}\right),\left(m_z+1\right)}  =s_{\left(p+2^{m_z}\right),m_z}\\
\end{aligned}
\right.
\end{equation}
where $s_{p,m_z}\in \Phi_{z}$ and $s_{\left(p+2^{m_z}\right),m_z} \in \Xi_{z}$. While, the nodes in $\Phi_{z}^c=\Lambda_{z}-\Phi_{z}$ are free and not constrained by the set $\Xi_{z}$.

Hence, we consider the equivalent puncturing nodes corresponding to two sets $\Phi_{z}$ and $\Phi_{z}^c$ respectively. In the first case, the node $s_{p,m_z}\in \Phi_{z}$ is a mandatory puncturing node and the node $s_{\left(p+2^{m_z}\right),m_z} \in \Xi_{z}$ only is a candidate one. In order to ensure these two nodes are punctured, by Lemma \ref{lemma2}, the generated nodes $s_{p,\left(m_z+1\right)}$ and $s_{\left(p+2^{m_z}\right),\left(m_z+1\right)}$ must be punctured. When extending from column $m_z+1$ to column $m_{z+1}$, each one of these two generated nodes can be regarded as a root of a tree with a depth $\left(m_{z+1}-m_{z}-1\right)$. Since only one node is punctured on each tree, by Theorem \ref{theorem3}, the number of equivalent puncturing nodes is $2^{\left(m_{z+1}-m_{z}-1\right)}$. Therefore, the total number of this case is ${\xi_{z}^{1}}=h\left({2 J_z \left(m_{z+1}-m_{z}-1\right)}\right)$.

In the second case, the node in $\Phi_{z}^c$ is a mandatory puncturing node, which can also be regarded as a root of a tree with a depth $\left(m_{z+1}-m_{z}\right)$. Similarly by Theorem \ref{theorem3}, the total number of equivalent schemes for this case is ${\xi_{z}^{2}}=h\left({{\left(2^{m_z}-J_z\right)} \left(m_{z+1}-m_{z}\right)}\right)$. So the number of equivalent puncturings for the source bits in $\bigcup\nolimits_{e=1}^{z} \mathcal{D}_e$ is $\xi_z=\xi_{z-1}\xi_z^1\xi_z^2$.

Iteratively applying the above analysis for all subsets, the number of equivalent puncturing tables of QUP algorithm is calculated by
\begin{equation}\label{equ_punc_class}
\begin{aligned}
\xi&=\prod \limits_{z=1}^{|\mathcal{U}|}\xi_z^1 \xi_z^2\\
    &=\prod \limits_{z=1}^{|\mathcal{U}|} h\left[\sum \limits_{e=0}^{z-1} 2^{m_e} \left(m_{z+1}-m_{z}-2\right)+2^{m_z} \left(m_{z+1}-m_{z}\right)\right]\\
    &=h\left[\sum \limits_{z=1}^{|\mathcal{U}|} \left(\sum \limits_{e=0}^{z} 2^{m_e} \left(m_{z+1}-m_{z}\right)-2\sum \limits_{e=0}^{z-1} 2^{m_e} \right)\right].
\end{aligned}
\end{equation}

The second term of the argument inside the function $h(\cdot)$ of (\ref{equ_punc_class}) can be rewritten as
\begin{equation}\label{hfun_2nd_sum}
\begin{aligned}
&2\sum \limits_{z=1}^{|\mathcal{U}|}\sum \limits_{e=0}^{z-1} 2^{m_e}\\
&=2|\mathcal{U}| 2^{m_0}+2(|\mathcal{U}|-1) 2^{m_1}+\cdots+2\cdot 2^{m_{|\mathcal{U}|-1}}\\
&=2\sum \limits_{z=1}^{|\mathcal{U}|}(|\mathcal{U}|-z) 2^{m_z}
\end{aligned}
\end{equation}
where $2^{m_0}=0$ due to $m_0=-\infty$.

Expanding the first term of the argument inside the function $h(\cdot)$ of (\ref{equ_punc_class}) and by $m_{|\mathcal{U}|+1}=n$, we have
\begin{equation}\label{hfun_1st_sum}
\begin{aligned}
&\sum \limits_{z=1}^{|\mathcal{U}|} \left[\sum \limits_{e=0}^{z} 2^{m_e} \left(m_{z+1}-m_{z}\right)\right]\\
&=\sum \limits_{e=0}^{1} 2^{m_e} \left(m_2-m_1\right)+\cdots+\sum \limits_{e=0}^{|\mathcal{U}|} 2^{m_e} \left(m_{|\mathcal{U}|+1}-m_{|\mathcal{U}|}\right)\\
&=-\sum \limits_{z=1}^{|\mathcal{U}|} m_z 2^{m_z}+m_{|\mathcal{U}|+1} \sum \limits_{e=0}^{|\mathcal{U}|} 2^{m_e}\\
&=\sum \limits_{z=1}^{|\mathcal{U}|} \left(n-m_z\right) 2^{m_z}.
\end{aligned}
\end{equation}

Combining (\ref{hfun_2nd_sum}) and (\ref{hfun_1st_sum}), we complete the proof.
\end{IEEEproof}

\begin{example}
An equivalent class example for the QUP puncturing under the C0 mode is shown in Fig. \ref{figure_equclass_puncturing_C0mode}. Given a dual trellis with the original code length $N=8$ and the punctured bits number $Q=3$, the puncturing set of source bits is $\mathcal{D}=\left\{1,2,3\right\}$. Due to $Q=3=2^1+2^0$, we have $m_0=-\infty, m_1=0, m_2=1,m_3=3$ and $\mathcal{D}_2=\left\{1,2\right\}, \mathcal{D}_1=\{3\}$.

As shown in Fig. \ref{figure_equclass_puncturing_C0mode}, since the nodes $\{s_{1,0}, s_{2,0}\}$ and $\{s_{1,1}, s_{2,1}\}$ compose a butterfly constraint (marked by a blue dash box), $s_{1,1}$ and $s_{2,1}$ are the mandatory puncturing nodes. On the other hand, the node $s_{3,0}$ has two candidate puncturing nodes $s_{3,1}$ and $s_{4,1}$ (marked by a green cross). Therefore, for the puncturing nodes in $\mathcal{D}_1$, the number of equivalent puncturings is $\xi_1=2^{2^{m_1}(m_2-m_1)}=2$. Assuming the candidate node $s_{3,1}$ is selected to be punctured, then the nodes $s_{1,2}$ and $s_{3,2}$ must be punctured because these two nodes form a butterfly constraint with the nodes $s_{1,1}$ and $s_{3,1}$.

Hence, the number of equivalent puncturings for the node $s_{2,1}$ is $\xi_2^2=2^{\left(2^{m_2}-2^{m_1}\right)\left(m_3-m_2\right)}=4$, which is corresponding to a perfect tree with the root $s_{2,1}$ (marked by red lines). Moreover, the number of equivalent puncturings for the nodes $s_{1,2}$ and $s_{3,2}$ is $\xi_2^1=2^{2\times2^{m_1} \left(m_3-m_2-1\right)}=4$, which is relative to two perfect trees with these two nodes as the roots (marked by purple and yellow lines respectively). So the total number of equivalent puncturing tables is $\xi=4\times4\times2=32$.

As an example, the puncturing set corresponding to QUP algorithm is $\mathcal{B}=\{1,3,5\}$. There are two sets with equivalent puncturing, that is, $\{\{1,2\},\{3,4\},\{5,6,7,8\}\}$ and $\{\{1,2,3,4\},\{5,6\},\{7,8\}\}$. We can arbitrarily select three indices from each set of these two set and form an equivalent puncturing scheme, such as $\{2,4,6\}$ or $\{2,6,8\}$.
\end{example}
\begin{figure}[h]
  \centering
  \includegraphics[width=0.7\columnwidth]{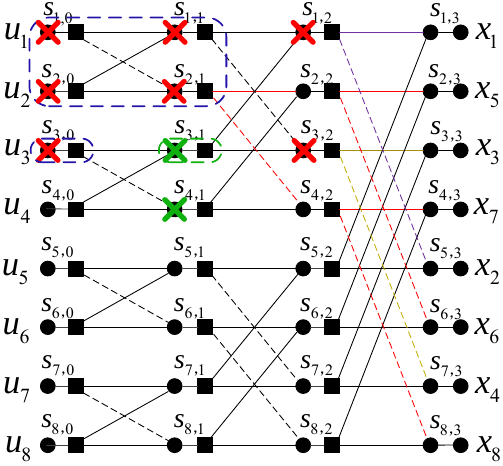}
  \caption{An example of QUP equivalent class on a dual trellis with $N=8$ and $Q=3$}
  \label{figure_equclass_puncturing_C0mode}
\end{figure}

\section{Optimal Puncturing For the Capacity-One Mode}
\label{section_VI}
In this section, reversal quasi-uniform puncturing (RQUP)  is described and proved to maximize the SD0 and JSD.
\subsection{Single Bit Puncturing}
Single bit puncturing under the C1 mode is symmetric to the operation under the C0 mode. Therefore, the corresponding polar spectra are also almost optimal.

\begin{theorem}\label{theorem11}
For single bit puncturing under the C1 mode, the maximal complemental path weight on the code tree is $n-1$, in addition, the maximal path weight is $n-1$.
\end{theorem}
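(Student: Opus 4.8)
The plan is to mirror the proof of Theorem~\ref{theorem4}, exploiting the symmetry between the C0 and C1 modes that is already visible in the pairing of Lemma~\ref{lemma3} with Lemma~\ref{lemma2}, Lemma~\ref{lemma6} with Lemma~\ref{lemma5}, and equation~(\ref{bound_log_C1}) with~(\ref{bound_log_C0}). First I would establish the C1 counterpart of Theorem~\ref{theorem3}: when exactly one code bit is punctured under the C1 mode, the puncturing is equivalent to puncturing the last code bit $x_N$ with the source bit $u_N$ assigned a fixed value (indeed, from $\mathbf{G}_N=\mathbf{B}_N\mathbf{F}_2^{\otimes n}$ one checks $x_N=u_N$). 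The argument is the mirror of Theorem~\ref{theorem3}: by Lemma~\ref{lemma6} puncturing a left descendant forces its right sibling to be punctured too, so a single-bit puncturing can only prune a leaf that is a right descendant at every level, and iterating Lemma~\ref{lemma3}/\ref{lemma6} up the tree forces the pruned leaf-to-root path to be the all-ones path $\omega_n^{(N)}=(1,1,\cdots,1)$.

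Next I would carry out the subtree decomposition exactly as in Theorem~\ref{theorem4}. Pruning $\omega_n^{(N)}$ deletes the nodes $v_{l,2^l}$, $l=0,1,\cdots,n$, and splits the code tree into $n$ subtrees with root nodes $v_{l,2^l-1}$, $l=1,2,\cdots,n$, where $\mathcal{T}(v_{l,2^l-1})$ has depth $n-l$. Since every branch on the pruned path is a ``1'' branch, the predecessor $v_{l-1,2^{l-1}}$ carries $l-1$ doubling operations and has log-domain reliability $2^{l-1}a_0$; the $b_l=0$ branch from $v_{l-1,2^{l-1}}$ to the surviving left child $v_{l,2^l-1}$ then gives $\mathscr{A}_l=\mathscr{A}_{l-1}$ by~(\ref{bound_log_C1}), so the subtree root inherits $B(v_{l,2^l-1})=2^{l-1}a_0$ with no ``add'' incurred at the branch-off. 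Applying the ordinary recursion~(\ref{Bhattaharyya_bound_log}) inside each perfect subtree of depth $n-l$, a leaf of $\mathcal{T}(v_{l,2^l-1})$ then has path weight $(l-1)+k'$ and complemental path weight $r'$, where $k'+r'=n-l$ count the doublings and adds taken within the subtree.

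From this the two extrema fall out: $(l-1)+k'\le(l-1)+(n-l)=n-1$, with equality on the rightmost path of any subtree, so the maximal path weight is $n-1$; and $r'\le n-l\le n-1$, with equality on the leftmost path of the leftmost (deepest) subtree $\mathcal{T}(v_{1,1})$, so the maximal complemental path weight is also $n-1$. The hard part will be the bookkeeping at the pruning boundary: one must see that, dually to the C0 case (where the pruned path is all-zeros and pushes accumulated ``adds'' into the subtree roots), here the pruned path is all-ones and pushes accumulated ``doublings'' into the roots, while the $b_l=0$ branch-off saves exactly one ``add'' --- which is precisely what turns a naive count of $n$ for the leftmost path into $n-1$. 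Once the node indices $v_{l,2^l}$, $v_{l,2^l-1}$, $v_{l-1,2^{l-1}}$ are recognized as the C1 mirrors of $v_{l,1}$, $v_{l,2}$, $v_{l-1,1}$ from Theorem~\ref{theorem4}, everything else is symmetric.
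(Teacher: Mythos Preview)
Your proposal is correct and follows exactly the approach the paper intends: the paper's own proof of Theorem~\ref{theorem11} is simply ``Based on the symmetry of single bit puncturing under the C0 and C1 modes, the proof is similar to that of Theorem~\ref{theorem4} and omitted,'' and you have carried out precisely that mirrored argument in detail. Your identification of the pruned path as $\omega_n^{(N)}=(1,\ldots,1)$, the subtree roots as $v_{l,2^l-1}$ with inherited reliability $2^{l-1}a_0$, and the resulting bounds $(l-1)+k'\le n-1$ and $r'\le n-1$ are the correct C1 duals of the quantities in Theorem~\ref{theorem4}; the only cosmetic point is that under C1 the ``equivalence'' of single-bit puncturings is really a uniqueness statement (by Lemma~\ref{lemma6} only $u_N$ can be the lone pruned leaf), rather than a genuine many-to-one equivalence as in Theorem~\ref{theorem3}.
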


Based on the symmetry of single bit puncturing under the C0 and C1 modes, the proof is similar to that of Theorem \ref{theorem4} and omitted.
\subsection{Reversal QUP algorithm}
For the C1 mode, the reversal quasi-uniform puncturing (RQUP) algorithm can be described as follows:
\medskip
\begin{enumerate}[ \textbf{Stage} 1)]
\item
Initialize the table $\mathscr{T}_N$ as all ones, and then set the last $Q$ bits of the vector as zeros;
\item
Perform bit-reversal permutation on the table $\mathscr{T}_N$ and obtain the puncturing table.
\end{enumerate}
\medskip

One of the main differences between RQUP and QUP algorithm is that the initialization table of RQUP is reversal to that of QUP. The puncturing table generated by the RQUP algorithm is also constructive and regular. It is easy to prove that RQUP has a similar property of Theorem \ref{theorem5}.

\begin{theorem}
RQUP algorithm can ensure that each punctured code bit is known by the decoder.
\end{theorem}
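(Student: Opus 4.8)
The plan is to reduce the statement to the triangular algebraic structure underlying polar coding. Recall that $\mathbf{G}_N=\mathbf{B}_N\mathbf{F}_2^{\otimes n}$, that $\mathbf{F}_2^{\otimes n}$ is lower-triangular with unit diagonal over $\mathrm{GF}(2)$, and that $\mathbf{G}_N^{-1}=\mathbf{G}_N$ so that on the dual trellis $u_1^N=x_1^N\mathbf{G}_N$. Writing $\tilde{x}_1^N=x_1^N\mathbf{B}_N$ for the codeword in bit-reversed order, this identity becomes $u_1^N=\tilde{x}_1^N\mathbf{F}_2^{\otimes n}$, hence $\tilde{x}_1^N=u_1^N(\mathbf{F}_2^{\otimes n})^{-1}=u_1^N\mathbf{F}_2^{\otimes n}$ and, entrywise, $\tilde{x}_i=\sum_{j\ge i}u_j\,(\mathbf{F}_2^{\otimes n})_{j,i}$ by lower-triangularity.

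First I would identify the code bits punctured by RQUP in the $\tilde{x}$-ordering. Since the first stage of RQUP zeros the \emph{last} $Q$ entries of the pre-permutation table and the second stage applies the bit-reversal permutation, the punctured code bits are exactly $\{\tilde{x}_i : N-Q+1\le i\le N\}$; this is the mirror image of the computation behind Lemma~\ref{lemma8}, and it agrees with Fig.~\ref{fig_code_tree}(b) and Theorem~\ref{theorem11}, where the $Q$ rightmost leaves of the code tree are pruned. Then I would invoke the entrywise formula above: for every $i>N-Q$ the bit $\tilde{x}_i$ is a $\mathrm{GF}(2)$-linear combination of $\{u_j : j\ge i\}\subseteq\{u_{N-Q+1},\dots,u_N\}$ only. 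By the recursive application of Lemmas~\ref{lemma3} and~\ref{lemma6}, these are precisely the source bits that receive predetermined (frozen) values under the C1 mode; indeed, read in the direction of decreasing index, $u_i=\tilde{x}_i-\sum_{j>i}u_j(\mathbf{F}_2^{\otimes n})_{j,i}$ is forced once the punctured value $\tilde{x}_i$ and the higher-index source bits are fixed, so the induced source puncturing set is the contiguous block $\mathcal{D}=\{N-Q+1,\dots,N\}$ with $|\mathcal{D}|=Q$. Consequently each punctured code bit equals a fixed value that depends solely on frozen source bits, hence is known to both encoder and decoder, which is the assertion.

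I would close by noting that this is exactly the consistency the C1 mode demands: no information-bearing source bit is constrained by the known punctured values, and the number of source-bit slots consumed equals $Q$. The only genuinely delicate point is the first step --- verifying, against the bit-reversal permutation, that the positions zeroed before the permutation become the ``top'' triangular block $\{N-Q+1,\dots,N\}$ in the $\tilde{x}$-ordering; once that bookkeeping is carried out (verbatim as in the proof of Lemma~\ref{lemma8}, with ``first $Q$'' replaced by ``last $Q$''), the rest follows immediately from the triangularity of $\mathbf{F}_2^{\otimes n}$.
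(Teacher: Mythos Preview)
Your argument is correct and is essentially the paper's own proof: introduce the bit-reversed codeword $\hat{x}_1^N=x_1^N\mathbf{B}_N$ (your $\tilde{x}$), observe that the RQUP-punctured positions are exactly $\hat{x}_{N-Q+1}^N$, and then use $\hat{x}_1^N=u_1^N\mathbf{F}_2^{\otimes n}$ together with the lower-triangularity of $\mathbf{F}_2^{\otimes n}$ to conclude that freezing $u_{N-Q+1}^N$ fixes those code bits. The only cosmetic difference is that the paper states the implication in the forward direction (``if $u_{N-j}$ is frozen then $\hat{x}_{N-j}$ is fixed''), whereas you additionally argue the converse via back-substitution; the latter is not needed for the theorem but does no harm.
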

\begin{IEEEproof}
Let $\hat {x}_1^N=x_1^N\mathbf{B}_N$ denote the codeword after bit-reversal permutation. For the RQUP algorithm, the subvector $\hat {x}_{N-Q+1}^N$ should be punctured. Due to the coding constraint $u_1^N\mathbf{F}_2^{ \otimes n}=\hat{x}_1^N$ and the lower-triangle property of the matrix $\mathbf{F}_2^{ \otimes n}$, if each source bit $u_{N-j},j=0,1,\cdots,Q-1$ is set to a frozen value, each bit $\hat{x}_{N-j}$ will be set to a fixed value.
\end{IEEEproof}

\begin{corollary}\label{corollary10}
For the RQUP algorithm, the source bit vector $u_{N-Q+1}^N$ is punctured, that is, $\mathcal{D}=\left\{N-Q+1,\cdots,N\right\}$. Equally, the $Q$ rightmost leaf nodes on the code tree are pruned.
\end{corollary}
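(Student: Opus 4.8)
The plan is to mirror the proof of Lemma~\ref{lemma8}, using the reversal symmetry between QUP and RQUP together with the triangular structure of the Kronecker kernel. First I would fix the three index conventions that appear here: the transmitted codeword $x_1^N$, the bit-reversed codeword $\hat{x}_1^N = x_1^N\mathbf{B}_N$, and the source vector $u_1^N$, linked by $u_1^N\mathbf{F}_2^{\otimes n} = \hat{x}_1^N$ as used in the preceding theorem. Since Stage~1 of RQUP zeroes the \emph{last} $Q$ entries of the initial table and Stage~2 is a bit-reversal applied to that table, the code bits marked for puncturing are exactly $\hat{x}_{N-Q+1},\dots,\hat{x}_N$ (before the permutation is pushed onto the $x$-indices). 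I would state this explicitly so that ``last $Q$ entries'' is unambiguously tied to the $\hat{x}$-coordinates.

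Next I would read off the source-bit puncturing set from the triangularity of $\mathbf{F}_2^{\otimes n}$. Over $\mathrm{GF}(2)$ this matrix is lower-triangular with unit diagonal and is its own inverse, so $u_1^N = \hat{x}_1^N\mathbf{F}_2^{\otimes n}$ expresses $u_{N-j}$ as a combination of $\hat{x}_{N-j},\hat{x}_{N-j+1},\dots,\hat{x}_N$ only, for $j=0,1,\dots,Q-1$. Hence fixing the punctured code bits $\hat{x}_{N-Q+1},\dots,\hat{x}_N$ to their known values under the C1 mode forces each of $u_{N-Q+1},\dots,u_N$ to a determined (frozen) value, and these are precisely the $Q$ source bits that RQUP deletes; conversely no source bit with a smaller index is constrained by this, so $\mathcal{D}=\{N-Q+1,\dots,N\}$. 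This is essentially the content of the preceding theorem on RQUP (each punctured code bit is known to the decoder), which I would cite to avoid rederiving the triangularity argument, and I would additionally invoke Lemma~\ref{lemma6} to confirm that the induced chain of forced punctures on the code tree closes up exactly at these $Q$ leaves, producing no extra ones (an argument symmetric to Lemma~\ref{lemma5}'s role in Lemma~\ref{lemma8}).

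Finally I would translate $\mathcal{D}=\{N-Q+1,\dots,N\}$ into the tree picture. By the labeling fixed in Section~\ref{section_IV}, the $m$-th leaf node $v_{n,m}$ carries source bit $u_m$ in natural order, so the indices $N-Q+1,\dots,N$ name exactly the $Q$ rightmost leaves $v_{n,N-Q+1},\dots,v_{n,N}$; pruning the subtrees at these leaves gives the asserted picture and completes the argument. The only real difficulty is bookkeeping: one must keep the $u$-, $x$-, and $\hat{x}$-index systems straight and be careful that the bit-reversal of Stage~2 is applied to the puncturing table (hence to $x$-indices) and not to the $\hat{x}$- or $u$-coordinates in which the ``last $Q$'' statement lives; once those conventions are pinned down, the corollary is immediate from the lower-triangularity of $\mathbf{F}_2^{\otimes n}$ and Lemma~\ref{lemma6}.
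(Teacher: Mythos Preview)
Your proposal is correct and matches the paper's implicit reasoning: the paper states this corollary without proof, immediately after the theorem whose proof already uses the relation $u_1^N\mathbf{F}_2^{\otimes n}=\hat{x}_1^N$ and the lower-triangular structure of $\mathbf{F}_2^{\otimes n}$, so the intended argument is exactly the mirror of Lemma~\ref{lemma8} that you outline. Your bookkeeping on the $u$/$x$/$\hat{x}$ index systems is accurate, and citing the preceding theorem plus Lemma~\ref{lemma6} is the natural way to close the argument.
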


\subsection{Optimal Puncturing Table}
Like the RCPP code design under the C0 mode, we also concern the puncturing scheme under the C1 mode to optimize the spectrum distance.
\begin{lemma}\label{lemma11}
For a subtree after any puncturing scheme under the C1 mode, suppose $v_{l,m}$ and $v_{l-1,\left\lceil \frac{m}{2} \right\rceil}$ are the root node and its predecessor. Let $\zeta_n$ denote a pruned path containing the rightmost path of the subtree $\mathcal{T}\left(v_{l-1,\left\lceil \frac{m}{2} \right\rceil}\right)$. So the reliability of the root node $v_{l,m}$ can be expressed by $B\left(v_{l,m}\right)=2^{\left(d_H\left(\zeta_n\right)-1-(n-l)\right)}a_0$.
\end{lemma}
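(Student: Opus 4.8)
The plan is to mirror the argument of Lemma~\ref{lemma9} under the left--right symmetry (equivalently, all-zero path $\leftrightarrow$ all-one path, adding-one $\leftrightarrow$ doubling) that relates the C0 and C1 modes. Write $\theta_l$ for the path from the original root $v_{0,1}$ to a node $v_{l,m}$ that survives as the root of a final subtree. The heart of the proof is the auxiliary invariant $B(v_{l,m})=2^{d_H(\theta_l)}a_0$; once it is in hand, the stated formula follows by re-expressing $d_H(\theta_l)$ through the pruned path $\zeta_n$.

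I would first dispose of the base case. By Corollary~\ref{corollary10} (equivalently, by iterating Lemma~\ref{lemma6}), the first source bit pruned under the C1 mode is $u_N$, whose path $\omega_n^{(N)}=(1,\ldots,1)$ is the all-one path, so $f_H(\omega_n^{(N)})=0$ and only doublings occur along it. As established for single-bit puncturing in Theorem~\ref{theorem11} (the mirror of Theorem~\ref{theorem4}), via the $b_l=0$ line of~(\ref{bound_log_C1}), pruning $\omega_n^{(N)}$ decomposes the code tree into $n$ perfect subtrees, each rooted at the \emph{left} child $v_{l,m}$ of the depth-$(l-1)$ node lying on $\omega_n^{(N)}$, and $B(v_{l,m})$ is inherited unchanged from that predecessor, whose pre-pruning reliability is $2^{l-1}a_0$. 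Since $\theta_l=(1^{l-1},0)$ has $d_H(\theta_l)=l-1$, this is exactly the invariant.

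Next I would propagate the invariant by induction on the number of punctured bits. Every perfect subtree $\mathcal{T}(v_{l',m'})$ produced so far has root reliability $2^{d_H(\theta_{l'})}a_0$; by Lemma~\ref{lemma6} any further puncturing inside it removes the rightmost path first, so Theorem~\ref{theorem11} applies verbatim within that subtree and shows the new roots are again left children inheriting the reliability of their predecessors on the pruned rightmost path. Moving down a rightmost path applies only doublings, and the closing left-child step adds a $0$-branch (contributing nothing to $d_H$), so $B(v_{l,m})=2^{d_H(\theta_l)}a_0$ persists. Finally, to obtain the stated form, decompose $\zeta_n=(\psi_{l-1},\chi_{n-l+1})$, where $\psi_{l-1}$ runs from $v_{0,1}$ to $v_{l-1,\lceil m/2\rceil}$ and $\chi_{n-l+1}$ is the rightmost path of $\mathcal{T}(v_{l-1,\lceil m/2\rceil})$. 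By Lemma~\ref{lemma6} the latter is the all-one path of length $n-l+1$, so $d_H(\zeta_n)=d_H(\psi_{l-1})+(n-l+1)$; since $v_{l,m}$ is the left child of $v_{l-1,\lceil m/2\rceil}$ we have $\theta_l=(\psi_{l-1},0)$, hence $d_H(\theta_l)=d_H(\psi_{l-1})=d_H(\zeta_n)-1-(n-l)$, and substitution into the invariant gives $B(v_{l,m})=2^{\,d_H(\zeta_n)-1-(n-l)}a_0$.

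The step I expect to be the main obstacle is the bookkeeping in the inductive argument: making precise that ``inheritance'' across a left-child split is exact (no stray adding-one or doubling), that Lemma~\ref{lemma6} indeed forces the rightmost-first pruning order needed to reuse Theorem~\ref{theorem11} within each subtree, and that the index $m$ at depth $l$ is always odd, so that $v_{l,m}$ really is the left child of $v_{l-1,\lceil m/2\rceil}$ and $\theta_l$ ends in a $0$-branch. Once these structural facts are settled, the remaining algebra collapses to the single line above.
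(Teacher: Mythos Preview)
Your proposal is correct and follows essentially the same approach as the paper: the paper's proof is a terse ``like the proof of Lemma~\ref{lemma9}'' argument that states the invariant $B(v_{l,m})=2^{d_H(\theta_l)}a_0$ (citing Lemma~\ref{C1_mode_UB_set} rather than spelling out the base/inductive structure you give), then performs the identical decomposition $\zeta_n=(\psi_{l-1},\chi_{n-l+1})$ and uses Lemma~\ref{lemma6} to conclude $\chi_{n-l+1}$ is all-one and the final branch is $0$, yielding $d_H(\theta_l)=d_H(\zeta_n)-1-(n-l)$. Your only minor slip is citing Corollary~\ref{corollary10}, which is specific to RQUP, for the general C1 case; but your parenthetical appeal to iterating Lemma~\ref{lemma6} is exactly the right justification, so the argument stands.
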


\begin{IEEEproof}
Like the proof of Lemma \ref{lemma9}, for a root node $v_{l,m}$ on a final subtree, assuming the corresponding path $\theta_l$ is a path from the original root to this root node, by Lemma \ref{C1_mode_UB_set}, the reliability can be written as $B\left(v_{l,m}\right)=2^{d_H\left(\theta_l\right)}a_0$. Further, like the definition in lemma \ref{lemma9}, the pruned path $\zeta_n$ can be decomposed into two partial paths, that is, $\zeta_{n}=\left(\psi_{l-1},\chi_{n-l+1}\right)$. By Lemma \ref{lemma6}, the path $\chi_{n-l+1}$ is an all-one path and the branch between the node $v_{l-1,\left\lceil \frac{m}{2} \right\rceil}$ and the node $v_{l,m}$ is taken the value $0$. Hence, we have $d_H\left(\theta_l\right)=d_H\left(\psi_{l-1}\right)=d_H\left(\zeta_n\right)-d_H\left(\chi_{n-l+1}\right)=d_H\left(\zeta_n\right)-1-(n-l)$.
\end{IEEEproof}

\begin{theorem}\label{theorem14}
Given a RCPP code generated under the C1 mode, the puncturing table generated by the RQUP algorithm will maximize the spectrum distance SD0.
\end{theorem}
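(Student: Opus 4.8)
The plan is to mirror the proof of Theorem~\ref{theorem6} under the C0--C1 symmetry, with the path weight $d_H$ replaced by the complemental path weight $f_H$, the leftmost all-zero pruned path replaced by the rightmost all-one one, Lemma~\ref{lemma5} replaced by Lemma~\ref{lemma6}, Lemma~\ref{lemma9} replaced by Lemma~\ref{lemma11}, Theorem~\ref{theorem4} replaced by Theorem~\ref{theorem11}, and the inheritance rule (\ref{bound_log_C0}) replaced by (\ref{bound_log_C1}). First I would take an \emph{arbitrary} puncturing scheme under the C1 mode and observe that deleting its leaf nodes one by one --- which by Lemma~\ref{lemma6} forces the rightmost leaf of each affected subtree to be punctured first --- decomposes the code tree into a collection of perfect binary subtrees. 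By Theorem~\ref{theorem11} each such subtree has depth $l_j\le n-1$, so the code length obeys $\sum_{l_j}2^{l_j}\alpha_{l_j}=M$, where $\alpha_{l_j}\in\{0,1,2,\dots\}$ counts the depth-$l_j$ subtrees; this is the C1 analogue of (\ref{equ_code_length}).

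Next I would evaluate $\mathbb{E}\!\left[f_H(\omega_n)\right]$ over this decomposition using Lemma~\ref{lemma11}. That lemma gives every subtree root a pure power-of-two reliability $B(v_{l,m})=2^{d_H(\zeta_n)-1-(n-l)}a_0$ with \emph{no} additive term, the reason being that under the C1 mode the branch joining a subtree root to its predecessor on the all-one pruned path transmits the reliability without an adding-one step, cf.\ (\ref{bound_log_C1}). Hence the complemental path weight that governs the reliability bound of a full path equals the complemental weight \emph{inside} the relevant subtree, and inside a perfect subtree of depth $l_j$ these weights $r$ are binomially distributed as $\binom{l_j}{r}$ with mean $l_j/2$. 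Summing over all subtrees and normalizing by $M$ yields, in exact analogy with (\ref{SD1_C0_equ}),
\begin{equation*}
\mathbb{E}\!\left[f_H(\omega_n)\right]=\sum_{j}\sum_{r=0}^{l_j}\frac{1}{M}\binom{l_j}{r}\,r\,\alpha_{l_j}=\sum_{j}\frac{2^{l_j-1}}{M}\,l_j\,\alpha_{l_j}.
\end{equation*}

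Finally I would run the binary-carry argument of Theorem~\ref{theorem6}: replacing two depth-$l$ subtrees by a single depth-$(l+1)$ subtree keeps $\sum 2^{l_j}\alpha_{l_j}=M$ fixed but strictly increases the sum above, since $2\cdot\frac{2^{l-1}}{M}l=\frac{2^{l}}{M}l<\frac{2^{l}}{M}(l+1)$ (equivalently $l_{j-1}2^{l_{j-1}}<l_j2^{l_j}$ whenever $l_j\ge l_{j-1}+1$). Thus $\mathbb{E}\!\left[f_H(\omega_n)\right]$ is maximal precisely when every $\alpha_{l_j}\in\{0,1\}$, i.e.\ when the multiset of subtree depths is the binary expansion of $M$. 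By Corollary~\ref{corollary10} the RQUP algorithm prunes exactly the $Q$ rightmost leaves and therefore produces subtrees whose depths form this binary expansion (the C1 counterpart of (\ref{equ_code_length_QUP})); hence RQUP maximizes SD0. An argument analogous to Theorem~\ref{theorem8} additionally shows that this maximal SD0 lies just below the original-code value $n/2$ of Corollary~\ref{corollary2}.

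I expect the delicate point to be the middle step: one has to be certain that only the within-subtree part of the complemental weight enters the reliability bound of a leaf, so that the inherited prefix contributes nothing. This is exactly where Lemma~\ref{lemma11} --- together with the all-one structure of the pruned path guaranteed by Lemma~\ref{lemma6} --- is indispensable; without the suppression of the adding-one operation along the pruned prefix in (\ref{bound_log_C1}), the subtree-root reliability would pick up an additive term and one would instead land on an expression carrying extra summands, just as SD0 under the C0 mode does in (\ref{SD0_C0_equ}).
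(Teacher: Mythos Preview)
Your proposal is correct and follows essentially the same route as the paper: decompose the code tree into subtrees of depths $l_j$ with multiplicities $\alpha_{l_j}$ satisfying $\sum 2^{l_j}\alpha_{l_j}=M$, use Lemma~\ref{lemma11} to obtain $\mathbb{E}[f_H(\omega_n)]=\sum_j \frac{2^{l_j-1}}{M}l_j\alpha_{l_j}$, and invoke the binary-carry argument of Theorem~\ref{theorem6} to conclude that the maximum is attained when the $\alpha_{l_j}\in\{0,1\}$ form the binary expansion of $M$, which is exactly what RQUP produces via Corollary~\ref{corollary10}. Your additional remarks on why the inherited prefix contributes no additive term (via Lemma~\ref{lemma11} and (\ref{bound_log_C1})) and the closing reference to the Theorem~\ref{theorem8}-analogue are sound and consistent with the paper's Theorem~\ref{theorem16}, though not strictly needed for the present statement.
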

\begin{IEEEproof}
For an arbitrary puncturing scheme under the C0 mode, the original code tree can be decomposed into a group of subtrees.

Using the binary expansion in (\ref{equ_code_length}) and the definition of set $\mathcal{E}$ in Theorem \ref{theorem6}, we can calculate and average the spectrum distance SD0 over all subtrees by Lemma \ref{lemma11}, that is, SD0 can be written as
\begin{equation}\label{SD0_C1_equ}
\begin{aligned}
\mathbb{E}\left[f_H(\omega_n)\right]=\sum\limits_{j = 1}^{|\mathcal{E}|} \sum\limits_{r = 0}^{l_j} {\frac{1}{M}}\binom{l_j}{r}r\alpha_{l_j}
                                                    =\sum\limits_{j = 1}^{|\mathcal{E}|} {\frac{2^{{l_j}-1}}{M}}{l_j}\alpha_{l_j}. 
\end{aligned}
\end{equation}

Like the proof in Theorem \ref{theorem6}, in order to maximize SD0, the digit $\alpha_{l_j}$ should be limited to $0$ or $1$. This means that $(\alpha_{n-1},\ldots, \alpha_{0})$ is the binary representation of the code length $M$. Therefore, the RQUP algorithm can maximize SD0.
\end{IEEEproof}

\begin{theorem}
Given a RCPP code with a length $M=N-Q$, for the C1 mode, the puncturing table generated by the RQUP algorithm will maximize the spectrum distance JSD.
\end{theorem}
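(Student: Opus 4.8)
The plan is to reproduce the proof of Theorem~\ref{theorem7} line by line, replacing every C0-mode ingredient by its C1-mode mirror: Lemma~\ref{lemma11} in place of Lemma~\ref{lemma9}, Lemma~\ref{lemma6} in place of Lemma~\ref{lemma5}, Corollary~\ref{corollary10} in place of Lemma~\ref{lemma8}, and the already-established SD0 formula of Theorem~\ref{theorem14} in place of the SD1 formula of Theorem~\ref{theorem6}. First I would fix an arbitrary puncturing scheme under the C1 mode. By Lemma~\ref{lemma6} the pruning inside each subtree always proceeds from the rightmost leaf leftwards, so the code tree decomposes into a family of perfect subtrees whose depths $l_j$ obey $0\le l_j\le n-1$ (Theorem~\ref{theorem11}) and $\sum_{l_j}2^{l_j}\alpha_{l_j}=M$ as in~(\ref{equ_code_length}). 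For the subtrees of common depth $l_j$ I introduce, exactly as in the proof of Theorem~\ref{theorem7}, a pruned-path set $\mathcal{G}_j=\{\psi_{j,s}\mid s=1,\dots,\alpha_{l_j}\}$, where now each $\psi_{j,s}$ is the pruned path containing the \emph{rightmost} path of the parent subtree, and I set $n_{j,s}=d_H(\psi_{j,s})-1$, the path-weight analogue of the complemental quantity used for the C0 case.

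Next I would compute SD1 under the C1 mode. By Lemma~\ref{lemma11} the reliability of a subtree root node is $2^{\,d_H(\zeta_n)-1-(n-l)}a_0=2^{\,n_{j,s}-l_j}a_0$, so inside that subtree the leaf path weights are shifted by the constant $n_{j,s}-l_j$. Averaging over all subtrees and using the mean of the binomial distribution (as in Corollary~\ref{corollary2}) gives the C1 counterpart of~(\ref{SD0_C0_equ}):
\begin{equation}
\mathbb{E}\left[d_H(\omega_n)\right]=\sum_{j=1}^{|\mathcal{E}|}\frac{2^{l_j}}{M}\sum_{s=1}^{\alpha_{l_j}}n_{j,s}-\mathbb{E}\left[f_H(\omega_n)\right],
\end{equation}
where the subtracted term is precisely the SD0 expression~(\ref{SD0_C1_equ}) from Theorem~\ref{theorem14}. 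Adding $\mathbb{E}[f_H(\omega_n)]$ to both sides cancels the $\binom{l_j}{k}l_j$ contributions and collapses the JSD to the clean form
\begin{equation}
\mathbb{E}\left[d_H(\omega_n)\right]+\mathbb{E}\left[f_H(\omega_n)\right]=\sum_{j=1}^{|\mathcal{E}|}\frac{2^{l_j}}{M}\sum_{s=1}^{\alpha_{l_j}}n_{j,s},
\end{equation}
which is the exact mirror of~(\ref{JSD_C0_equ}).

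From here the argument is identical to that of Theorem~\ref{theorem7}: I read the right-hand side as a base-$2$ carry computation from low order to high order. If an arbitrary puncturing has $\alpha_{l_j}=2$, contributing digits $n_{j,1},n_{j,2}$ at order $l_j$, then merging those two depth-$l_j$ subtrees into a single depth-$l_{j+1}$ subtree (which is what RQUP does, since $l_{j+1}\ge l_j+1$) leaves a single digit $n'_{j+1,1}=\max\{n_{j,1},n_{j,2}\}$, and since $n_{j,1}+n_{j,2}\le 2\max\{n_{j,1},n_{j,2}\}\le 2^{\,l_{j+1}-l_j}\max\{n_{j,1},n_{j,2}\}$ we obtain
\begin{equation}
2^{l_j}\left(n_{j,1}+n_{j,2}\right)\le 2^{l_{j+1}}n'_{j+1,1}.
\end{equation}
Thus the JSD never decreases under such a merge, so it is maximized exactly when all $\alpha_{l_j}\in\{0,1\}$, i.e. when $(\alpha_{n-1},\dots,\alpha_0)$ is the binary expansion of $M$; by Corollary~\ref{corollary10} this is precisely the subtree profile produced by the RQUP table, so RQUP maximizes JSD.

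The carry/optimization step is routine once the formula is in hand; the delicate part is the second step — making the bookkeeping of Lemma~\ref{lemma11} precise enough to see that the inherited exponent $n_{j,s}-l_j$ enters SD1 with exactly the coefficient and sign needed for the $l_j$-terms to cancel against Theorem~\ref{theorem14}'s SD0 sum, leaving the symmetric closed form $\sum_j \tfrac{2^{l_j}}{M}\sum_s n_{j,s}$. I expect this telescoping, rather than anything in the merge inequality, to be where care is required; everything else is a transcription of the C0 proof.
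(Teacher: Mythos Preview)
Your proposal is correct and follows essentially the same approach as the paper: introduce the pruned-path set $\mathcal{G}_j$ with digits $n_{j,s}=d_H(\psi_{j,s})-1$, use Lemma~\ref{lemma11} to derive the SD1 formula $\mathbb{E}[d_H(\omega_n)]=\sum_j\frac{2^{l_j}}{M}\sum_s n_{j,s}-\mathbb{E}[f_H(\omega_n)]$, and then invoke the carry argument of Theorem~\ref{theorem7} verbatim. The paper's own proof is in fact terser than yours, simply stating the SD1 identity and deferring to ``the same proof in Theorem~\ref{theorem7}''; your explicit treatment of the telescoping and the merge inequality is a faithful expansion of what the paper leaves implicit.
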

\begin{IEEEproof}
Like the proof of Theorem \ref{theorem7}, for an arbitrary puncturing scheme under the C1 mode, we introduce a pruned path set $\mathcal{G}_j=\left\{\psi_{j,s}\left|s=1,2,\cdots,\alpha_{l_j}\right.\right\}$ and a digit $n_{j,s}=d_H\left(\psi_{j,s}\right)-1$. So SD1 can be calculated and averaged over all subtrees by Lemma \ref{lemma11}, that is,
\begin{equation}\label{SD1_C1_equ}
\mathbb{E}\left[d_H(\omega_n)\right]
                  =\sum\limits_{j = 1}^{|\mathcal{E}|} {\frac{2^{l_j}}{M}}\sum\limits_{s = 1}^{\alpha_{l_j}}n_{j,s}-\mathbb{E}\left[f_H(\omega_n)\right]. 
\end{equation}
By the same proof in Theorem \ref{theorem7}, we conclude that RQUP algorithm can maximize the JSD.
\end{IEEEproof}

\begin{corollary}
The PWEF0 of a RCPP code constructed by RQUP algorithm is
\begin{equation}
\mathcal{C} (X)=\sum\limits_{j=1}^{|\mathcal{F}|} (1+X)^{l_j}
                      =\sum\limits_{j=1}^{|\mathcal{F}|}\sum\limits_{r=0}^{l_j} \binom{l_j}{r}X^{r},
\end{equation}
\end{corollary}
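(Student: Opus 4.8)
The plan is to obtain this PWEF0 formula as the C1-mode mirror of the PWEF1 formula already established for the QUP construction, so the argument runs parallel to that one, with the roles of $d_H$ and $f_H$ (equivalently, of $0$- and $1$-branches) interchanged. The three ingredients I would line up are Corollary~\ref{corollary10} (which leaves are pruned by RQUP), the binary-carry step inside the proof of Theorem~\ref{theorem14} (the shape of the surviving forest), and the elementary binomial count on a perfect binary subtree, cf.\ Lemma~\ref{lemma7}.

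First I would use Corollary~\ref{corollary10}: the RQUP table prunes the $Q$ rightmost leaves of the parent code tree, so the surviving code tree is the forest spanned by its $M$ leftmost leaves. Next I would recall from the proof of Theorem~\ref{theorem14} that, writing $M=\sum_{l_j}2^{l_j}\alpha_{l_j}$ as in (\ref{equ_code_length}), maximizing SD0 forces every $\alpha_{l_j}\in\{0,1\}$, i.e.\ $(\alpha_{n-1},\dots,\alpha_0)$ is the binary expansion of $M$. Combined with the greedy left-to-right decomposition of a leaf-prefix into maximal perfect subtrees (the C1 counterpart of (\ref{equ_code_length_QUP})), this identifies the surviving forest as exactly one perfect binary subtree of depth $l_j$ for each $j$ with $\alpha_{l_j}=1$; writing $\mathcal{F}=\{l_j:\alpha_{l_j}=1\}$ we get $\sum_{j=1}^{|\mathcal{F}|}2^{l_j}=M$ and $|\mathcal{F}|$ subtrees in all.

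The second half is the spectrum count. A root-to-leaf path of a perfect binary subtree of depth $l_j$ is a binary word of length $l_j$, and its complemental path weight $r$ is just the number of its $0$-branches, so that subtree carries exactly $\binom{l_j}{r}$ paths of complemental path weight $r$ (equivalently, its PS0 is that of an original polar code of length $2^{l_j}$, by Lemma~\ref{lemma7}). Summing over the $|\mathcal{F}|$ subtrees gives $C_M^{(r)}=\sum_{j=1}^{|\mathcal{F}|}\binom{l_j}{r}$ for $0\le r\le n$, with $\binom{l_j}{r}=0$ when $r>l_j$. Substituting into $\mathcal{C}(X)=\sum_{r=0}^{n}C_M^{(r)}X^r$ and applying the binomial theorem then gives $\mathcal{C}(X)=\sum_{j=1}^{|\mathcal{F}|}\sum_{r=0}^{l_j}\binom{l_j}{r}X^r=\sum_{j=1}^{|\mathcal{F}|}(1+X)^{l_j}$, which is the claimed expression. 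A quick sanity check is that $\frac{1}{M}\left.\frac{d\mathcal{C}(X)}{dX}\right|_{X=1}=\sum_{j=1}^{|\mathcal{F}|}\frac{2^{l_j-1}}{M}l_j$ reproduces the SD0 value in (\ref{SD0_C1_equ}).

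I expect the only point needing care — rather than a genuine obstacle — is justifying that the complemental path weight ``on the parent code tree of depth $n$'' may be read off as the complemental path weight ``within the surviving subtree of depth $l_j$'' that contains the leaf. This is exactly the bookkeeping already legitimized when SD0 was computed through Lemma~\ref{lemma11} and (\ref{SD0_C1_equ}): by Lemma~\ref{lemma6} the pruned prefix feeding a subtree is all-one, it only rescales that subtree's root reliability, and it does not enter the $d_H/f_H$ recursion inside the subtree, so no new estimate is required.
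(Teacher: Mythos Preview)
The paper states this corollary without proof, treating it as an immediate consequence of the subtree decomposition already used to compute SD0 in (\ref{SD0_C1_equ}); your proposal correctly supplies that argument by invoking Corollary~\ref{corollary10}, the binary-expansion step from Theorem~\ref{theorem14}, and the binomial count on each perfect subtree. Your approach is exactly what the paper's framework implies, and your last paragraph correctly identifies (and resolves via Lemma~\ref{lemma11}) the only subtlety, namely that ``complemental path weight'' here is measured within the surviving subtree rather than along the full depth-$n$ path.
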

where the set $\mathcal{F}$ is defined in (\ref{equ_code_length_QUP}).

\begin{theorem}\label{theorem16}
Suppose a RCPP code with a length $M=N-Q$, for the C1 mode, the SD0 and JSD corresponding to RQUP satisfies $\frac{n-2}{2}\le \mathbb{E}\left[f_H(\omega_n)\right]\le \frac{n-1}{2}$ and ${n-2} \le d_{avg}+\lambda_{avg} \le {n-1}$ respectively.
\end{theorem}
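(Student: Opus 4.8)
The plan is to reduce Theorem \ref{theorem16} to the estimates already established for the C0 mode in Theorem \ref{theorem8} and Corollary \ref{corollary8}, exploiting the mirror symmetry between RQUP/C1 and QUP/C0 (with the roles of $d_H$ and $f_H$ interchanged). First I would recall from the proof of Theorem \ref{theorem14} that under RQUP the code tree splits into subtrees of pairwise distinct depths $l_j$, $j=1,\dots,|\mathcal{F}|$, where $\mathcal{F}$ is the index set in (\ref{equ_code_length_QUP}), so that $\sum_{j}2^{l_j}=M$, $0\le l_j\le n-1$ (the upper bound $n-1$ coming from Theorem \ref{theorem11}), $l_{|\mathcal{F}|}=n-1$ since $2^{n-1}<M\le 2^n$, and
\begin{equation}
\mathbb{E}\left[f_H(\omega_n)\right]=\sum_{j=1}^{|\mathcal{F}|}\frac{2^{l_j-1}}{M}l_j .
\end{equation}
This is formally identical to the SD1 expression for QUP in (\ref{SD1_C0_equ}), so the two-sided bound $\frac{n-2}{2}\le\mathbb{E}\left[f_H(\omega_n)\right]\le\frac{n-1}{2}$ is obtained by transcribing the argument of Theorem \ref{theorem8}: the right inequality from $l_j\le n-1$ together with $\sum_j 2^{l_j}/M=1$; the left inequality by writing $S_0=\sum_j\frac{2^{l_j-1}}{M}l_j$, $S_1=\frac{n-2}{2}$, isolating the top term $l_{|\mathcal{F}|}=n-1$, and invoking the arithmetico-geometric identity $\sum_{k=0}^{n-2}(n-2-k)2^k=2^{n-1}-n$ to conclude $S_0-S_1\ge\frac{n}{2M}>0$.

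For the JSD, I would mimic Corollary \ref{corollary8}. By Lemma \ref{lemma11}, each depth-$l_j$ subtree produced by RQUP is associated with a single pruned path $\psi_{j,1}$ containing the rightmost (all-one) path of the parent subtree; setting $n_{j,1}=d_H\left(\psi_{j,1}\right)-1$ and combining (\ref{SD0_C1_equ}) with (\ref{SD1_C1_equ}) gives
\begin{equation}
d_{avg}+\lambda_{avg}=\sum_{j=1}^{|\mathcal{F}|}\frac{2^{l_j}}{M}n_{j,1}.
\end{equation}
Because the pruned path enters the depth-$l_j$ subtree along a $0$-labelled branch after traversing the rightmost path of the parent subtree (which has depth $l_j+1$ and hence contributes $l_j+1$ ones), one has $l_j\le n_{j,1}\le n-1$. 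The upper bound $d_{avg}+\lambda_{avg}\le n-1$ then follows from $n_{j,1}\le n-1$ together with $\sum_j 2^{l_j}/M=1$, and the lower bound from $n_{j,1}\ge l_j$, which yields $d_{avg}+\lambda_{avg}\ge\sum_j\frac{2^{l_j}}{M}l_j=2\,\mathbb{E}\left[f_H(\omega_n)\right]\ge n-2$ by the SD0 bound just established.

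The only genuinely new point requiring care -- and the one I expect to demand the most attention -- is verifying that the C0/C1 dictionary is exact at the level of the subtree decomposition. Concretely, I must check that Lemma \ref{lemma6} forces the \emph{right} descendant (so the pruned path enters the surviving subtree on a $0$-branch, dual to the $1$-branch in the C0 case), that this turns the complemental-weight bookkeeping of Lemma \ref{lemma9} into the Hamming-weight bookkeeping of Lemma \ref{lemma11} with no off-by-one shift in $n_{j,1}$, and that the depth range $0\le l_j\le n-1$ together with $l_{|\mathcal{F}|}=n-1$ carries over verbatim. Once this is settled, every inequality in the statement is a line-by-line translation of the C0 proofs and no fresh estimate is needed.
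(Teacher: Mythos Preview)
Your proposal is correct and takes essentially the same approach as the paper: the paper's proof consists solely of the remark that the argument is similar to that of Theorem~\ref{theorem8} and Corollary~\ref{corollary8}, and you have carried out exactly that transcription, interchanging $d_H\leftrightarrow f_H$, QUP$\leftrightarrow$RQUP, Lemma~\ref{lemma9}$\leftrightarrow$Lemma~\ref{lemma11}, and leftmost$\leftrightarrow$rightmost paths. Your identification of the only point requiring care (the exactness of the C0/C1 dictionary at the level of the subtree decomposition) is apt, and the bounds $l_j\le n_{j,1}\le n-1$ follow just as in the C0 case once one notes that the pruned path $\zeta_n$ contains the rightmost (all-ones) path of the depth-$(l_j+1)$ parent subtree, so $d_H(\zeta_n)\ge l_j+1$.
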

The proof is similar to that of Theorem \ref{theorem8} and Corollary \ref{corollary8}.

\section{Numerical Analysis and Simulation Results}
\label{section_VII}
In this section, at first, we compare various puncturing schemes under the C0 or C1 modes by calculating the spectrum distances SD0/SD1. Then RCPP codes based on different puncturing schemes under SC or SCL decodings are evaluated. Furthermore, the BLERs of RCPP and turbo codes are also compared via simulations over AWGN channels.
\subsection{Numerical Analysis of Puncturing Schemes}
We compare the spectrum distances of various puncturing schemes. For the C0 mode, we mainly concern three typical puncturing schemes, such as QUP algorithm \cite{RCPP_Niu}, the algorithm proposed by Eslami \emph{et al.} \cite{Finite_Eslami} and that proposed by Shin \emph{et al.} \cite{LCPC_Shin}. On the other hand, for the C1 mode, we mainly investigate two puncturing schemes, such as RQUP algrithm proposed in this paper and the algorithm proposed by Wang \emph{et al.} \cite{Novel_Punc}. For the latter, given the generator $\mathbf{G}_N$, the index of column with column weight 1 is selected as the punctured position. However there may be many selections for the column weight 1 as stated in \cite[Algorithm1]{Novel_Punc}. In order to simplify evaluation, we use a puncturing table where the last $Q$ code bits are punctured as a reference of Wang algorithm.

\begin{figure}[h]
  \centering
  \includegraphics[width=1\columnwidth]{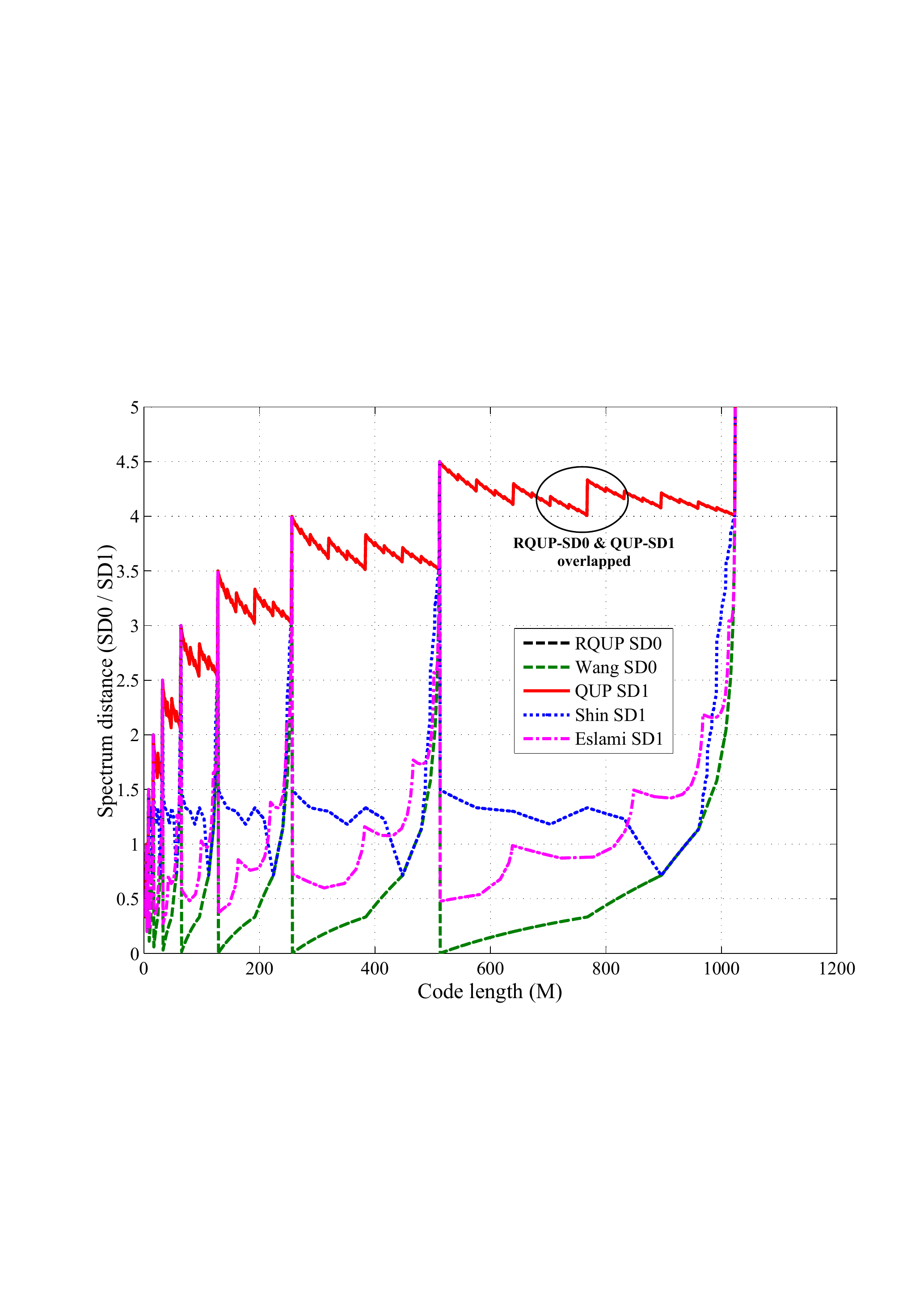}
  \caption{Spectrum distances (SD1 or SD0) for different puncturing schemes under the C0 or C1 modes}
  \label{fig_SD0_SD1_vs_code_length}
\end{figure}

For all the puncturing schemes under the C0 or C1 modes, the spectrum distances SD1/SD0 versus code length ($M=1\sim1024$) are shown in Fig. \ref{fig_SD0_SD1_vs_code_length}. Among the three puncturing schemes (QUP/Shin/Eslami) under the C0 mode, the SD1 of QUP algorithm is larger than that of the others due to the optimal polar spectra PS1. Similarly, the SD0 of RQUP is better than that of Wang method due to the optimal PS0. Recall that the polar spectra of QUP and RQUP schemes are symmetrical, the SD1 of QUP and SD0 of RQUP are overlapped as depicted in Fig. \ref{fig_SD0_SD1_vs_code_length}. Further, we observe that the SD1 (SD0) of QUP (RQUP) is distributed between $\frac{n-2}{2}$ and $\frac{n-1}{2}$ which is consistent with Theorem \ref{theorem8} (\ref{theorem16}).

For JSDs of all the puncturing schemes, we can observe the similar results, that is, QUP or RQUP have the maximal JSDs under the C0 or C1 modes. Due to the limitation of space, these results are not shown here. However, the performance comparison just based on JSD may result a bias conclusion. As an example, SD1 versus SD0 at the code length $M=990\sim1024$ for all the schemes is drawn in Fig. \ref{fig_SD1_vs_SD0}.

\begin{figure}[h]
  \centering
  \includegraphics[width=1\columnwidth]{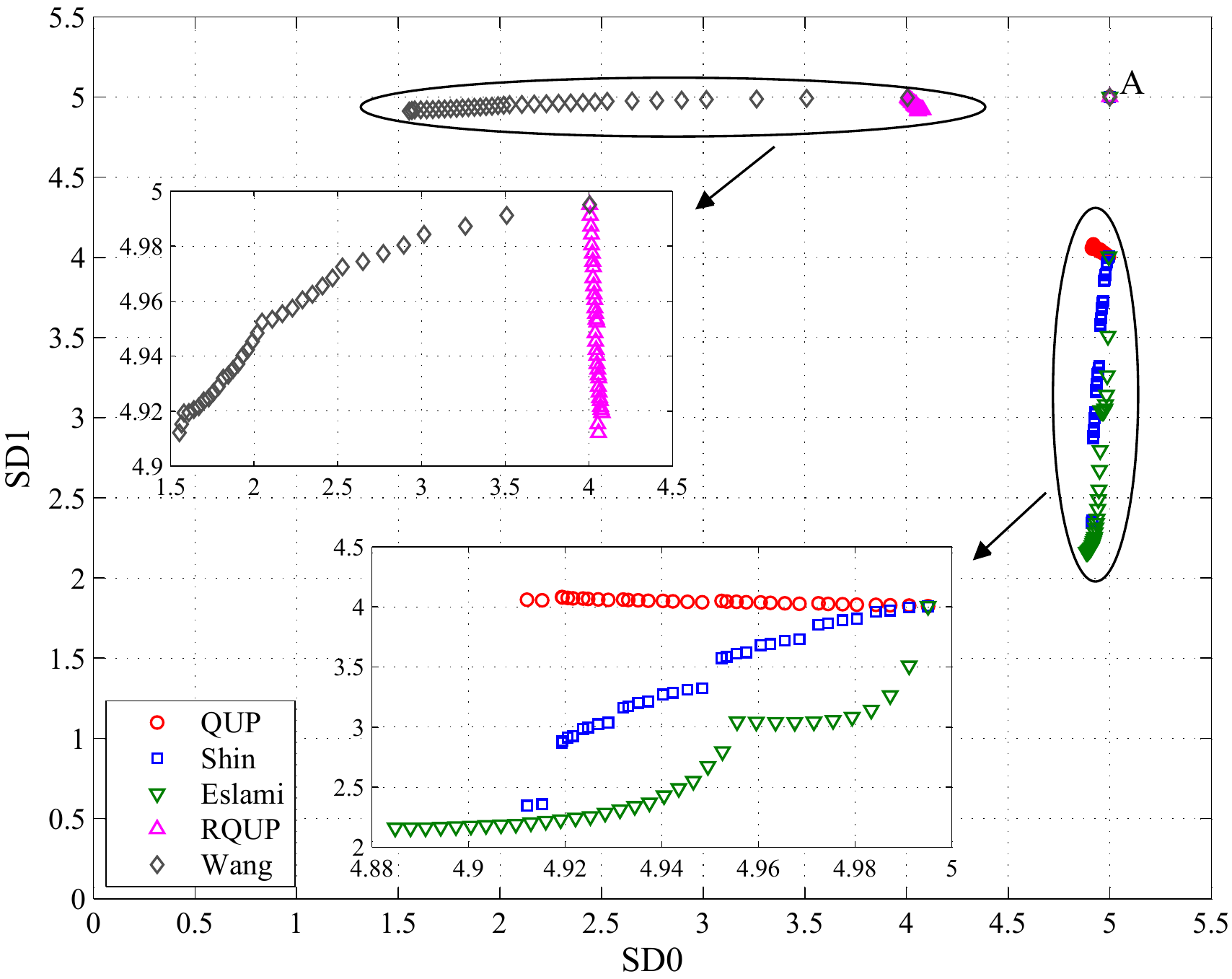}
  \caption{Spectrum distance (SD1) vs Spectrum distance (SD0)}
  \label{fig_SD1_vs_SD0}
\end{figure}

In this $2$-D chart, the point A located at $(5,5)$ is relative to the SD1/SD0 of the original polar code with the code length $1024$. Recall that the aim of RCPP codes optimization is to approach the spectrum distances of the parent codes as close together as possible, that is, in this chart, the more one point corresponding to a puncturing scheme is close to the point A, the better this scheme will achieve an error performance. All points relative to QUP are concentrated at $(5,4)$ and all points relative to RQUP at $(4,5)$. Obviously, among the three puncturing schemes (QUP/Shin/Eslami) under the C0 mode, QUP has the maximal value of SD1 when the value of SD0 is fixed. On the other hand, given the fixed SD1, the SD0 of QUP is larger than that of Wang scheme.

\subsection{Simulation Results}
First, we compare the error performance of RCPP codes with various puncturing schemes under the BI-AWGN channels. The Gaussian approximation algorithm \cite{GA_Trifonov} is applied to construct these codes. Given the SC decoding and the parent code length $N=1024$, the BLER performance comparisons of RCPP codes based on all the puncturing schemes with the code length $M=700$ are shown in Fig. \ref{fig_varrate_SC_comp} for the code rate $R=1/3$, $R=3/4$ and $R=1/2$ respectively.

\begin{figure}[h]
  \centering
  \includegraphics[width=1.02\columnwidth]{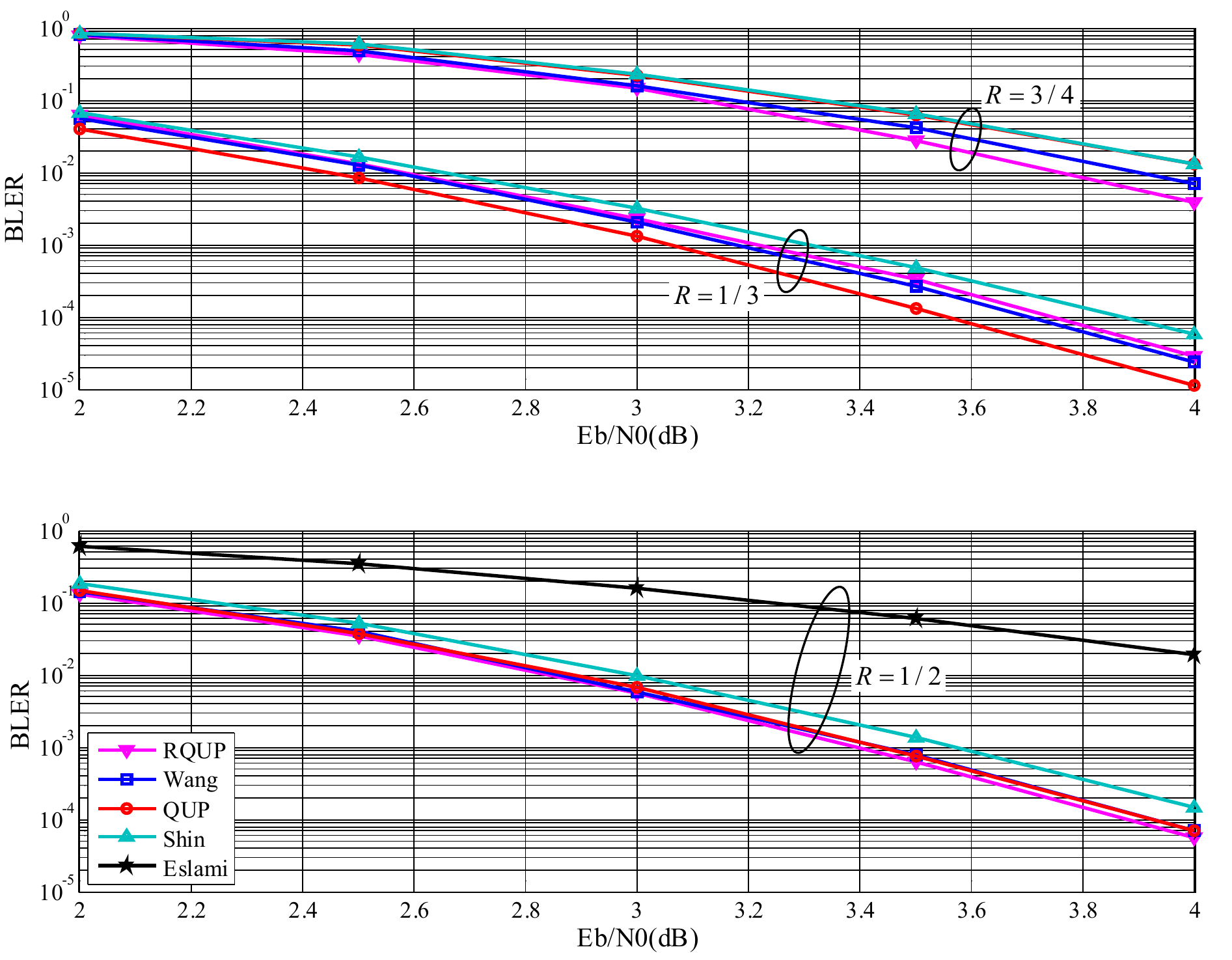}
  \caption{BLER performance comparisons under SC decoding for RCPP codes constructed by various puncturing schemes with the code length $M=700$ and code rate $R=1/3,1/2,3/4$ (the parent code length ${N}=1024$).}
  \label{fig_varrate_SC_comp}
\end{figure}
For the low code rate $R=1/3$, compared with other schemes, such as Wang, RQUP and Shin algorithms, we can see that QUP achieves the best error performance. On the other hand, for the high code rate $R=3/4$, RQUP is the best one among all the puncturing schemes. These results are consistent with the analysis in Section \ref{section_V} and \ref{section_VI}. Further, we find that the error performance of QUP is worse than that of RQUP in the high code rate and vice versa in the low code rate. Especially, we observe that the schemes of QUP, RQUP and Wang can achieve almost the same performance and they are better than Shin or Eslami schemes. These phenomena may imply that the code rate $R=1/2$ is a critical value. So RQUP will be the best scheme when $R>1/2$ and QUP will be the best one when $R<1/2$.

Next we compare the performance of RCPP and turbo codes under AWGN channel. RCPP codes are constructed from the parent code with the code length $N=1024$ by QUP or RQUP schemes and CA-SCL is used as a decoding algorithm with the maximum list size $32$. An eight-state turbo code in 3GPP LTE standard \cite{LTE_36212} is used as a reference. A CRC code is used in all concatenation coding schemes (both for turbo and RCPP codes). The Log-MAP algorithm is applied in turbo decoding and the maximum number of iterations is ${{I}_{\max }}=8$.

\begin{figure}[h]
  \centering
  \includegraphics[width=1.02\columnwidth]{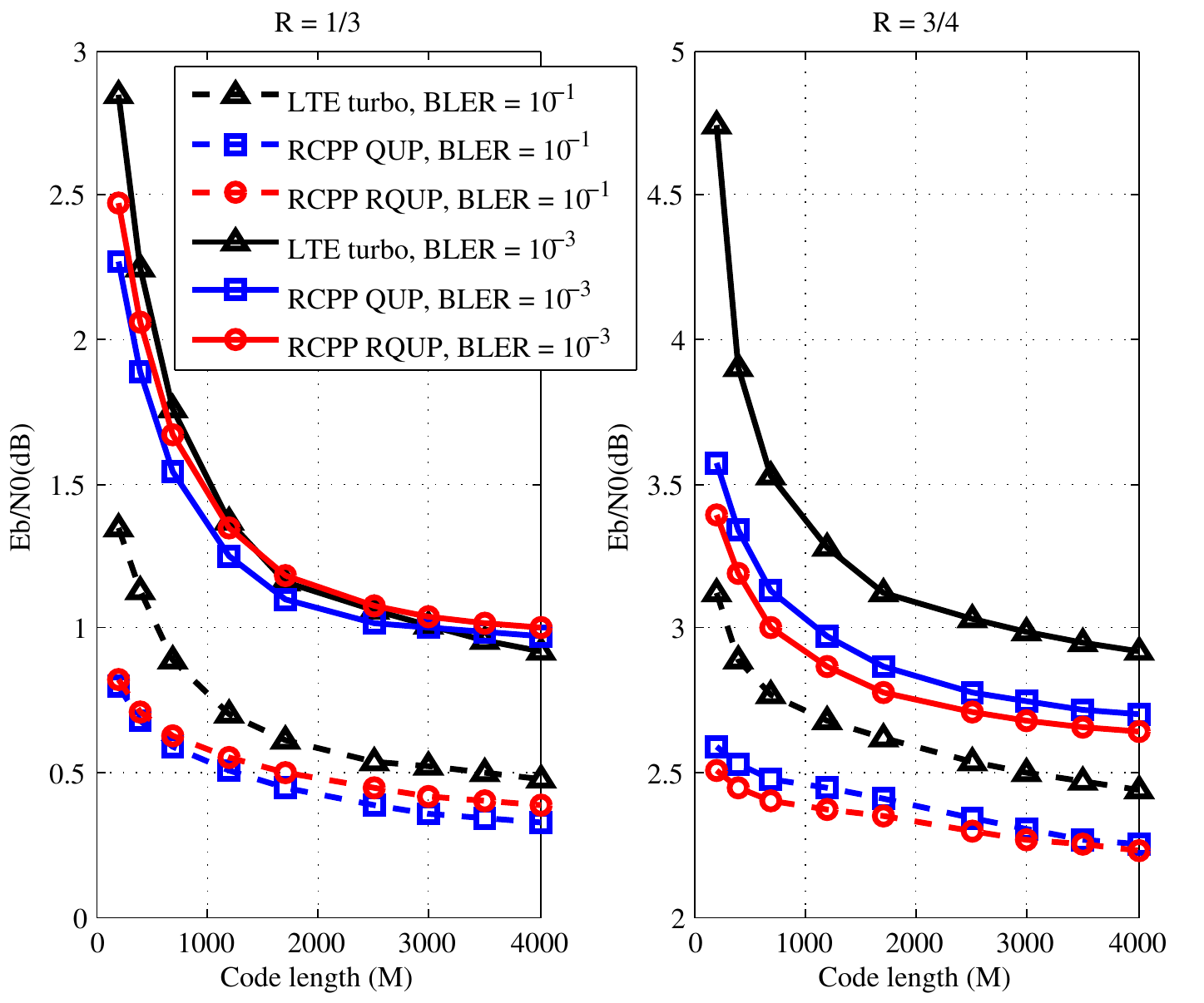}
  \caption{Eb/N0 vs code length of RCPP codes (punctured by QUP and RQUP algorithms) and LTE turbo codes at the code rate $R=1/3,3/4$ and BLER of ${{10}^{-1}},{{10}^{-3}}$}
  \label{fig_coding_gain33}
\end{figure}

We investigate the relationship of bit signal noise ratio (SNR) and code length for these two codes. The performance curves of $E_b/N_0$ vs code length $M$ ($200\thicksim4000$) for the LTE turbo and RCPP codes (punctured by QUP and RQUP algorithms) at the BLER of ${{10}^{-1}},{{10}^{-3}}$ and code rate $R=1/3,3/4$ are shown in Fig. \ref{fig_coding_gain33}.

In most cases, RCPP codes can achieve additional coding gains relative to LTE turbo codes. For the low code rate $R=1/3$, as shown in Fig. \ref{fig_coding_gain33}, a maximum $0.5$ dB additional gain can be obtained at the code length $M=200$ and the RCPP codes punctured by QUP algorithms can achieve slightly better performance than those codes punctured by RQUP. On the other hand, for the high code rate $R=3/4$, a maximum $1.35$ dB performance gain can be attained at the code length $M=200$ and the BLER of $10^{-3}$. In contrast to the case of low code rate, the RQUP algorithms can generate better RCPP codes in this case.


For the medium code rate $R=1/2$, additional gain can be obtained and the RCPP codes punctured by QUP or RQUP algorithms can achieve the same performance. Due to the limitation of space, these results are not shown.

\section{Conclusions}
\label{section_VIII}

In this paper, we propose a theoretic framework based on the polar spectra to analyze and design rate-compatible punctured polar code. Guided by the spectrum distances, two simple quasi-uniform puncturing methods (QUP and RQUP) are proposed to generate the puncturing tables under the C0/C1 modes. By the analysis of the performance metrics, such as SD0/SD1/JSD, we prove that these two algorithms can achieve the maximal value of corresponding spectrum distance. Simulation results in AWGN channel show that the performance of RCPP codes by QUP or RQUP can be equal to or exceed that of the turbo codes at the same code length.

\section{Puncture Algorithm}

\begin{algorithm}[htbp]
\setlength{\abovecaptionskip}{0.cm}
\setlength{\belowcaptionskip}{-0.cm}
\caption{Optimal SF search algorithm}\label{SF_search_algorithm}
\KwIn {SCMA system parameters and the precision $\varepsilon $;}
\KwOut {Optimal SF $\alpha^*$;}
Initialize $a = 0$, $b = 1$, and calculate $\Theta '\left( a \right)$, $\Theta '\left( b \right)$\;
\While{$\left| {a - b} \right| > \varepsilon $}
{
    Calculate $c = \left( {a + b} \right)/2$\;
    Run the S-EXIT method to calculate $\Theta '\left( c \right)$\;
    \If{$\Theta '\left( a \right) \cdot \Theta '\left( c \right) \le 0$}
    {
        Let $b = c$; \tcc*[f]{Optimal SF $\alpha^*  \in \left( {a,c} \right]$}
    }
    \Else
    {
        Let $a = c$; \tcc*[f]{Optimal SF $\alpha^*  \in \left( {c,b} \right)$}
    }
}
Record $\alpha^*  = \left( {a + b} \right)/2$\;
\end{algorithm}

\appendix
\subsection{Proof of Lemma \ref{lemma2}}
Without loss of generality, we consider the polarization of the first scenario $\mathscr{T}_2=(0,1)$, that is, $\left(\mathbb{W},W\right)\mapsto\left(\widetilde{W}_2^{(1)},\widetilde{W}_2^{(2)}\right)$. Under the C0 mode, due to $\forall y_1, \mathbb{W}\left(y_1\left|0\right.\right)=\mathbb{W}\left(y_1\left|1\right.\right)=\frac{1}{2}$, for $\forall y_1^2$, the transition probabilities of polarized channel $\widetilde{W}_2^{(1)}$ can be written by
\begin{equation}
\begin{aligned}
&\widetilde{W}_2^{(1)}\left(y_1^2\left|0\right.\right)=\sum \limits_{u_2}\frac{1}{2}\mathbb{W}\left(y_1\left| u_2 \right.\right) W\left(y_2\left|u_2\right.\right)\\
                                                                       &=\mathbb{W}\left(y_1\left| 0 \right.\right) \sum \limits_{u_2}\frac{1}{2} W\left(y_2\left|u_2\right.\right)
                                                                       =\widetilde{W}_2^{(1)}\left(y_1^2\left|1\right.\right).
\end{aligned}
\end{equation}
So the channel $\widetilde{W}_2^{(1)}$ is degraded to a punctured channel and $Z\left(\widetilde{W}_2^{(1)}\right)=1$.

On the other hand, we analyze the LLR of polarized channel $\widetilde{W}_2^{(2)}$.
Let $\mathbb{L}(y_2)=\ln \frac{W(y_2\left|0\right.)}{W(y_2\left|1\right.)}$ denote the LLR of B-DMC $W(y_2\left|u_2\right.)$. Due to $\mathbb{L}(y_1)=0$, the corresponding probability density function (PDF) is $F(\mathbb{L}(y_1))=\delta(y_1)$, where $\delta(\cdot)$ is the Dirac function. Since the source bit $u_2$ is relative to a variable node, the PDF of the corresponding LLR can be derived as
\begin{equation}
\begin{aligned}
F\left(\mathbb{L}\left(y_1^2,u_1\right)\right)&=F\left(\mathbb{L}\left(y_1\right)\right)\ast F\left(\mathbb{L}\left(y_2\right)\right)\\
                                                 &=\delta(y_1)\ast F\left(\mathbb{L}\left(y_2\right)\right)=F\left(\mathbb{L}\left(y_2\right)\right),
\end{aligned}
\end{equation}
where $\ast$ is the convolutional operation. So the polarized channel $\widetilde{W}_2^{(2)}$ has the same reliability as that of the original B-DMC, that is, $Z(\widetilde{W}_2^{(2)})=Z(W)=Z_0$.

For the second puncturing scheme, as shown in Fig. \ref{fig_two_channel}(b), based on the symmetric property of swapping two channels ($\mathbb{W}$ and $W$), we can conclude the same results.

\subsection{Proof of Lemma \ref{C1_mode_BC}}\label{proof_C1_mode_BC}
The Bhattacharyya parameters of the polarized channels $\widetilde {W}_N^{(i)}$ under the C1 mode can be written by
\begin{equation}
\begin{aligned}
&Z\left(\widetilde{W}_N^{(i)}\right)\\
&=\sum \limits_{y_1^N \in \mathcal{Y}^N} \sum \limits_{u_1^{i-1} \in \mathcal{X}^{i-1}}
\sqrt {\widetilde{W}_N^{(i)}\left(y_1^N,u_1^{i-1}\left|0\right.\right)\widetilde{W}_N^{(i)}\left(y_1^N,u_1^{i-1}\left|1\right.\right)}.
\end{aligned}
\end{equation}

By the coding relationship $u_1^N\mathbf{G}_N=x_1^N$, the channel transition probabilities can be presented as
\begin{equation}
\begin{aligned}
&\widetilde{W}_N^{(i)}\left(y_1^N,u_1^{i-1}\left|u_i\right.\right)\\
&=\frac{1}{2^{N-1}}\sum \limits_{x_{\mathcal{B}^c}} \prod \limits_{j\in \mathcal{B}^{c}} {W\left(y_j\left|x_j\right.\right)}\sum \limits_{x_\mathcal{B}} \prod \limits_{m\in \mathcal{B}} {\mathbb{W}\left(y_m\left|x_m\right.\right)},
\end{aligned}
\end{equation}
where the punctured vector $x_{\mathcal{B}}$ is composed of the punctured code bits $x_m$ ($m\in\mathcal{B}$), and the corresponding received vector can be written by $y_{\mathcal{B}}=\left\{y_m\right\}_{m\in \mathcal{B}}$.

Under the C1 mode, we assume $\mathbb{W}\left(\dot{y}_m|\dot{x}_m\right)=1$ is only true for each specific pair $\left(\dot{x}_m,\dot{y}_m\right)$. Therefore, for the specific vector $\dot{y}_{\mathcal{B}}$, we can write $\sum \nolimits_{x_\mathcal{B}} \prod \nolimits_{m\in \mathcal{B}} {\mathbb{W}\left(y_m\left|x_m\right.\right)}=\prod \nolimits_{m\in \mathcal{B}} {\mathbb{W}\left(\dot{y}_m\left|\dot{x}_m\right.\right)}=1$. Let $x_1^N=\left(u_1^{i-1},0,u_{i+1}^N\right)\mathbf{G}_N$ and ${x'}_1^N=\left(u_1^{i-1},1,{u'}_{i+1}^N\right)\mathbf{G}_N$. Furthermore, these two vectors satisfy $x_1^N=\left(x_{\mathcal{B}^c},\dot{x}_{\mathcal{B}}\right)$ and ${x'}_1^N=\left({x'}_{\mathcal{B}^c},\dot{x}_{\mathcal{B}}\right)$ respectively. So we have
\begin{equation}\label{C1_BC_Nchannel}
\begin{aligned}
&Z\left(\widetilde{W}_N^{(i)}\right)
=\sum \limits_{y_1^N \in \mathcal{Y}^M} \sum \limits_{u_1^{i-1} \in \mathcal{X}^{i-1}}\frac{1}{2^{N-1}}\\
&\cdot\sqrt {\sum \limits_{x_{\mathcal{B}^c}} \prod \limits_{j\in \mathcal{B}^{c}} {W\left(y_j\left|x_j\right.\right)}
           \sum \limits_{x'_{\mathcal{B}^c}} \prod \limits_{j\in \mathcal{B}^{c}} {W\left(y_j\left|x'_j\right.\right)}}
\end{aligned},
\end{equation}
where $y_1^N=\left(y_{\mathcal{B}^c},\dot{y}_{\mathcal{B}}\right)$.

On the contrary, if the vectors $x_{\mathcal{B}}$, ${x'}_{\mathcal{B}^c}$ and $y_{\mathcal{B}}$ can be arbitrarily selected, we have
\begin{equation}\label{prob_inequ}
\sum \limits_{y_{\mathcal{B}}} \sum \limits_{x_{\mathcal{B}}} \prod \limits_{m\in \mathcal{B}} {W\left(y_m\left|x_m\right.\right)} \sum \limits_{x'_{\mathcal{B}}} \prod \limits_{m\in \mathcal{B}} {W\left(y_m\left|x'_m\right.\right)}> 1.
\end{equation}

Substituting (\ref{prob_inequ}) into (\ref{C1_BC_Nchannel}), the Bhattacharyya parameters can be enlarged by
\begin{equation}
\begin{aligned}
&Z\left(\widetilde{W}_N^{(i)}\right)
< \sum \limits_{y_{\mathcal{B}^c} \in \mathcal{Y}^M} \sum \limits_{u_1^{i-1} \in \mathcal{X}^{i-1}}\frac{1}{2^{N-1}}\\
&\cdot\sqrt {\sum \limits_{x_{\mathcal{B}^c}} \prod \limits_{j\in \mathcal{B}^{c}} {W\left(y_j\left|x_j\right.\right)}
           \sum \limits_{x'_{\mathcal{B}^c}} \prod \limits_{j\in \mathcal{B}^{c}} {W\left(y_j\left|x'_j\right.\right)}}\\
&\cdot \sqrt {\sum \limits_{y_{\mathcal{B}}} \sum \limits_{x_{\mathcal{B}}} \prod \limits_{m\in \mathcal{B}} {W\left(y_m\left|x_m\right.\right)} \sum \limits_{x'_{\mathcal{B}}} \prod \limits_{m\in \mathcal{B}} {W\left(y_m\left|x'_m\right.\right)}}\\
&\leq\sum \limits_{y_1^N \in \mathcal{Y}^N} \sum \limits_{u_1^{i-1} \in \mathcal{X}^{i-1}}\frac{1}{2^{N-1}}\\
&\cdot\sqrt {\sum \limits_{x_{\mathcal{B}^c}}   \prod \limits_{j\in \mathcal{B}^{c}} {W\left(y_j\left|x_j\right.\right)}
                  \sum \limits_{x_{\mathcal{B}}}      \prod \limits_{m\in \mathcal{B}} {W\left(y_m\left|x_m\right.\right)} }\\
&\cdot\sqrt {\sum \limits_{x'_{\mathcal{B}^c}}  \prod \limits_{j\in \mathcal{B}^{c}} {W\left(y_j\left|x'_j\right.\right)}
                  \sum \limits_{x'_{\mathcal{B}}}     \prod \limits_{m\in \mathcal{B}} {W\left(y_m\left|x'_m\right.\right)} }\\
&=Z\left({W}_N^{(i)}\right).
\end{aligned}
\end{equation}

\subsection{Proof of Lemma \ref{lemma3}}
Recall that the value of the punctured bit in the C1 mode is known by the decoder. Apparently, puncturing the code bit $x_2$ is a good selection because this bit is only involved one source bit $u_2$. Hence, in order to ensure that the bit $x_2$ is punctured and has a fixed value to the decoder, as shown in Fig. \ref{fig_two_channel}(c), the puncturing table should be $\mathscr{T}_2=(1,0)$.

Let $\mathbb{L}(y_1^2)=\ln\frac{\widetilde{W}_2^{(1)}\left(y_1^2\left|0\right.\right)}{\widetilde{W}_2^{(1)}\left(y_1^2\left|1\right.\right)}$ and $\mathbb{L}(y_2)=\ln \frac{\mathbb{W}\left(y_2\left|0\right.\right)}{\mathbb{W}\left(y_2\left|1\right.\right)}=+\infty$ denote the LLRs of the source bits $u_1$ and  $u_2$ ($=0$) respectively. Considering the check node constraint, we have
\begin{equation}
\tanh\left(\frac{\mathbb{L}(y_1^2)}{2}\right)=\tanh\left(\frac{\mathbb{L}(y_1)}{2}\right)\cdot 1,
\end{equation}
where $\tanh(\cdot)$ is the hyperbolic tangent function. Therefore, we can conclude that $Z\left(\widetilde{W}_2^{(1)}\right)=Z(W)=Z_0$.

Under the C1 mode, due to $\forall y_2, \mathbb{W}\left(y_2\left|0\right.\right)=1,\mathbb{W}\left(y_2\left|1\right.\right)=0$, for $\forall y_1^2$, the transition probabilities of polarized channel $\widetilde{W}_2^{(2)}$ can be written by
\begin{equation}
\left\{
\begin{aligned}
&\widetilde{W}_2^{(2)}\left(y_1^2,u_1\left|0\right.\right)=\frac{1}{2}W(y_1|u_1)\mathbb{W}(y_2|0)=\frac{1}{2}W(y_1|u_1)\\
&\widetilde{W}_2^{(2)}\left(y_1^2,u_1\left|1\right.\right)=\frac{1}{2}W(y_1|u_1\oplus1)\mathbb{W}(y_2|1)=0
\end{aligned}
\right..
\end{equation}
Thus we have $\mathbb{L}\left(y_1^2,u_1\right)=\ln\frac{\widetilde{W}_2^{(2)}\left(y_1^2,u_1\left|0\right.\right)}{\widetilde{W}_2^{(2)}\left(y_1^2,u_1\left|1\right.\right)} =\mathbb{L}(y_2)=+\infty$, which means that $Z\left(\widetilde{W}_2^{(2)}\right)=0$.

\end{document}